\theoremstyle{definition}
\newtheorem{definition}{Definition}
\theoremstyle{plain}
\newtheorem{theorem}[definition]{Theorem}
\newtheorem{proposition}[definition]{Proposition}
\newtheorem{lemma}[definition]{Lemma}
\theoremstyle{definition}
\newcommand{\K}{\mathbb{K}}
\newcommand{\NN}{\operatorname{NN}}
\newcommand{\NT}{\operatorname{NT}}
\newcommand{\TT}{\operatorname{TT}}
\newcommand{\TTT}{\operatorname{TTT}}
\newcommand{\NNNT}{\operatorname{3NT}}
\newcommand{\NNN}{\operatorname{NNN}}
\newcommand{\Neg}{\operatorname{Neg}}
\newcommand{\col}{\operatorname{Col}}
\newcommand{\klein}{\mathbb{Z}_2 \times \mathbb{Z}_2}
\newcommand{\pentagon}{\mathbf{P}}
\newcommand{\Dp}{\mathbf{dP}}
\newcommand{\Tp}{\mathbf{tP}}
\newcommand{\Tc}{\mathbf{tC}}
\newcommand{\qT}{\mathbf{qT}}
\newcommand{\dyad}{\mathbf{D}}
\newcommand{\triad}{\mathbf{T}}
\def\dotp{\,.\,}
\let\phi\varphi
\def\phisum{\phi_*}
\begin{document}

\title{Morphology of small snarks}

\author{
	Ján Mazák, Jozef Rajník, Martin Škoviera
	\\[3mm]
	\\{\tt \{mazak, rajnik, skoviera\}@dcs.fmph.uniba.sk}
	\\[5mm]
	Comenius University, Mlynská dolina, 842 48 Bratislava\\
}

\maketitle

\begin{abstract}
The aim of this paper is to classify all snarks up to order
$36$ and explain the reasons of their uncolourability. The
crucial part of our approach is a computer-assisted structural analysis of
cyclically $5$-connected critical snarks,
which is justified by
the fact that every other snark can be constructed from them by
a series of simple operations
while preserving uncolourability.
Our results reveal that most
of the analysed snarks are built up from pieces of the Petersen
graph and can be naturally distributed into a small number of
classes having the same reason for uncolourability. This sheds
new light on the structure of all small snarks. Based on our
analysis, we generalise certain individual snarks to infinite
families and identify a rich family of cyclically $5$-connected
critical snarks.
\end{abstract}

\bigskip

\section{Introduction}

In this paper we attempt to provide insight into the structure
of snarks of small order. The ultimate aim of our endeavour is,
of course, to contribute to understanding the nature of
snarks in general. Snarks are --- in essence --- connected cubic
graphs whose edges cannot be properly coloured with three
colours. In recent years, snarks have been attracting
considerable attention, mainly  because this family might
contain counterexamples to several profound and long-standing
conjectures such as the cycle double cover conjecture, the
5-flow conjecture, or the Berge-Fulkerson conjecture
\cite{Fulk, J85,J88}. Understanding the structure of snarks is
therefore crucial for proving or disproving any of these
conjectures.

Snarks are very difficult to find since almost all cubic graphs
are hamiltonian and hence $3$-edge-colourable
\cite{rob-wormald}. This asymptotic behaviour manifests itself
already at very small orders: on $26$ vertices, there exist
fewer than three cyclically $4$-edge-connected snarks per
million of cubic graphs, and this ratio seems to exponentially
decrease with increasing order. On the other hand, deciding
whether a cubic graph is $3$-edge-colourable or not is an
NP-complete problem \cite{Holyer}, which means that the class
of snarks is still sufficiently rich: for instance, there are
more than $400$ millions of cyclically $4$-edge-connected
snarks on at most $36$ vertices \cite{BGHM, GMS1}.

Snarks are also difficult to understand because the reasons
that force the absence of $3$-edge-colourings in cubic graphs
are in general unknown. Numerous constructions of snarks have
been presented by various authors (see \cite{AKLM, Isaacs,
Snarks, Oddness, Macajova, Superposition,
SteffenClassifications} for some examples), often aiming at
proving the existence of snarks possessing certain special
properties. Despite this effort, very little is known about the
intrinsic structure of snarks. For example, it is not known
whether small edge-cuts in snarks are unavoidable: a conjecture
of Jaeger and Swart \cite{Jaeger} states that every snark
contains a cycle-separating edge-cut comprising at most six
edges, but this conjecture has been open for almost 40 years
without any visible progress. Thus no general approach to
classification of the entire family of snarks seems to be
within the reach.

The evidence drawn from the published lists of small snarks
(see, for example, \cite{BGHM, BS, CMRS}) reveals that most of
them are composed of several common construction components,
typically obtained from the Petersen graph. This phenomenon has
neither been formalised nor thoroughly studied yet, and it is
unclear whether anything similar holds for snarks of large
order. Although we believe that these questions are well worth
of investigation, it does not seem that the currently available
methods are powerful enough to attack them in full generality.
Neither probabilistic nor constructive methods provide us with
good insight into the structure of snarks. In particular, no
uniform random model for snarks is currently known, which
leaves us without strong theoretical tools for studying the
typical behaviour of snarks. The NP-completeness of the problem
of edge-colourability \cite{Holyer} also indicates that no
quick progress is likely. This is why we focus on what we have
at hand, which is the complete list of all cyclically
$4$-edge-connected snarks on up to $36$ vertices, recently
produced by Brinkmann et al. \cite{BGHM} and completed in
\cite{GMS1}, by employing exhaustive computer search.

Suppose, for a moment, that we would like to move a step
further and produce a list of all snarks on $38$ vertices. An
obvious way to do it would be to generate all cubic graphs on
$38$ vertices and discard those that are colourable.
Unfortunately, there are just too many of them in comparison
with the computing power presently available for research, even
if we restrict ourselves to those with cyclic connectivity at
least $4$ and girth at least~$5$. Such an approach is therefore
very unlikely to work. In this situation it may be useful to
realise that a vast majority of known snarks contain an edge
whose removal followed by the suppression of the resulting
$2$-valent vertices again leaves a snark. Conversely, most
known snarks arise from a smaller snark by choosing a suitable
pair of edges, subdividing each of them with one additional
vertex, and connecting the resulting $2$-valent vertices with a
new edge; this operation is called an I-extension. The meaning
of ``suitable'' in order for the operation of I-extension to be
feasible is easy to explain (see
Proposition~\ref{prop:edge-extension}), which suggests that
this approach might be promising. The hard part of the problem,
however, are the snarks that cannot be obtained by a series of
I-extensions from a smaller snark in such a way that each
member of the extension series is a snark. Such snarks indeed
exist and have been already studied \cite{Chladny,
Chladny-Skoviera-Factorisations, Nedela,
SteffenClassifications}, in fact, they have been rediscovered
several times \cite{dSPL, dVNR, S-full}: they are known as
\textit{critical snarks} and are characterised by the property
that for each edge the inverse of I-extension produces a
colourable graph.

Critical snarks are known to be cyclically $4$-edge-connected
with girth at least $5$ \cite[Proposition 4.8]{Nedela}, and
thus can be regarded as ``proper snarks'' by the usual
standards. A decomposition theory developed by Chladný and
Škoviera in \cite{Chladny-Skoviera-Factorisations} suggests
that critical snarks that possess a cycle-separating
$4$-edge-cut can be explained via the reversal of the
well-known operation of dot product. This is especially true
for bicritical snarks, an important subclass of critical
snarks: every bicritical snark containing a cycle-separating
$4$-edge-cut admits a decomposition into a unique collection of
cyclically $5$-edge-connected bicritical snarks, and
conversely, it can be reconstructed from them by a repeated
application of dot product
\cite[Theorem~C]{Chladny-Skoviera-Factorisations}.

By contrast, a decomposition process along $5$-edge-cuts is
much more complicated \cite{Cameron, Nedela, Preissmann}.
Moreover, it only works in one way and cannot be easily used
for constructing snarks. Indeed, as argued in
\cite[p.~273]{Nedela}, the original snark cannot be
reconstructed from decomposition factors by using any
collection of well-defined simple operations. Thus, from this
point of view, the most fundamental and at the same time most
enigmatic snarks are those that are critical and cyclically
$5$-edge-connected, also known under the term \textit{$5$-simple}
\cite{Nedela}. Since all cyclically $4$-edge-connected snarks
can be obtained from them by applying I-extensions and dot
products, it is natural to start the structural analysis of
snarks by investigating \textit{cyclically $5$-edge-connected
critical snarks}.

Our aim in this paper is, therefore, to analyse and classify
\textit{all} $5$-simple snarks of
order not exceeding 36. The list of such snarks is known and
contains exactly $2110$ graphs. We have extracted it from the
complete list of all nontrivial snarks of order up 36 which was
produced by Brinkmann et al. \cite{BGHM} in 2013. The list of
all critical snarks of order up to 36 was previously compiled
by Carneiro et al. \cite{CdSM}, however, those with cyclic
connectivity at least $5$ have not been singled out.

The method which we apply to the analysis of snarks is similar
to what biologists have been doing for centuries in morphology.
By discovering and investigating more and more species they
have been constantly improving and refining the hierarchy of
organisms, making it more complete with every new species
examined. Our aim is to describe the structure of each
$5$-simple snark of order not
exceeding 36 in a manner comprehensible to a human, with
uncolourability readily verifiable by hand, as opposed to having a
proof that relies on an exhaustive enumeration carried out by a
computer. It transpires that uncolourability of small snarks
can be conveniently explained in terms of multipoles (subgraphs
with dangling edges) and their interconnections. Snarks with
similar structure of multipoles, similar interconnections, and
similar reasons for uncolourability are collected into families.
As we analyse larger and larger snarks, the set of multipoles
with known colouring properties grows. Whenever we encounter a
snark whose uncolourability cannot be fully explained by
previously discovered multipoles, we analyse it, extend the
list of known multipoles, and employ it in the further
analysis.

The advantage of this approach is evident from the fact that
each of the families resulting from our analysis can be easily
turned into an infinite class of snarks in a straightforward
manner: it suffices to substitute the basic construction
blocks, usually taken from the Petersen graph, with those
obtained from larger snarks in a similar manner.

Results of our analysis are summarised in
Section~\ref{sec:results}. Amongst the snarks up to $30$
vertices we have not discovered any example that could not be
easily explained in terms of multipoles arising from the
Petersen graph. A new phenomenon arises on $32$ vertices with
class denoted by 32-A and schematically depicted in
Fig.~\ref{fig:32}. The 7-pole $M_{11}$ contained in these
snarks is perhaps the most interesting specimen of all
--- it is the only multipole that we have not been able to
generalise.

On $34$ and $36$ vertices, there are several families of
interest not described before; all of them contain several
disjoint copies of 5-poles each consisting of a pair of
5-cycles sharing two edges. These 5-poles can be obtained from
the Petersen graph by removing a path of length $2$ and are
known as negators.

Perhaps the most aesthetically pleasing family is illustrated
in Fig.~\ref{fig:34-6c}. The smallest snark from this family can
also be regarded as a cleverly arranged tangle of $6$-cycles
complemented by a 6-pole arising from the Petersen graph by
removing a $6$-cycle; it is displayed in Fig.~\ref{fig:34-6b}.

\begin{figure}
    \centering
    \includegraphics{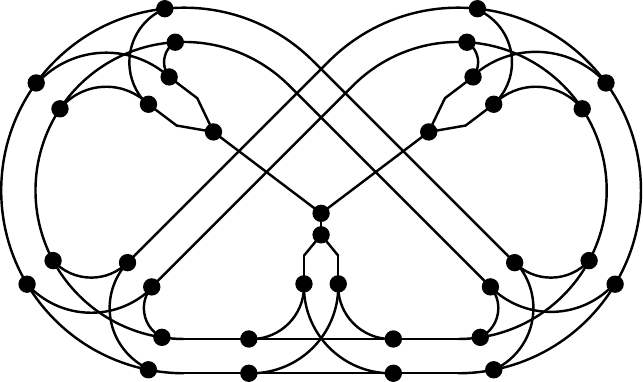}
    \caption{A remarkable critical snark of order 34}
    \label{fig:34-6b}
\end{figure}

Despite our success in explaining the uncolourability of the
investigated graphs, we have not accomplished that much in
terms of truly understanding their criticality. In particular,
we can easily generalise any of the described families of
snarks into an infinite class of new snarks, but we know very
little about which of these new snarks are critical or even
bicritical. All our achievements in this direction are
collected in Section~\ref{sec:newfamily}. On the one hand, we
observe that construction components for critical snarks need
not come from critical snarks. On the other hand, we give
examples of constructions for which obvious necessary
conditions, such as criticality of the constituting multipoles,
are not sufficient to ensure criticality of the resulting
snarks. In spite of that, we are able to describe a new rich
infinite family of bicritical snarks. This family demonstrates
that many (perhaps even all) of the families described in this
paper can be turned into infinite families of bicritical snarks
by imposing additional restrictions on the construction blocks.

Finally, we employ the results of our analysis to give a
negative answer to a question posed by Chladný and Škoviera in
\cite[Problem~5.7]{Chladny-Skoviera-Factorisations} about pairs
of edges essential for a dot product of bicritical snarks to be
bicritical. According to the theory developed therein, an
essential pair of edges must be non-removable (that is, its
removal leaves a colourable graph).  However, it was left open
whether there exists a non-removable pair of edges that is not
essential. An example of such a pair is provided in
Section~\ref{sec:s36}.

We conclude this section with a short list of definitions. We
assume that the reader has the basic knowledge related to graph
colourings and flows. For more information on this matter we
recommend  consulting \cite{J88}.

Our graphs are finite and may contain
parallel edges and loops. A connected $2$-regular graph is
called a \textit{cycle}. A cubic graph $G$ is said to be
\textit{cyclically $k$-connected} (or \textit{cyclically
$k$-edge-connected}, to be more precise) if no set of fewer
than $k$ edges separates two cycles of $G$ from each other.
The \textit{cyclic connectivity} of $G$ is the smallest integer
$k$ such that $G$ is cyclically $k$-connected.
A connected uncolourable cubic graph is
a \textit{snark}. The graph which consists of two vertices
joined by an edge and has a loop attached at each vertex is
called the \textit{dumbbell graph}; it is denoted by~$Db$. Note
that $Db$ is a snark according to our definition. It is well
known that the smallest $2$-connected snark is the Petersen
graph, denoted here by $Pg$. Cyclically $4$-connected snarks
with girth at least $5$ will be called \emph{nontrivial} and
the remaining ones will be \emph{trivial}.

\section{Multipoles and their Tait colourings}

Snarks are often described as combinations of graph-like
structures called multipoles. In contrast to graphs, multipoles
are permitted to contain dangling edges or even isolated edges,
see e. g. \cite{Fiol, Macajova, Nedela}. Formally, a
\emph{multipole} is a pair $M = (V(M), E(M))$, where $V(M)$ is
a set of vertices and $E(M)$ is a set of edges. Every edge $e
\in E(M)$ has two \emph{ends} which may, or may not, be
incident with a vertex. An edge whose ends are incident with
two distinct vertices is called a \emph{link}. If only one end
of an edge is incident with a vertex, then the edge is a
\emph{dangling edge}, and if neither end of an edge is incident
with a vertex, it is called an \emph{isolated edge}. A
\emph{semiedge} is an end of an edge that is incident with no
vertex. The set of all semiedges of a multipole $M$ is denoted
by $S(M)$. A multipole with $k$~semiedges is called a
\emph{$k$-pole}. The \emph{order} $|M|$ of a multipole $M$ is
the number of its vertices. In this paper, we will only
consider \textit{cubic multipoles}, that is, multipoles where
each vertex is incident with three edge ends.

It is often convenient to partition the set $S(M)$ into
pairwise disjoint sets $S_1, \dots, S_n$ called
\emph{connectors}.
Although semiedges in a connector are unordered, sometimes it is 
useful to endow a connector with a linear order. 
Such a connector $S = (e_1, \dots, e_k)$ is called 
an \emph{ordered connector}.
A multipole $M$ with $n$ connectors $S_1, S_2, \dots, S_n$ such
that $|S_i| = c_i$ for $i \in \{1,2,\dots,n\}$ is denoted by
$M(S_1,S_2,\dots,S_n)$ and called a
\emph{$(c_1,c_2,\dots,c_n)$-pole}. If a connector $S$ contains
only one semiedge $s$, we will usually only write $s$ in place
of $\{s\}$.

The \emph{junction} of semiedges $e$ and $f$ is an operation
under which the end-vertices of the corresponding dangling edges are
joined to produce a new edge. 
The junction of two connectors $S = \{e_1,e_2,\dots,e_k\}$ and $T = \{f_1,f_2,\dots,f_k\}$ of the same size $k$ consists of performing $k$ individual junctions of $e_i$ and
$f_i$ for $i~\in~\{1,2,\dots,k\}$.
If $S$ and $T$ are not ordered, prior to performing the junction we can enumerate their semiedges in an arbitrary order. Although different orderings may lead to several different multipoles, in a vast majority of cases our results do not depend on the order in which the junctions of semiedges of two connectors have been performed. 
A junction of two $(c_1, \dots, c_n)$-poles $M(S_1, \dots, S_n)$ and $N(T_1, \dots, T_n)$ consists of $n$ individual junctions of $S_i$ and $T_i$ for each $i~\in~\{1,2,\dots,n\}$. Hence, performing a junction of $M$ and $N$ requires
to join the corresponding connectors $S_i$ and $T_i$, while the order of semiedges in $S_i$ and $T_i$ may be arbitrary, unless the connectors are ordered.

A natural approach to explaining the uncolourability of a snark
is to split it into a set of multipoles and to study
interactions between their colourings. The aim is to show that
any combination of colourings of the constituting multipoles
gives rise to a confict within the snark. 
By an \textit{edge colouring} of a multipole $M$ we mean a
mapping $\phi\colon E(M)\to X$ from the edge set of $M$ to a
certain set $X$ of colours. An edge colouring naturally induces
a colouring of edge ends. If the ends of all edges incident
with any vertex $v$ of $M$ receive distinct colours, the
colouring is said to be \textit{proper}. If $|X|=k$, the
colouring is a \textit{$k$-edge-colouring}.

Since multipoles in this paper are all cubic, colourings
considered in this paper will mostly  be proper
$3$-edge-colourings. This permits us to abbreviate the term
``proper $3$-edge-colouring'' to just ``colouring''.  We
therefore say that a multipole $M$ is \emph{colourable}
whenever it has a colouring; otherwise $M$ is
\emph{uncolourable}.

A convenient set of colours for the study of snarks is provided
by the set $\K$ of nonzero elements of the Klein four-group
$\klein$. An edge colouring using $\K$ as the colour set is
often termed a \textit{Tait colouring} because the usage of
such colourings can be traced back to Tait's paper \cite{Tait}
on the Four-Colour Problem. One of the advantages of using $\K$
is that we can use addition in the group $\klein$ to express
the properties of a colouring. Indeed, a colouring of a
multipole $M$ is proper if and only if for every vertex $v$ of
$M$ the three colours meeting at $v$ sum to $0$ in $\klein$. In
other words, a proper $3$-edge-colouring of a cubic multipole
is a nowhere-zero $\klein$-flow and vice versa. Now, if we
regard a colouring $\phi$ of a multipole $M$ as a flow, we can
use the Kirchhoff law to conclude that $\sum_{e\in S(M)}
\phi(e) = 0$. This fact has a useful consequence commonly known
as the parity lemma, first proved by Tutte  \cite{Descartes} under
the pseudonym of Blanche Descartes.

\begin{lemma}[Parity lemma]\label{lemma:parity}
	Let $M$ be a $k$-pole and let $k_1$, $k_2$, and $k_3$ be the
	numbers of semiedges of colour $(0,1)$, $(1,0)$, and
	$(1,1)$, respectively. Then
	$$ k_1 \equiv k_2 \equiv k_3 \equiv k \pmod{2}.$$
\end{lemma}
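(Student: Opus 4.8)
The plan is to use the observation already made in the text: a proper 3-edge-colouring of a cubic multipole $M$ with colours in $\K$ is the same thing as a nowhere-zero $\klein$-flow, so by Kirchhoff's law the sum of the colours over all semiedges vanishes, $\sum_{e \in S(M)} \phi(e) = 0$ in $\klein$. I would make this the starting point and then extract the parity information coordinate-wise.

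First I would write each colour in $\K = \{(0,1),(1,0),(1,1)\}$ as an element of $\klein = \Z_2 \times \Z_2$ and split the flow identity into its two coordinates. In the first coordinate, the colours $(1,0)$ and $(1,1)$ contribute $1$ while $(0,1)$ contributes $0$; hence $0 = \sum_{e \in S(M)} \phi(e)_1 = k_2 + k_3$ in $\Z_2$, i.e. $k_2 \equiv k_3 \pmod 2$. Symmetrically, reading the second coordinate gives $0 = k_1 + k_3$, so $k_1 \equiv k_3 \pmod 2$. Combining, $k_1 \equiv k_2 \equiv k_3 \pmod 2$.

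It remains to identify this common residue with $k \bmod 2$, where $k = |S(M)|$. Since every semiedge receives exactly one of the three colours, $k = k_1 + k_2 + k_3$. Using $k_1 \equiv k_2 \equiv k_3 \pmod 2$, write $k_1 + k_2 + k_3 \equiv 3k_3 \equiv k_3 \pmod 2$, so $k \equiv k_3$, and likewise $k \equiv k_1 \equiv k_2 \pmod 2$. This gives the full chain of congruences.

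If one prefers not to invoke the flow formulation as a black box, the only honest alternative is a direct counting argument: sum, over all vertices $v$ of $M$, the number of edge-ends at $v$ of a fixed colour; each link is counted twice and each semiedge or dangling-end once, and the per-vertex count of any colour is $0$ or... actually this needs the properness in the form "the three colours at a vertex are exactly $(0,1),(1,0),(1,1)$", which is precisely the $\klein$-flow condition again. So the genuinely substantive input is the coordinate-wise reading of Kirchhoff's law; there is no real obstacle beyond being careful that semiedges, not just dangling edges, are what enter the sum, and that the colour set $\K$ omits $0$ so that "three distinct colours summing to $0$" is equivalent to "three distinct nonzero elements of $\klein$", which is automatic since the three nonzero elements of the Klein four-group always sum to $0$.
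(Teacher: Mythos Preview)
Your argument is correct and follows exactly the route the paper itself indicates: the paragraph preceding Lemma~\ref{lemma:parity} derives $\sum_{e\in S(M)}\phi(e)=0$ from the Kirchhoff law and presents the parity lemma as its consequence, without spelling out the coordinate-wise extraction you provide. Your proof simply fills in that final step, so there is no divergence in approach.
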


\section{Reducibility and criticality of snarks}
\label{sec:reducibility}

We have already indicated why critical snarks are a natural
place to start investigation of the intrinsic structure of
snarks. In this section we discuss this matter in a greater
detail and explain important relations between criticality of
snarks, their nontriviality, and reducibility.

A natural way of approaching the idea of nontriviality of
snarks is by asking whether the snark in question contains
vertices that do not contribute to its uncolourability.
According to the parity lemma, removing just one vertex from a
snark leaves an uncolourable graph, so one has to remove at
least two vertices to make the graph colourable. A pair of
distinct vertices $\{u,v\}$ of a snark $G$ will be called
\textit{removable} if $G-\{u,v\}$ is not $3$-edge-colourable;
otherwise it will be called \textit{non-removable}. A snark $G$
is \textit{critical} if every pair of adjacent vertices in $S$
is non-removable; it is called \textit{bicritical} if every
pair of distinct vertices in $G$ is non-removable.

The concept of criticality of snarks can also be
developed by interpreting $3$-edge-colourings of a cubic graph
as nowhere-zero flows. This direction has been explored by
several authors, see for example \cite{CdSM, dSL, dSPL, dVNR,
FMS-survey}. Following da Silva et al. \cite{dSL} we define a
graph to be \textit{$k$-flow-edge-critical} if it does not
admit a nowhere-zero $k$-flow but the graph obtained by the
contraction of any edge does. We further define a graph to be
\textit{$k$-flow-vertex-critical} if it does not admit a
nowhere-zero $k$-flow but the graph obtained by the
identification of any two distinct vertices does. These
definitions apply to snarks with $k=4$ since no snark admits a
nowhere-zero $4$-flow. If we take into account the fact that
contracting an edge has the same effect on the existence of a
nowhere-zero flow as identifying its end-vertices,
$4$-flow-edge-critical snarks and $4$-flow-vertex-critical
snarks are natural counterparts of critical and bicritical
snarks, respectively. Nevertheless, it has been only recently
shown \cite{MS:crit} that in spite of different formal
definitions flow-critical snarks are exactly the same as
critical snarks.

\begin{theorem}\label{thm:flow-crit}
A snark is $4$-flow-edge-critical if and only if it is
critical. A snark is $4$-flow-vertex-critical if and only if it
is bicritical.
\end{theorem}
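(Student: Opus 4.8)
The plan is to reduce both halves of the statement to a single combinatorial lemma relating nowhere-zero flows on a vertex-identified graph to proper $3$-edge-colourings of the multipole obtained by deleting the two identified vertices, and then to apply that lemma to adjacent pairs for the first half and to all pairs of distinct vertices for the second.

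First I would recall the classical theorem of Tutte that a graph admits a nowhere-zero $4$-flow if and only if it admits a nowhere-zero $\klein$-flow; this lets us replace ``nowhere-zero $4$-flow'' by ``nowhere-zero $\klein$-flow'' throughout, which for a cubic multipole is the same thing as a proper $3$-edge-colouring. I would also use the observation already made above that contracting an edge $e=uv$ has the same effect on the existence of a nowhere-zero flow as identifying the vertices $u$ and $v$: writing $G_{uv}$ for the graph obtained from $G$ by identifying the distinct vertices $u$ and $v$, this means $G/e$ has a nowhere-zero $4$-flow precisely when $G_{uv}$ does.

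The core of the proof is the claim that, for any two distinct vertices $u,v$ of a cubic graph $G$, the graph $G_{uv}$ admits a nowhere-zero $\klein$-flow if and only if $G-\{u,v\}$ admits a proper $3$-edge-colouring. For the forward direction one restricts a nowhere-zero $\klein$-flow on $G_{uv}$ to the edges of $G-\{u,v\}$: since the flow is proper at every cubic vertex, deleting at each former neighbour of $u$ or $v$ the edge(s) leading to the merged vertex leaves edges of pairwise distinct colours, so the restriction is a proper $3$-edge-colouring. Conversely, given a proper $3$-edge-colouring of $G-\{u,v\}$, one regards $G-\{u,v\}$ as a multipole $M$ by reattaching as dangling edges the edges of $G$ formerly incident with $u$ or $v$; at each such former neighbour the colour of its dangling edge is forced to be the unique colour distinct from the two already present, so the colouring extends uniquely to a proper colouring of $M$. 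Now $M$ has four semiedges if $u\sim v$ and six otherwise, so the parity lemma (Lemma~\ref{lemma:parity}) guarantees that the colours of these semiedges sum to $0$ in $\klein$; since precisely these edges meet at the merged vertex of $G_{uv}$ (a possible loop coming from an edge $uv$ contributing $0$ and being irrelevant to the Kirchhoff condition), the extended colouring satisfies Kirchhoff everywhere and is a nowhere-zero $\klein$-flow.

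Granting the claim, the theorem is a matter of bookkeeping. A snark has no nowhere-zero $4$-flow by definition, so $G$ is $4$-flow-edge-critical exactly when $G/e$ has a nowhere-zero $4$-flow for every edge $e=uv$, which by the two reductions above is equivalent to $G-\{u,v\}$ being $3$-edge-colourable for every edge $uv$, i.e.\ to $G$ being critical; in the same way $G$ is $4$-flow-vertex-critical exactly when $G_{uv}$ has a nowhere-zero $4$-flow for every pair of distinct vertices $u,v$, which by the claim is equivalent to $G-\{u,v\}$ being $3$-edge-colourable for every such pair, i.e.\ to $G$ being bicritical. The only real work is in the claim, and within it the verification of the Kirchhoff condition at the high-degree merged vertex, which is exactly where the parity lemma does the work; the one point that needs care is the bookkeeping of edge multiplicities when $u$ and $v$ have common neighbours or when $G$ has parallel edges or loops, which is routine and can in any case be isolated from the main argument.
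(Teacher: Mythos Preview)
The paper does not actually supply a proof of Theorem~\ref{thm:flow-crit}; it quotes the result from~\cite{MS:crit} and moves on. So there is no in-paper argument to compare against, only your attempt to assess on its own merits.

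Your argument is essentially correct and is the natural one. The key lemma---that for distinct vertices $u,v$ of a cubic graph $G$ the identification $G_{uv}$ has a nowhere-zero $\klein$-flow if and only if $G-\{u,v\}$ is $3$-edge-colourable---is sound in both directions. In the forward direction the restriction works because every vertex of $G_{uv}$ other than the merged one is cubic, so the three incident edges carry the three distinct nonzero elements of $\klein$; deleting some of them leaves a proper partial colouring. In the backward direction the point you identify is exactly the right one: once the colouring of $G-\{u,v\}$ is extended to the cubic multipole $M$ with dangling edges (and any such extension is proper at every vertex of $M$), the Kirchhoff identity $\sum_{e\in S(M)}\phi(e)=0$ forces the flow condition at the merged vertex, while loops arising from edges $uv$ can be assigned any nonzero value. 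The Tutte equivalence between nowhere-zero $4$-flows and nowhere-zero $\klein$-flows, together with the already-noted fact that contracting $uv$ and identifying $u$ with $v$ are equivalent for flows, then gives both statements at once.

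Two small points worth tightening if you write this up in full. First, when $u$ and $v$ share a neighbour $w$, the extension of the colouring to the two dangling edges at $w$ is not unique; you should say explicitly that either choice works because the two colours sum to the same value, so the Kirchhoff check at the merged vertex is unaffected. Second, the definitions of ``critical'' and ``$4$-flow-edge-critical'' match up edge-by-edge only for non-loop edges (an edge $uv$ with $u\ne v$), and the paper's definition of a non-removable pair presupposes distinct vertices; you are implicitly using this, and it would be cleaner to say so rather than bundle it into ``routine bookkeeping''.
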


Consider a pair of adjacent vertices $u$ and $v$ forming an
edge $e$ of a cubic graph $G$, and let $G\sim e$ denote the
cubic graph homeomorphic to $G-e$. The operation that
transforms $G$ into $G\sim e$ is called an
\textit{edge reduction}, and  its reverse is called an
\textit{edge extension} or an I-\textit{extension}. To perform
an edge extension of a cubic graph $G'$ one picks in $G'$ two
edges $e_1$ and $e_2$ (not necessarily distinct), subdivides
each of them with a new vertex, and adds an edge $e$ joining
the two new vertices (if $e_1=e_2$, this results in a digon,
that is, a pair of parallel edges). The resulting graph is
denoted by $G'(e_1,e_2)$.

It turns out that removing a pair of adjacent vertices $u$ and
$v$ from a snark has the same effect on colourability as
reducing the edge $e$ joining them. This fact was first
observed in \cite{Nedela}.

\begin{proposition}\label{prop:edge-reduction}
Let $G$ be a snark and $e=uv$ an edge of $G$. Then $G\sim e$ is
$3$-edge-colourable if and only if $G-\{u,v\}$ is
$3$-edge-colourable.
\end{proposition}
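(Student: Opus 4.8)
The statement relates colourability of two cubic objects: $G \sim e$, obtained from $G-e$ by suppressing the two resulting $2$-valent vertices, and $G - \{u,v\}$, which is a $4$-pole once we regard the four edges formerly incident with $u$ or $v$ (other than $e$ itself) as dangling edges. The plan is to set up a bijective correspondence between proper $3$-edge-colourings of these two objects. Write $u$ for the end of $e$ incident (in $G$) with neighbours $a_1, a_2$ and $v$ for the end incident with neighbours $b_1, b_2$; so in $G - \{u,v\}$ the four semiedges are the stubs at $a_1, a_2, b_1, b_2$, while in $G \sim e$ the edge $e$ together with its incident edges is replaced by two links $a_1a_2$ and $b_1b_2$ (with the usual caveat about parallel edges or loops, which causes no trouble for colourability). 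I would phrase everything in terms of the Klein-group colours $\K$, so that properness at a vertex is the condition that the three incident colours sum to $0$.

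First I would argue the ``only if'' direction: given a colouring $\psi$ of $G \sim e$, the two links $a_1a_2$ and $b_1b_2$ receive colours $\alpha$ and $\beta$. I want to lift this to $G - \{u,v\}$ by declaring the stub at $a_1$ and the stub at $a_2$ both to have colour $\alpha$, and similarly $\beta$ at $b_1, b_2$, and keeping $\psi$ elsewhere. Properness is immediate at every vertex of $G - \{u,v\}$ except we must check it is a legitimate colouring of the $4$-pole — which it is, since a $4$-pole colouring imposes no global constraint beyond the parity lemma, and the parity lemma is automatically satisfied because the two $\alpha$'s and two $\beta$'s contribute an even count of each colour. Conversely, for the ``if'' direction, start from a colouring $\phi$ of $G - \{u,v\}$; let $\phi(a_1\text{-stub}) = \alpha_1$, etc. Here the key observation is the parity lemma applied to the $4$-pole $G - \{u,v\}$: the four semiedge colours satisfy $\alpha_1 + \alpha_2 + \beta_1 + \beta_2 = 0$ in $\klein$. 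I then want to reinsert the edge $e$ and the vertices $u,v$, colouring $e$ with some colour $\gamma$, the edges $ua_1, ua_2$ with $\alpha_1, \alpha_2$, and $vb_1, vb_2$ with $\beta_1, \beta_2$; properness at $u$ demands $\alpha_1 + \alpha_2 + \gamma = 0$ and at $v$ demands $\beta_1 + \beta_2 + \gamma = 0$. These two equations are simultaneously solvable for $\gamma \in \K$ precisely when $\alpha_1 + \alpha_2 = \beta_1 + \beta_2$ and this common value is nonzero — but $\alpha_1 + \alpha_2 = \beta_1 + \beta_2$ follows from the parity identity above, and if the common value were $0$ we would need $\gamma = 0 \notin \K$, forcing $\alpha_1 = \alpha_2$ and $\beta_1 = \beta_2$; in that degenerate case, though, we can instead obtain a colouring of $G \sim e$ directly by colouring the link $a_1a_2$ with $\alpha_1$ and $b_1b_2$ with $\beta_1$, so $G \sim e$ is colourable regardless.

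The main obstacle is exactly that degenerate case analysis at the end, i.e. disentangling when $G-e$ has an honest extension to $G$ versus when one must pass through $G \sim e$; this is where one has to be careful about the direction of the biconditional, since a colouring of $G - \{u,v\}$ need not extend to a colouring of $G$ itself (indeed $G$ is a snark), yet it always yields a colouring of $G \sim e$. I would organise the proof so that both directions funnel through the identity $\alpha_1 + \alpha_2 = \beta_1 + \beta_2$ supplied by the parity lemma (Lemma~\ref{lemma:parity}), applied to the $4$-pole on one side and to the single-vertex deletions implicit in $G \sim e$ on the other. Two routine remarks finish it: loops or parallel edges possibly created by the reduction do not affect $3$-edge-colourability of a cubic graph in the relevant way (a loop simply makes the graph uncolourable, matching the corresponding degenerate semiedge pattern), and the suppression of $2$-valent vertices commutes with colour assignment since a suppressed path of length two just transfers the colour of its two edges, which must agree.
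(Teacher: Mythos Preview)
The paper does not actually supply a proof of this proposition; it is stated with a reference to Nedela and \v Skoviera \cite{Nedela}, where the fact was first observed. So there is nothing to compare against on the paper's side.

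Your argument is correct in substance. The ``only if'' direction is clean. In the ``if'' direction your exposition is slightly roundabout: you set up the attempt to extend a colouring of $G-\{u,v\}$ to all of $G$ by choosing $\gamma=\alpha_1+\alpha_2$, and only afterwards remark that this must fail because $G$ is a snark. It would be cleaner to lead with that observation: since $G$ is uncolourable, \emph{every} colouring of the $4$-pole $G-\{u,v\}$ must have $\alpha_1+\alpha_2=0$ (equivalently $\alpha_1=\alpha_2$ and, by the parity lemma, $\beta_1=\beta_2$), because otherwise setting $\gamma=\alpha_1+\alpha_2\in\K$ would extend to a $3$-edge-colouring of $G$. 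Once that is said, the colouring of $G\sim e$ is immediate by assigning $\alpha_1$ to the link $a_1a_2$ and $\beta_1$ to the link $b_1b_2$. This removes the need to speak of a ``degenerate case'' at all, since it is the only case that ever arises.

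Your closing remark that suppression ``transfers the colour of its two edges, which must agree'' is fine in the direction you actually use it (from $G\sim e$ to $G-\{u,v\}$), but be careful not to phrase it as a statement about proper colourings of $G-e$: at a $2$-valent vertex the two incident edges receive \emph{different} colours in a proper colouring, so colourings of $G\sim e$ do not correspond to proper colourings of $G-e$.
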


Proposition~\ref{prop:edge-reduction} implies that a snark $G$
is critical if and only if \mbox{$G\sim e$} is colourable for
each edge $e$. Every noncritical snark thus contains an edge
whose reduction leaves a smaller snark. If the resulting snark
is still not critical, we can repeat the process and continue
until we eventually produce a critical snark. Reversing this
process shows that every snark can be constructed from a
critical snark by a series of edge extensions, with all
intermediate graphs being snarks that contain a subdivision of
the initial critical snark. All this tells us that critical
snarks can be regarded as basic building blocks of all snarks.

In order to make the extension process work, it is important to
know under what conditions an edge extension $G(e_1,e_2)$ of a
snark $G$ is again a snark. The answer requires one more
definition. A pair $\{e_1,e_2\}$ of edges of a snark $G$ is
said to be \textit{removable} if $G-\{e_1,e_2\}$ is
$3$-edge-colourable.

\begin{proposition}\label{prop:edge-extension}
Let $G$ be a snark, and let $e_1$ and $e_2$ be distinct edges
of $G$. Then the edge extension $G(e_1,e_2)$ of $G$ is a
snark if and only if the pair $\{e_1,e_2\}$ is removable.
\end{proposition}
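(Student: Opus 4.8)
The plan is to prove both directions by relating colourings of $G(e_1,e_2)$ to colourings of $G-\{e_1,e_2\}$ via the parity lemma, thinking of colourings as nowhere-zero $\klein$-flows. Write $G(e_1,e_2)$ as the graph obtained from $G$ by subdividing $e_1 = a_1b_1$ with a new vertex $x_1$, subdividing $e_2 = a_2b_2$ with a new vertex $x_2$, and adding the new edge $e = x_1x_2$. Note that $G(e_1,e_2)$ is connected and cubic, so it is a snark exactly when it is uncolourable.

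For the ``if'' direction, suppose $G(e_1,e_2)$ is colourable by some $\phi$, and consider the restriction of $\phi$ to the edges of $G-\{e_1,e_2\}$, which are precisely the edges of $G$ other than $e_1,e_2$ together with the four half-edges $a_1x_1, x_1b_1, a_2x_2, x_2b_2$ viewed now as dangling edges at $a_1,b_1,a_2,b_2$. The colour at $x_1$ forces $\phi(a_1x_1) = \phi(x_1b_1) + \phi(e)$ and similarly at $x_2$, so in $G - \{e_1,e_2\}$ the four dangling semiedges at $a_1,b_1$ and at $a_2,b_2$ are coloured consistently with a proper colouring of that $4$-pole; identifying the semiedge at $a_i$ with the one at $b_i$ would only be a problem if they had different colours, but in fact we do not need them identified --- $G-\{e_1,e_2\}$ already is the multipole carrying these dangling edges, and $\phi$ restricted to it is already proper at every vertex of $G$ different from the subdivision vertices. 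Hence $G-\{e_1,e_2\}$ is colourable, i.e.\ $\{e_1,e_2\}$ is removable.

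For the converse, suppose $\{e_1,e_2\}$ is removable, so $G-\{e_1,e_2\}$ has a colouring $\psi$. The four semiedges left behind at the endpoints of $e_1$ and $e_2$ receive colours $\psi(a_1), \psi(b_1), \psi(a_2), \psi(b_2)$ (abusing notation for the semiedge colours). We must extend $\psi$ to a colouring of $G(e_1,e_2)$: we need to colour the paths $a_1x_1b_1$, $a_2x_2b_2$, and the edge $e = x_1x_2$ so that everything is proper at $x_1$ and $x_2$. At $x_1$ we need $\phi(a_1x_1) + \phi(x_1b_1) + \phi(e) = 0$, and similarly at $x_2$; moreover on the $G$-side we must keep the colours at $a_i,b_i$ equal to what $\psi$ gave them, so $\phi(a_ix_i) = \psi(a_i)$ and $\phi(x_ib_i) = \psi(b_i)$. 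This forces $\phi(e) = \psi(a_1)+\psi(b_1) = \psi(a_2)+\psi(b_2)$, so the extension exists \emph{iff} $\psi(a_1)+\psi(b_1) = \psi(a_2)+\psi(b_2)$ and this common value is nonzero (the colouring must be nowhere zero on links). Here is where the parity lemma does the work: apply it to the $4$-pole $G-\{e_1,e_2\}$ with its four semiedges. By Lemma~\ref{lemma:parity} each colour class among these four semiedges has even size, so the multiset $\{\psi(a_1),\psi(b_1),\psi(a_2),\psi(b_2)\}$ consists either of two equal pairs of distinct colours, or of all four equal, or of one colour appearing four times --- in every case $\psi(a_1)+\psi(b_1)+\psi(a_2)+\psi(b_2) = 0$, hence $\psi(a_1)+\psi(b_1) = \psi(a_2)+\psi(b_2)$. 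It remains to rule out that this common value is $0$. If $\psi(a_1)+\psi(b_1) = 0$ then $\psi(a_1) = \psi(b_1)$; but in $G-\{e_1,e_2\}$ the semiedges at $a_1$ and $b_1$ are the two dangling edges created by deleting $e_1 = a_1b_1$, and one checks that re-inserting $e_1$ (which is legitimate since $G$ was a graph) would give a proper colouring of $G$ with $e_1$ coloured $\psi(a_1)=\psi(b_1)$ --- wait, that is fine for $e_1$, but then we would need $\psi(a_2)+\psi(b_2)=0$ too, giving a proper colouring of all of $G$, contradicting that $G$ is a snark. So the common value is nonzero, the extension exists, and $G(e_1,e_2)$ is colourable; contrapositively, $G(e_1,e_2)$ is a snark.

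I expect the main obstacle to be exactly the last nonzero argument: the parity lemma gives the sum condition essentially for free, but arguing that the common value $\psi(a_1)+\psi(b_1)$ cannot be $0$ requires genuinely using that $G$ itself is uncolourable, and the cleanest way to phrase it is: were it zero, we could colour $e_1$ by $\psi(a_1)$ and $e_2$ by $\psi(a_2)$ and recover a colouring of $G$. Care is needed about the degenerate possibility that $e_1$ and $e_2$ share an endpoint or that $G-\{e_1,e_2\}$ is disconnected, but since the proposition assumes $e_1 \neq e_2$ and the parity argument is local to the four semiedges, neither causes a real problem; I would handle the shared-endpoint case (which creates a $3$-pole component, odd, contradicting parity unless handled) by noting the parity lemma still applies componentwise.
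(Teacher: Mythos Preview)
Your underlying mathematics is correct and matches the paper's approach: both arguments establish that $G(e_1,e_2)$ is colourable if and only if $G-\{e_1,e_2\}$ is colourable, using restriction for one direction and the parity lemma together with the uncolourability of $G$ for the other. The paper phrases the easy direction as a containment argument ($G-\{e_1,e_2\}\subseteq G(e_1,e_2)$, so uncolourability of the former forces uncolourability of the latter), which is your restriction argument in contrapositive form; for the harder direction the paper argues, as you do, that the two half-edge pairs must receive distinct colours (else $G$ would be colourable) and then invokes parity to colour the new edge.

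However, your handling of the word ``removable'' is inverted, and this derails the write-up. You take removable to mean that $G-\{e_1,e_2\}$ is colourable, which is what the paper's definition literally says, but the paper's own proof, the analogous vertex-pair definition earlier in the same section, and every later use in the paper all treat a removable edge pair as one whose removal leaves an \emph{uncolourable} multipole; the edge-pair definition simply has a typo. Under your reading the proposition is false, which is why your direction labels do not line up and why the closing ``contrapositively, $G(e_1,e_2)$ is a snark'' is a non-sequitur: you have just shown that your notion of removable implies $G(e_1,e_2)$ is \emph{colourable}, the opposite of what the proposition asserts. With the intended meaning of removable, your two parts do establish both implications; you only need to fix the annotation ``i.e.\ $\{e_1,e_2\}$ is removable'' in the first part and drop the erroneous final clause in the second.
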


\begin{proof}
If $\{e_1,e_2\}$ is a removable pair of edges of $G$, then
$G-\{e_1,e_2\}$ is uncolourable. Since $G-\{e_1,e_2\}\subseteq
G(e_1,e_2)$, we conclude that so is $G(e_1,e_2)$. Thus
$G(e_1,e_2)$ is a snark.

Conversely, let $G(e_1,e_2)$ be a snark and suppose to the
contrary that $\{e_1,e_2\}$ is non-removable, that is,
$G-\{e_1,e_2\}$ is colourable. Let $e_1'$, $e_1''$ and $e_2'$,
$e_2''$ be the edges of $G(e_1,e_2)$ obtained by subdividing
$e_1$ and $e_2$, respectively. Every 3-edge-colouring $\phi$ of
$G-\{e_1,e_2\}$ forces at least one of the pairs
$\{e_1',e_1''\}$ and $\{e_2',e_2''\}$ to receive distinct
colours, say $\phi(e_1')=a$ and $\phi(e_1'')=b$, otherwise
$\phi$ would induce a colouring of $G$. By the parity lemma, the
other pair also receives colours $a$ and~$b$. Thus the edge of
$G(e_1,e_2)$ added across $e_1$ and $e_2$ can be coloured $a+b$
to produce a 3-edge-colouring of $G(e_1,e_2)$, which is a
contradiction.
\end{proof}

Another possibility to capture the notion of nontriviality of
snarks is to identify edge-cuts whose removal from a snark
produces an uncolourable component. The aim is to generalise
the well-known fact that snarks with short cycles and small
edge-cuts are just trivial modifications of smaller snarks. For
this purpose Nedela and \v Skoviera \cite{Nedela} proposed the
following definitions. Let $G$ be a snark which can be
expressed as a junction $M*N$ of two $k$-poles $M$ and~$N$ for
some $k\ge 0$. If one of $M$ and~$N$, say $M$, is uncolourable,
we can extend $M$ to a snark $\bar M$ of order not greater than
$|G|$ by adding to $M$ a small number of vertices and edges;
possibly $\bar M=G$. By creating the graph $\bar M$ we have
reduced a snark $G$ to a new snark which is called a
\emph{$k$-reduction} of $G$. (Note that an edge reduction
$G\sim e$ is a special case of a $4$-reduction.) A
$k$-reduction $\bar M$ of $G$ is \textit{proper} if $|\bar
M|<|G|$. If $G$ admits a proper $k$-reduction for some $k\ge
0$, the essence of uncolourability of the smaller snark is the
same as the one that can be found in $G$. A snark is
\textit{$k$-irreducible}, for $k\ge 1$, if it has no proper
$m$-reduction for any $m<k$. A snark is \textit{irreducible} if
it is $k$-irreducible for every $k>0$, that is, if it admits no
proper reductions at all. Observe that a $k$-irreducible snark
is also $r$-irreducible for every $r\leq k$.

The following theorem, proved in \cite{Nedela}, puts the
concept of criticality of snarks into the perspective of
various ranks of irreducibility. Among others, it tells us that
there are, surprisingly, only finitely many different degrees
of irreducibility, with bicritical snarks holding the highest
position.

\begin{theorem}\label{thm:k-irred}
Let $G$ be a snark. Then the following statements hold true.
\begin{itemize}
\item[{\rm(i)}] If $1\le k\le 4$, then $G$ is
	$k$-irreducible if and only if it is either cyclically
k-connected or the dumbbell graph.
\item[{\rm(ii)}] If $k\in\{5,6\}$, then $G$ is
	$k$-irreducible if and only if it is critical.
\item[{\rm(iii)}] If $k\ge 7$, then $G$ is $k$-irreducible
	if and only if it is bicritical.
\end{itemize}
\end{theorem}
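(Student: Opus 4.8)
The plan is to prove (i)--(iii) by verifying both implications in each, using three recurring ingredients. The first is the parity lemma. The second is that $\operatorname{Aut}(\klein)$ acts on the colour set $\K$ as the full symmetric group $S_3$; applying such an automorphism to a colouring of a cubic multipole gives another colouring, so a colourable $k$-pole with $k\le 3$ realises \emph{every} assignment of colours to its semiedges permitted by parity (a repeated colour for $k=2$, the three distinct colours for $k=3$). The third is a bookkeeping fact about completing an uncolourable $k$-pole $M$ to a snark $\bar M$: a connected $k$-pole is completed by joining its semiedges in pairs when $k$ is even (no new vertex) and by adjoining one vertex absorbing three semiedges when $k$ is odd, so $|\bar M|=|M|+c_k$ with $c_k\equiv k\pmod 2$ and $c_k\in\{0,1\}$; since $G\sim e$ is precisely the pair-completion of the $4$-pole $G-\{u,v\}$, this is consistent with $G\sim e$ being a $4$-reduction. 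I will also use the elementary remark that if $G=M*N$ with $M$ uncolourable and $uv$ is a link of $N$, then $G-\{u,v\}$ contains the internal part of $M$, which is uncolourable whenever $M$ is (a colouring of the internal part extends freely over the dangling edges of $M$); hence $\{u,v\}$ is then a removable adjacent pair.

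For the ``only if'' directions of (ii) and (iii) I argue by contraposition, and these are quick. If $G$ is not critical, Proposition~\ref{prop:edge-reduction} yields an edge $e$ with $G\sim e$ a snark, i.e.\ a proper $4$-reduction, so $G$ is not $5$- or $6$-irreducible. If $G$ is critical but not bicritical, a removable pair of distinct vertices must be non-adjacent, and taking $N$ to be the $6$-pole formed by the two claws at $u$ and $v$ and $M=G-\{u,v\}$ (uncolourable by hypothesis, after passing to an uncolourable component if necessary), the pair-completion of $M$ is a proper $6$-reduction, so $G$ is not $7$-irreducible; with the next step this gives (iii) for all $k\ge 7$, since $k$-irreducibility then implies $7$-irreducibility. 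For the ``if'' directions: a critical snark has girth at least $5$ and is cyclically $4$-connected by \cite[Proposition 4.8]{Nedela}, so in any expression $G=M*N$ with $M$ uncolourable the $k$-pole $N$ has no loops and, by the remark above, no links, hence is a disjoint union of $t$ claws and $s$ isolated edges with $3t+2s=k$ and $|N|=t$, giving $|\bar M|=|M|+c_k$ against $|G|=|M|+t$; one checks $t\le c_k$ whenever $k\le 5$, so no proper reduction of rank at most $5$ exists and $G$ is $6$-irreducible (hence $5$-irreducible). For bicritical snarks the same argument applies with non-adjacent removable pairs also forbidden, so $N$ additionally cannot contain two disjoint claws and therefore has at most one vertex, leaving no proper reduction of any rank; thus a bicritical snark is irreducible.

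For (i) the case $k=1$ is trivial, every snark being connected and admitting no proper $0$-reduction. For the ``if'' direction when $2\le k\le 4$: the dumbbell $Db$ has as its only nontrivial edge-cut the central bridge, whose two sides each carry a single vertex, so the associated completion is not proper and $Db$ is $4$-irreducible; and if $G$ is cyclically $k$-connected, no edge-cut of size $m<k$ separates two cycles, so in $G=M*N$ with $M$ uncolourable (hence containing a cycle) the part $N$ is acyclic, an acyclic cubic $m$-pole having at most $m-2$ vertices, whence the same order count again forbids a proper completion. For the ``only if'' direction I take a minimum cycle-separating edge-cut $S$, $|S|=m<k\le 4$; minimality forces every edge of $S$ to be incident with a fixed component containing one of the separated cycles, and then forces $G-S$ to have exactly two components $M,N$, both connected and each containing a cycle. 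Since $m\le 3$, parity fixes the colour multiset on the connector and the $S_3$-action makes every admissible colouring attainable, so colourability of both $M$ and $N$ would let me match colours across $S$ and colour $G$; hence one part, say $M$, is uncolourable, and minimality also gives that $M$ has exactly $m$ semiedges. Finally $N$ contains a cycle, hence has at least $m$ vertices, which beats $c_m$ in every case except $m=1$ with $|M|=|N|=1$ --- which is precisely $G=Db$, already excluded --- so $\bar M$ is a proper $m$-reduction and $G$ is not $k$-irreducible.

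The step I expect to be the main obstacle is the structural analysis of the minimum cycle-separating edge-cut: justifying rigorously that it presents $G$ as a junction $M*N$ of two \emph{connected} multipoles with no cut-edge internal to either side, together with the disciplined handling of the smallest cubic multipoles --- loops, digons, claws, isolated edges --- that drive all the order comparisons, while keeping the dumbbell graph in view as the single exception in (i). The colouring-transfer ingredients (parity, the $S_3$-action, and free extension over dangling edges) are robust and should not cause trouble.
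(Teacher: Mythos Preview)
The paper does not prove this theorem; it is quoted from Nedela and \v Skoviera \cite{Nedela}, so there is no in-paper proof to compare against. I will therefore assess your argument on its own.

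Your approach is essentially that of \cite{Nedela} and is largely sound. The structural core --- that in any expression $G=M*N$ with $M$ uncolourable and $G$ critical (respectively bicritical), the complementary multipole $N$ can contain no link (respectively at most one vertex) --- is exactly the right mechanism, and your parity bookkeeping is a clean way to close. In fact the lower bound $|\bar M|\ge |M|+c_k$ follows from the elementary parity $|M|\equiv k\pmod 2$ valid for every cubic $k$-pole, so the ``main obstacle'' you anticipate (a disconnected $M$) never actually bites: the bound holds regardless of connectivity.

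There is one genuine slip, in part~(i), ``only if'', at $m=1$. You assert that $|N|\ge m$ fails to strictly beat $c_m$ only when $|M|=|N|=1$, which you identify with $G=Db$. But $|N|=1$ alone does not force $|M|=1$: a snark can have a bridge to a pendant loop-vertex with the other side $M$ of arbitrary order, and then $|\bar M|=|M|+1=|G|$ is not proper. The fix is immediate: when $|N|=1$ the side $N$ is a loop-vertex and hence itself uncolourable, so one completes $N$ rather than $M$, obtaining $\bar N\cong Db$ of order $2<|G|$, a proper $1$-reduction for every $G\ne Db$. With this patch your argument goes through.
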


Theorem~\ref{thm:k-irred} implies that irreducible snarks
coincide with bicritical ones, and that critical snarks are
just one step away from being irreducible. Critical snarks that
are not bicritical, called \textit{strictly critical}, appear
to be very rare. This can be observed already among snarks of
small order: there are exactly 55172 critical snarks of order
not exceeding 36, but only 846 of them are strictly critical,
just slightly over 1.5 percent \cite{BCGM, MS:crit}.

\medskip

Another important consequence of Theorem~\ref{thm:k-irred} is
that every critical snark is cyclically $4$-edge-connected and
has girth at least 5, see \cite[Proposition 4.1]{Nedela}. Thus
critical snarks are nontrivial by all generally accepted
standards.

\begin{table}[h]
	\centering
	\begin{tabular}{|c|c|c|c|c|}
		\hline
		Order & $\lambda_c = 4$ & $\lambda_c = 5$ & $\lambda_c = 6$ & Total \\
		\hline
		10 &    0 &    1 & 0 &    1\\
		18 &    2 &    0 & 0 &    2\\
		20 &    0 &    1 & 0 &    1\\
		22 &    0 &    2 & 0 &    2\\
		24 &    0 &    0 & 0 &    0\\
		26 &  103 &    8 & 0 &  111\\
		28 &   31 &    1 & 1 &   33\\
		30 &  104 &   11 & 0 &  115\\
		32 &   16 &   13 & 0 &   29\\
		34 &38827 & 1503 & 0 &40330\\
		36 &14063 &  568 & 1 &14548\\
		\hline
	\end{tabular}
	\caption{Numbers of critical snarks by
		connectivity}\label{tab:critical}
\end{table}

Table~\ref{tab:critical} indicates that among critical snarks
those with cyclic connectivity $4$ significantly prevail. It
transpires, however, that critical snarks whose cyclic
connectivity equals $4$ can be reasonably well understood
through the concept of a dot product. Given two snarks $G$ and
$H$, their \textit{dot product} $G\dotp H$ is defined as
follows (see \cite{AT, Isaacs}):  Choose two independent edges
$e=ab$ and $f=cd$ in $G$ and two adjacent vertices $u$ and $v$
in $H$. Let $a'$, $b'$ and $v$ be the neighbours of $u$, and
let $c'$, $d'$ and $u$ be the neighbours of $v$. Remove the
edges $e$ and $f$ from $G$ and the vertices $u$ and $v$ from
$H$. Finally, connect $a$ to $a'$, $b$ to $b'$, $c$ to $c'$,
and $d$ to $d'$. It is well known that $G\dotp H$ is indeed a
snark, and that it is cyclically $4$-edge-connected provided
that both $G$ and $H$ are (see \cite[Theorem~2]{AT}).

Note that the added edges $aa'$, $bb'$, $cc'$, and $dd'$ of
$G\dotp H$ form a cycle-separating $4$-edge-cut, called the
\textit{principal} $4$-edge-cut of $G\dotp H$. Thus the cyclic
connectivity of a dot product snark cannot exceed $4$. Various
authors observed that the converse holds as well: every snark
that contains a cycle-separating $4$-edge-cut can be expressed
as a dot product of two smaller snarks (see for example
\cite{Cameron, Goldberg}).

\begin{theorem}\label{thm:reverse_dp}
Every cycle-separating $4$-edge-cut $S$ in a snark $G$ gives
rise to a decomposition of $G$ into a dot product $G=G_1\dotp
G_2$ in such a way that the principal cut of $G_1\dotp G_2$
coincides with $S$. Moreover, if $G-S$ is $3$-edge-colourable,
then $G_1$ and $G_2$ are uniquely determined by $S$.
\end{theorem}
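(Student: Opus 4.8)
The plan is to start from a cycle-separating $4$-edge-cut $S = \{s_1, s_2, s_3, s_4\}$ in the snark $G$ and examine how $S$ partitions $V(G)$. First I would argue that $G-S$ has exactly two components, each of which contains a cycle: the definition of a cycle-separating cut gives at least two components containing cycles, and if there were a third component, or an acyclic (hence tree-like) component, a counting argument on degrees together with the cubicity of $G$ would force a contradiction with $|S|=4$ (each component meets $S$ in at least one edge, and a component incident with exactly one or two edges of $S$ would contain a vertex of degree $<3$ or a bridge, again contradicting that $G$ is a snark, which is cyclically $4$-edge-connected by the standing conventions). Hence we get two $4$-poles $M_1$ and $M_2$ with $G = M_1 * M_2$, where each $M_i$ has exactly four semiedges, namely the four halves of the edges of $S$ lying in the corresponding side.

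Next I would convert each $4$-pole $M_i$ into a genuine cubic graph $G_i$ by adjoining a small gadget on two new vertices $u_i, v_i$: add the edge $u_iv_i$, make two of the four semiedges of $M_i$ incident with $u_i$ and the other two incident with $v_i$. This is precisely the reverse of the dot-product construction, and by matching up the pairing of semiedges on the two sides one checks that $G = G_1 \dotp G_2$ with principal cut $S$. The key point to verify here is that $G_1$ and $G_2$ are again snarks, i.e. uncolourable. Suppose $G_1$ were colourable; restricting a Tait colouring to $M_1$ gives a colouring $\phi_1$ of the $4$-pole $M_1$, and by the parity lemma the four semiedges of $M_1$ receive colours summing to $0$, so (again by the parity lemma applied to a $4$-pole) the colour pattern on the semiedges is, up to the $S_3$-action on $\K$, one of the two standard types: either all four semiedges get the same colour, or they split $2+2$ into two colours. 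In either case one can produce a matching colouring $\phi_2$ on $M_2$: because $G$ is uncolourable, no colouring of $M_1$ extends to a colouring of $M_2$ across $S$, yet the same parity restrictions on $M_2$'s semiedges leave only the "complementary" patterns; a short case analysis shows $M_2$ must then be uncolourable under every admissible boundary pattern, whence $G_2$ is a snark. By symmetry the same holds if $G_2$ is assumed colourable. (Here one uses that the added $u_iv_i$-gadget can always be Tait-coloured on top of any parity-admissible pattern on the semiedges of $M_i$.) The main obstacle I anticipate is precisely this colourability bookkeeping: one must be careful that the two ways of pairing the four semiedges at $u_i, v_i$ do not secretly create a digon or a short cycle that would make $G_i$ fail to be a snark for trivial reasons, and that the correspondence between boundary patterns on $M_1$ and on $M_2$ is set up so that "$G$ uncolourable" transfers to "both $G_i$ uncolourable" rather than merely to "at least one $G_i$ uncolourable".

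Finally, for the uniqueness statement under the hypothesis that $G-S$ is $3$-edge-colourable, I would argue as follows. A colouring of $G-S$ induces colourings $\phi_1$ of $M_1$ and $\phi_2$ of $M_2$ whose boundary patterns on the four semiedges are identical (they come from restricting one global colouring). By the parity lemma this common pattern is either the monochromatic one or a $2+2$ split; the first is impossible because then the semiedges on each side could be joined in pairs of equal colour, contradicting that $M_i$ sits inside a snark with $S$ a \emph{minimal} cycle-separating cut — more precisely, a monochromatic boundary would let us re-route and $3$-edge-colour all of $G$. Hence the boundary pattern is a $2+2$ split, and this split determines \emph{which} pairs of semiedges of $M_i$ carry equal colours, hence determines how to attach the gadget vertices $u_i, v_i$ (namely, each of $u_i$ and $v_i$ absorbs one same-coloured pair), and therefore determines $G_1$ and $G_2$ uniquely. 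What remains is to check that any other choice of pairing would either contradict the assumed colourability of $G-S$ or would not reproduce $S$ as the principal cut; this follows by comparing boundary colour patterns. The delicate spot here, as above, is ruling out the monochromatic pattern and making sure the $2+2$ split is forced rather than merely possible — this is where the hypothesis "$G-S$ colourable" does the real work, and it is the step I would write out most carefully.
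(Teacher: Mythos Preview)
The paper does not supply a proof of this theorem; it is stated with references to Cameron--Chetwynd--Watkins and Goldberg, so there is nothing in the paper to compare against and I assess your proposal on its own. Your construction misreads the dot product, which is asymmetric: from $G_1$ one removes two independent \emph{edges}, from $G_2$ one removes two adjacent \emph{vertices}. Reversing it, exactly one of the two $4$-poles must be completed by joining two pairs of semiedges (adding no new vertices), while the other receives the two-vertex gadget. You instead add two new vertices to \emph{both} sides, giving $|G_1|+|G_2|=|G|+4$; but $|G_1\dotp G_2|=|G_1|+|G_2|-2$ forces $|G_1|+|G_2|=|G|+2$, so your graphs can never satisfy $G=G_1\dotp G_2$. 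The substantive content of the theorem is precisely deciding which side gets which completion, and with which pairing of semiedges, so that both $G_i$ come out uncolourable; this is governed by the boundary colour-types of $M_1$ and $M_2$, not by free choice.

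Your uniqueness argument also fails at its starting point. You claim a colouring of $G-S$ yields colourings of $M_1$ and $M_2$ with \emph{identical} boundary patterns ``from restricting one global colouring'', but $G-S$ is disconnected (the four cut-edges are deleted), so the two sides are coloured independently and no boundary colours are assigned at all; moreover, if matching boundary colourings existed they would glue to a colouring of the snark $G$. What actually happens when both $M_i$ are colourable is that their sets of achievable boundary types (among $aaaa$ and the three $2{+}2$ splits) are disjoint, and a Kempe-chain argument forces exactly one side to be \emph{isochromatic} --- every colouring has $\phi(e_1)=\phi(e_2)$ and $\phi(e_3)=\phi(e_4)$ for a uniquely determined pairing --- while the other is \emph{heterochromatic}. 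The isochromatic side is the one that must receive the two new vertices, the pairing is the one witnessing isochromaticity, and this is what determines $G_1$ and $G_2$ uniquely.
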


The previous theorem raises the following natural question:
What can be said about the decomposition $G=G_1\dotp G_2$ when
$G$ is critical or bicritical? The following three theorems,
proved in \cite{Chladny-Skoviera-Factorisations}, provide
answers.

\begin{theorem}\label{thm:A}
Let $G$ and $H$ be snarks different from the dumbbell graph.
Then $G\dotp H$ is critical if and only if $H$ is critical, $G$
is nearly critical, and the pair of edges of $G$ involved in
this dot product is essential in $G$.
\end{theorem}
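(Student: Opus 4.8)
\medskip
\noindent\emph{Proof strategy.} The plan is to use Proposition~\ref{prop:edge-reduction}: $G\dotp H$ is critical exactly when $(G\dotp H)\sim g$ is colourable for every edge $g$, equivalently when $G\dotp H-\{x,y\}$ is colourable for the two ends $x,y$ of each edge $g$. Let $S=\{aa',bb',cc',dd'\}$ be the principal $4$-edge-cut; it splits $G\dotp H$ into the $4$-pole $M_G=G-\{e,f\}$, whose semiedges at $a,b$ come from $e$ and those at $c,d$ from $f$, and the $4$-pole $M_H=H-\{u,v\}$, whose semiedges at $a',b'$ lie at $u$ and those at $c',d'$ at $v$. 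The engine of the whole argument is the rigidity statement behind Isaacs' observation that a dot product of snarks is a snark: by the parity lemma each Tait colour occurs an even number of times on the four cut-edges, so in any colouring the restriction to $S$ is either constant or splits the edges into two monochromatic pairs; since $G$ is a snark, no colouring of $M_G$ is constant or pairs $\{aa',bb'\}$ against $\{cc',dd'\}$ (either would close up to a colouring of $G$), hence every colouring of $M_G$ is \emph{parallel}, i.e. $\phi(aa')\neq\phi(bb')$ and $\phi(cc')\neq\phi(dd')$; dually every colouring of $M_H$ is \emph{series}, i.e. $\phi(aa')=\phi(bb')$ and $\phi(cc')=\phi(dd')$. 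Parity then forces $\{\phi(cc'),\phi(dd')\}=\{\phi(aa'),\phi(bb')\}$, so a parallel colouring is of one of two \emph{types}, according to whether $\phi(aa')=\phi(cc')$ or $\phi(aa')=\phi(dd')$, and the type produced on one side of $S$ must be produced on the other. (That the two sides can never agree is precisely why $G\dotp H$ is uncolourable.) The exclusion of the dumbbell graph is there to keep this picture from degenerating.

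\medskip
I would then split $E(G\dotp H)$ into three classes and read off, in each, what colourability of $G\dotp H-\{x,y\}$ demands. If $g$ is an edge of $H$ with no end at $u$ or $v$, then $M_G$ is untouched, so a colouring of $G\dotp H-\{x,y\}$ is a colouring of $M_G$ together with a colouring of $M_H-\{x,y\}$ agreeing on $S$; the latter may now be parallel, and (re-inserting $u$ and $v$, with compatibility supplied by the parity lemma) $M_H-\{x,y\}$ has a parallel colouring precisely when $H\sim g$ is colourable, whose type must then be one realised by $M_G$. Symmetrically, if $g$ is an edge of $G$ different from $e$ and $f$, one needs $M_G-\{x,y\}$ to possess a series colouring on $S$, and re-inserting $e$ and $f$ shows this happens iff $G\sim g$ is colourable, to be matched against a series colouring of $M_H$ (which exists by non-removability of $\{u,v\}$, i.e. criticality of $H$ at the edge $uv$). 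Running over all edges of these two classes, the accumulated requirements are exactly that $H\sim g$ be colourable for every edge $g$ of $H$ not incident with $u$ or $v$, and that $G\sim g$ be colourable for every edge $g$ of $G$ other than $e$ and $f$ — the first being part of criticality of $H$, the second amounting to $G$ being \emph{nearly critical} with respect to $\{e,f\}$ in the sense of \cite{Chladny-Skoviera-Factorisations}; it is precisely because $e,f$ and the edges of $H$ at $u,v$ vanish in $G\dotp H$ that $G$ is only required to be ``nearly'' critical.

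\medskip
The third class consists of the four cut-edges, and this is where the real work lies. Deleting, say, $a$ and $a'$ removes $e$ from the $G$-side and the edge $ua'$ from the $H$-side at once, leaving a junction of the $5$-poles $M_G-a$ and $M_H-a'$ along the three surviving cut-edges $\{bb',cc',dd'\}$; a colouring is now constant or injective on this $3$-cut, and one must show the two sides realise a common such pattern. The $H$-side conditions obtained from the four cut-edges complete, together with those of the second class, the criticality of $H$ (they account for the edges of $H$ incident with $u$ and $v$). The $G$-side conditions say that $M_G$ must realise \emph{both} types $b$ and $c$ — equivalently, that the two ``switched'' closings of $M_G$, joining $a$ to $c$ and $b$ to $d$, respectively $a$ to $d$ and $b$ to $c$, are both colourable — and that these survive removal of any one of $a,b,c,d$; bundled with non-removability this is exactly what it means for $\{e,f\}$ to be an \emph{essential} pair. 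With the three classes in hand one checks both implications: for ``$\Leftarrow$'' each $G\dotp H-\{x,y\}$ is coloured by taking the colouring guaranteed by criticality of $H$, near-criticality of $G$, or essentiality of $\{e,f\}$ on one side of $S$ and extending it across $S$, the parity lemma making the two partial colourings compatible; for ``$\Rightarrow$'' one specialises criticality of $G\dotp H$ to the three edge classes and reverses the reductions above.

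\medskip
The hard part will be the cut-edge case: a single vertex-pair deletion in $G\dotp H$ constrains both factors simultaneously, and the delicate point is that the correct definition of ``essential'' is exactly the one that decouples the $G$-constraints from the $H$-constraints along the now $3$-element cut, so one has to verify both that essentiality is forced by criticality of $G\dotp H$ and that, combined with criticality of $H$ near $u$ and $v$, it suffices to colour $G\dotp H-\{a,a'\}$ and its three analogues. A pervasive secondary issue is bookkeeping of the fixed identification $a\leftrightarrow a'$, $b\leftrightarrow b'$, $c\leftrightarrow c'$, $d\leftrightarrow d'$ coming from the dot product: one must track which type, $b$ or $c$, is requested on each side so that the two $4$-poles (and later the two $3$-poles) are compared consistently — and it is here that demanding both types of $M_G$ (essentiality) rather than merely a colouring of $M_G$ (non-removability) is indispensable.
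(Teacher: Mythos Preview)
The paper does not prove Theorem~\ref{thm:A}; it merely quotes it from \cite{Chladny-Skoviera-Factorisations} (see the sentence immediately preceding Theorems~\ref{thm:A}--\ref{thm:C}). There is therefore no in-paper proof to compare your proposal against. For what it is worth, your strategy --- classify the edges of $G\dotp H$ by their position relative to the principal $4$-cut, and for each class translate colourability of $(G\dotp H)\sim g$ into a condition on the appropriate factor via the parallel/series dichotomy forced by the parity lemma --- is the natural one and is essentially how the result is established in the cited source.

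One substantive point to flag in your sketch: the condition you extract from the cut-edge case (that $M_G$ realise both parallel ``types'', and that this persist after deleting any one of $a,b,c,d$) is not literally the definition of \emph{essential} recorded in this paper (non-removability of $\{e,f\}$ together with colourability of $G-\{e,f\}$ after suppressing each of its four $2$-valent vertices). You assert that the two formulations coincide, but you do not argue it, and the identification is not entirely automatic: suppressing, say, $a$ in $G-\{e,f\}$ joins the two neighbours of $a$ other than $b$ and leaves a graph with three $2$-valent vertices $b,c,d$, which is a different object from your ``switched closings'' of the $4$-pole $M_G$. Bridging your cut-edge analysis to the stated definition is exactly where the work in \cite{Chladny-Skoviera-Factorisations} lies, so if you flesh this out you should expect to spend most of the effort there.
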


A \textit{nearly critical snark} is one where every pair of
adjacent vertices is non-removable except possibly the pairs of
endvertices of the edges $e=ab$ and $f=cd$ of $G$ involved in
the dot product. Being \textit{essential} is a rather technical
local property which will be defined and discussed in
Section~\ref{sec:s36}.

For dot products of bicritical snarks we only have a partial
result, nevertheless, one of fundamental importance. Its
essence is the fact that the class of bicritical snarks is
closed under decompositions into a dot product.

\begin{theorem}\label{thm:B}
Let $G$ and $H$ be snarks different from the dumbbell graph. If
$G\dotp H$ is bicritical, then both $G$ and $H$ are bicritical.
Moreover, the pair of edges of $G$ involved in this dot product
is essential in $G$.
\end{theorem}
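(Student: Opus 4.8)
The plan is to prove Theorem~\ref{thm:B} by contradiction, exploiting the structure of the principal $4$-edge-cut together with Proposition~\ref{prop:edge-reduction} and the parity lemma. Suppose $G \dotp H$ is bicritical but, say, $G$ is not. Then $G$ has a non-removable pair of vertices $\{x,y\}$, i.e.\ $G - \{x,y\}$ admits a colouring $\phi$. The idea is to ``transport'' this colouring across the principal $4$-edge-cut $S = \{aa', bb', cc', dd'\}$ of $G \dotp H$ to obtain a colouring of $(G\dotp H) - \{x,y\}$, contradicting $4$-flow-vertex-criticality of $G\dotp H$ (via Theorem~\ref{thm:flow-crit}, or directly via the definition of bicriticality). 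To do this I first need to understand how colourings of the two sides of $S$ interact: removing $S$ from $G\dotp H$ splits it into a $4$-pole $\widetilde G$ (coming from $G$ with the edges $e=ab$, $f=cd$ deleted and the four half-edges $a,b,c,d$ dangling) and a $4$-pole $\widetilde H$ (coming from $H$ with $u,v$ removed, leaving the four dangling edges towards $a',b',c',d'$). A colouring of $G\dotp H$ is exactly a pair of colourings of $\widetilde G$ and $\widetilde H$ that agree on the four semiedges; the same holds after deleting vertices that lie entirely inside one of the two pieces.

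The key structural step is to analyse the possible colour patterns on the semiedges of $\widetilde G$ and $\widetilde H$. Since $x,y$ and the edge $e=xy$ correspond (under the dot-product construction) to a pair of vertices lying in the $G$-part, I work with $\widetilde G - \{x,y\}$, which inherits the colouring $\phi$; reading off the colours $\phi$ assigns to the four semiedges $a,b,c,d$ gives a Klein-four pattern $(\alpha,\beta,\gamma,\delta)$ with $\alpha+\beta+\gamma+\delta = 0$ by the parity lemma (Kirchhoff's law for the flow). On the other side, since $H$ is itself a snark, $\widetilde H = H - \{u,v\}$ is \emph{not} obviously colourable — but here is where the hypothesis that $G\dotp H$ is a snark together with the reverse-dot-product machinery of Theorem~\ref{thm:reverse_dp} enters: because $G\dotp H$ is a snark, no colouring of $\widetilde H$ matches \emph{any} colouring of $\widetilde G$; yet the set of semiedge-patterns realisable by $\widetilde G$ (the full $4$-pole, not the punctured one) is restricted, and comparing it with the patterns of $\widetilde H$ forces $\widetilde H$ to realise a pattern that, combined with $\phi$ on $\widetilde G - \{x,y\}$, still yields a valid global colouring. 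Concretely, the four edges of $S$ partition into the two pairs $\{aa', bb'\}$ and $\{cc', dd'\}$ coming from the two deleted edges $e,f$ of $G$; the parity lemma applied to the natural $(2,2)$-pole structure shows each pair is ``balanced'' (its two colours are either equal or together use all required colours), and this two-pair rigidity is exactly what lets the colourings be glued.

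The main obstacle I anticipate is the ``matching'' step: showing that whatever semiedge-pattern $\phi$ produces on $\widetilde G - \{x,y\}$ can actually be met by some colouring of $\widetilde H$. This is not automatic, since $\widetilde H$ need not be colourable under \emph{every} boundary condition. The resolution should come from a case analysis of colouring states of a $4$-pole with its two designated $2$-element connectors: a $4$-pole that arises from a snark by deleting two adjacent vertices has a colouring for a boundary pattern iff that pattern is not the ``forbidden'' one dictated by the parity lemma being satisfiable trivially — and one shows that the pattern coming from $\phi$ is never the forbidden pattern, because if it were, then $G$ itself (reconstructed by reconnecting $e$ and $f$, i.e.\ taking $G \sim e$ or the appropriate identification) would be colourable, contradicting that $G$ is a snark. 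Once the matching is established, gluing $\phi$ on $\widetilde G - \{x,y\}$ with the matching colouring of $\widetilde H$ produces a colouring of $(G\dotp H) - \{x,y\}$ (the two removed vertices lying in the $G$-side), contradicting bicriticality of $G\dotp H$; symmetry handles the case where the bad pair lies in the $H$-side. Finally, the ``essential'' clause follows because Theorem~\ref{thm:A}, applied to $G\dotp H$ which (being bicritical) is in particular critical, already packages the statement that the pair of edges of $G$ used in the dot product is essential in $G$ — so this last assertion is immediate from the critical case once bicriticality of $G$ and $H$ is in hand.
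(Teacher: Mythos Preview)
First, a point of context: this paper does not actually prove Theorem~\ref{thm:B}; it is quoted from \cite{Chladny-Skoviera-Factorisations}, so there is no in-paper proof to compare your proposal against.

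That said, your proposal rests on an inverted reading of the definitions, and this makes the entire strategy collapse. You write that if $G$ is not bicritical then ``$G$ has a non-removable pair of vertices $\{x,y\}$, i.e.\ $G-\{x,y\}$ admits a colouring $\phi$'', and you then aim to transport $\phi$ across the cut to colour $(G\dotp H)-\{x,y\}$, ``contradicting $4$-flow-vertex-criticality of $G\dotp H$''. Both steps are backwards. By the paper's definitions, a pair $\{x,y\}$ is \emph{removable} when $G-\{x,y\}$ is \emph{un}colourable, and a snark is bicritical precisely when \emph{every} pair is non-removable, i.e.\ when $G-\{x,y\}$ is colourable for every pair. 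Hence ``$G$ not bicritical'' gives you a \emph{removable} pair, meaning $G-\{x,y\}$ has \emph{no} colouring to transport; and exhibiting a colouring of $(G\dotp H)-\{x,y\}$ would in any case \emph{confirm} bicriticality of $G\dotp H$, not contradict it.

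The argument has to flow in the opposite direction. Given an arbitrary pair $\{x,y\}\subseteq V(G)$ (noting that the dot product removes no vertices from $G$, so $x,y\in V(G\dotp H)$), bicriticality of $G\dotp H$ supplies a colouring $\psi$ of $(G\dotp H)-\{x,y\}$. Restricting $\psi$ to the $G$-side colours $\widetilde G-\{x,y\}$; the real work is to show that the four cut colours satisfy $\psi(aa')=\psi(bb')$ and $\psi(cc')=\psi(dd')$, so that the edges $e=ab$ and $f=cd$ can be reinstated and one obtains a colouring of $G-\{x,y\}$. If the colours paired up any other way, the parity lemma would let $\psi|_{\widetilde H}$ extend to a $3$-edge-colouring of $H$, contradicting that $H$ is a snark. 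The case of $H$ needs separate care, since the vertices $u,v$ deleted in forming $G\dotp H$ are not available in $G\dotp H$ and pairs involving them must be handled indirectly. Your final remark that the ``essential'' clause follows from Theorem~\ref{thm:A} is correct, but the main body of the argument needs to be rebuilt with the definitions the right way round.
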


Let $G$ be a bicritical snark that contains a cycle-separating
$4$-edge-cut. By Theorems~\ref{thm:reverse_dp} and~\ref{thm:B},
we can decompose $G$ into a dot product $G=G_1\dotp G_2$ of two
smaller bicritical snarks different from the dumbbell graph. If
one of these graphs again contains a cycle-separating
$4$-edge-cut, we can continue the process. After a finite
number of steps we eventually obtain a collection
$H_1,H_2,\dots, H_r$ of cyclically $5$-edge-connected
bicritical snarks which cannot be further decomposed. Note that
the decomposition process is not uniquely determined. The
edge-cuts used on the way to a final collection of cyclically
$5$-edge-connected bicritical snarks may intersect in a very
complicated fashion, and choosing one particular 4-edge-cut at
a certain stage of the decomposition process may exclude
certain cuts from a later use in the decomposition. This
concerns especially the cuts which do not exist in the original
snark but might be, and often are, created during the process.
It is therefore rather unexpected that the following theorem is
true.

\begin{theorem}\label{thm:C}
Every bicritical snark $G$ different from the dumbbell graph
can be decomposed into a collection $\{H_1,\dots,H_n\}$ of
cyclically $5$-connected bicritical snarks such that $G$ can be
reconstructed from them by repeated dot products. Moreover,
such a collection is unique up to isomorphism and ordering of
the factors.
\end{theorem}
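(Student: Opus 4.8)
The plan is to prove existence and uniqueness separately, with uniqueness being the substantial part.

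\medskip

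\textbf{Existence.} Let $G$ be a bicritical snark different from the dumbbell graph. If $G$ is cyclically $5$-connected, the collection $\{G\}$ works, so assume $G$ contains a cycle-separating $4$-edge-cut. By Theorem~\ref{thm:reverse_dp} we may write $G = G_1 \dotp G_2$ with the principal cut equal to this $4$-edge-cut; since $G$ is bicritical, $G - S$ is colourable (in fact bicriticality gives much more), so the decomposition is determined by $S$, and by Theorem~\ref{thm:B} both $G_1$ and $G_2$ are bicritical snarks different from the dumbbell graph. Each of $G_1$, $G_2$ has strictly fewer vertices than $G$ (this is where one checks that the dot-product factors are genuinely smaller — using that neither factor is $Db$ and a short vertex-count argument), so we may apply induction on $|G|$ to each factor, obtaining collections $\mathcal H_1$, $\mathcal H_2$ of cyclically $5$-connected bicritical snarks from which $G_1$, respectively $G_2$, can be reconstructed by repeated dot products. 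Then $\mathcal H_1 \cup \mathcal H_2$ (as a multiset) is a collection of cyclically $5$-connected bicritical snarks from which $G$ can be reconstructed: first reassemble $G_1$ and $G_2$, then take their dot product along the chosen edges. This establishes existence.

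\medskip

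\textbf{Uniqueness.} This is the hard part, precisely because, as noted in the paragraph preceding the theorem, the decomposition process is highly non-deterministic: different choices of $4$-edge-cut at each stage may create or destroy cuts available later. I would prove uniqueness by a confluence-type argument. Suppose $\{H_1,\dots,H_n\}$ and $\{H_1',\dots,H_m'\}$ are two final collections obtained from $G$. Pick a cycle-separating $4$-edge-cut $S$ in $G$ used in the first decomposition and a cut $S'$ used in the second. If $S = S'$, the factors $G_1 \dotp G_2$ are uniquely determined (Theorem~\ref{thm:reverse_dp}, using that $G-S$ is colourable), and we descend into $G_1$ and $G_2$ by induction. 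If $S \ne S'$, the key structural lemma to establish is that the two $4$-edge-cuts can be ``processed in either order'': whether $S'$ survives the decomposition along $S$ (possibly as a cut inside one of $G_1$, $G_2$, or as a cut that has migrated), and symmetrically for $S$ after decomposing along $S'$, and that in both cases one reaches the same intermediate collection. Concretely, one analyses how $S$ and $S'$ intersect — they may be disjoint, cross, or nest — and shows case by case that the resulting partial decompositions commute, yielding the same set of pieces. Bicriticality is used throughout to guarantee that every factor produced remains a bicritical snark distinct from $Db$ (via Theorem~\ref{thm:B}) and that all the relevant cuts leave colourable complements, so that Theorem~\ref{thm:reverse_dp} applies to pin down factors uniquely. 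Once this local commutativity is in hand, a standard diamond-lemma / Newman-type argument on the (terminating) rewriting system of decomposition steps gives that all maximal reduction sequences end at the same collection up to isomorphism and reordering.

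\medskip

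\textbf{Main obstacle.} The crux is the commutativity analysis for two distinct cycle-separating $4$-edge-cuts, especially the crossing case: one must show that decomposing along $S$ and then along the image of $S'$ produces the same triple (or larger collection) of factors as decomposing along $S'$ and then along the image of $S$. Handling the cuts that are \emph{created} during a decomposition but do not exist in $G$ — which the authors explicitly flag as the source of difficulty — requires a careful description of how a $4$-edge-cut of $G$ induces edge-cuts in the dot-product factors, and of how a cut of a factor lifts to (a modification of) a cut of $G$. I expect the bulk of the work, and most of the casework, to live here; once it is done, the descent and the abstract confluence wrap-up are routine.
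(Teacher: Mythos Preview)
The paper does not prove Theorem~\ref{thm:C}. It is quoted, together with Theorems~\ref{thm:A} and~\ref{thm:B}, from \cite{Chladny-Skoviera-Factorisations} (see the sentence ``The following three theorems, proved in \cite{Chladny-Skoviera-Factorisations}, provide answers''), and is used in the present paper only as background to justify focusing on cyclically $5$-connected critical snarks. So there is no proof in this paper to compare your proposal against.

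That said, your outline is a plausible high-level sketch of the strategy actually carried out in \cite{Chladny-Skoviera-Factorisations}: existence by induction via Theorems~\ref{thm:reverse_dp} and~\ref{thm:B}, and uniqueness by a confluence argument on the rewriting system of $4$-cut decompositions. You correctly identify the crux --- commutativity when two cycle-separating $4$-edge-cuts cross, and the handling of cuts created during the process --- but your proposal remains a plan rather than a proof: the actual case analysis (how two $4$-cuts in a bicritical snark can interact, why the induced cuts in the factors are again cycle-separating $4$-cuts, and why bicriticality survives each step in a way compatible with the bookkeeping) is precisely the substance of the cited 53-page paper and is not something that can be waved through. If you want to turn this into a genuine proof you will have to consult \cite{Chladny-Skoviera-Factorisations} directly.
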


The assumption of Theorem~\ref{thm:C} requiring a snark $G$ to
be bicritical cannot be relaxed. Indeed, in
\cite[Section~12]{Chladny-Skoviera-Factorisations} it is shown
that there exist critical snarks with substantially different
decompositions, and even with decompositions having different
numbers of factors.

\medskip

The results discussed in the preceding paragraphs can be
summarised as follows. For every snark $G$, there exists a
sequence of edge reductions
$$G_0=G, G_1=G_0\sim e_1, \ldots, G_r=G_{r-1}\sim e_{r}$$
such that each $G_i$ is a snark and the terminal member $G_r$
of the sequence is a critical snark.

If $G_r$ is bicritical, then it has a unique decomposition into
a collection of cyclically $5$-edge-connected critical snarks
(all of which are even bicritical). Therefore $G$ can be
reconstructed from a collection of cyclically
$5$-edge-connected critical snarks by using repeated dot
products and by a series of edge extensions of snarks.

If $G_r$ is strictly critical, the situation is more
complicated, because $G_r$ can possibly be expressed as a dot
product $H_1\dotp H_2$ where $H_2$ is critical but $H_1$ is
only nearly critical. Nevertheless, as shown in
\cite[Section~6]{Chladny-Skoviera-Factorisations}, strictly
critical snarks whose cyclic connectivity equals $4$ can still
be fairly well understood. This brings us back to cyclically
$5$-edge-connected critical snarks even in the latter case.

\section{Methods of analysis}
\label{sec:methods}

Exhaustive computer search performed by Brinkmann et al.
\cite{BGHM} reveals that there are very few $5$-simple snarks with girth greater than 5 on up
to 36 vertices. These can be put aside and discussed
separately. The remaining ones have girth~5, and hence contain
a 5-cycle. If $C$ is a $5$-cycle in a snark $G$ and $e=uv$ and
$e'=u'v'$ are two edges of $C$, then according to K\'aszonyi
\cite{Kasz1} (see also Bradley
\cite{Brad}) the number of $3$-edge-colourings
of $G\sim e$ is the same as that of $G\sim e'$. In particular,
the pairs $\{u,v\}$ and $\{u',v'\}$ are either both
non-removable or both removable. It follows that connected
components of the subgraph $K$ formed by the union of all
5-cycles of $G$ are subgraphs that play a fundamental
structural role in $G$. Any such component will be called a
\emph{$5$-cycle cluster} of $G$.

The smallest $5$-cycle clusters can be found in the Petersen
graph. We call them \emph{basic} or \emph{Petersen clusters}.
It is convenient to view them as cubic multipoles with natural
partition of their semiedges into connectors, which is
determined by the way in which they were constructed from the
Petersen graph. Clearly, every $5$-cycle cluster contained in a
cyclically $5$-connected snark must have at least five
semiedges and girth $5$. To start our analysis we have therefore 
identified all $5$-cycle clusters of order up to $10$ with at least 
five semiedges. There are seven such $5$-cycle clusters, six of 
which are Petersen clusters.

\begin{itemize}

\item The \emph{pentagon} $\pentagon$ is the smallest $5$-cycle
    cluster. It consists of a single cycle of length $5$
    together with $5$ dangling edges forming its unique
    connector (see Figure~\ref{fig:pentagon}). It can be 
    constructed from the Petersen
    graph by removing any $5$-cycle.

\item The \emph{dyad} $\dyad$ (or \textit{Petersen negator}) is a
    5-cycle cluster consisting of two $5$-cycles sharing a
    path of length~$2$.  It can be constructed by removing
    a path $uwv$ of length $2$ from the Petersen graph. The
    natural distribution of semiedges into connectors makes
    it a $(2,2,1)$-pole $\dyad(I,O,R)$, with 
    $2$-connectors $I$ and $O$ containing the dangling edges 
    formerly incident with $u$ and $v$ respectively, and the 
    $1$-connector $R$ containing the only dangling edge 
    formerly incident with $w$ (see Figure~\ref{fig:dyad}). 
    The dyad has $7$~vertices, $8$ edges, and $5$ semiedges.

\item The \emph{triad} $\triad$ is a $5$-cycle cluster formed by
    three $5$-cycles $C_1$, $C_2$, and $C_3$ such that
    $C_1$ and $C_2$ have exactly one edge in common while
    $C_3$ contains the common edge of $C_1$ and $C_2$ and
    one additional edge of each $C_1$ and $C_2$. It can be
    constructed from the Petersen graph by removing one
    vertex and severing an edge not incident with it. The
    natural distribution of semiedges into connectors turns
    it into a $(2,3)$-pole $\triad(B,C)$, shown in
    Figure~\ref{fig:triad}, where the connector $B$
    corresponds to the severed edge and the connector $C$
    corresponds to the removed vertex. The triad  has $9$
    vertices, $11$ edges and $5$ semiedges.

\item The \emph{quasitriad} $\qT$ is a $5$-cycle cluster consisting of three $5$-cycles $C_1$, $C_2$, and $C_3$ such that $C_1$ and $C_2$ share two edges and $C_3$ shares one edge with each $C_1$ and $C_2$. One can simply check that performing the junction of a pair of semiedges and connecting the remaining three semiedges to a new vertex always yields a cycle of length smaller than $5$. Hence, the quasitriad is not a Petersen cluster. Like triad, quasitriad has $9$ vertices, $11$ edges, and $5$ dangling edges. The quasitriad contains exactly one pair of dangling edges at distance $1$, which distinguishes it from the triad containing two such pairs.

\item The \emph{double pentagon} $\Dp$ is a 5-cycle cluster
    containing two $5$-cycles sharing an edge. It can be
    obtained from the Petersen graph by removing two
    adjacent vertices $u$ and $v$ and severing an edge $e$
    at distance $2$ from $uv$. The natural distribution of
    semiedges turns it into a $(2,2,2)$-pole $\Dp(A,B,C)$,
    shown in Figure~\ref{fig:dp}, where the connectors $A$
    and $B$ correspond to the vertices $u$ and $v$ respectively, and $C$
    corresponds to the edge $e$. The double pentagon has
    $8$ vertices, $9$ edges and $6$ dangling edges.

\item The \emph{triple pentagon} $\Tp$ is a $5$-cycle cluster consisting of three $5$-cycles, each pair having two edges in common. It can be constructed from the Petersen graph by severing three pairwise non-adjacent edges lying on a $6$-cycle in an alternating order, making the triple pentagon  a $(2,2,2)$-pole $\Tp(A, B, C)$ as depicted in Figure~\ref{fig:tc}. Note that the three severed edges cannot be extended to a perfect matching of the Petersen graph. The triple pentagon has $10$ vertices, $12$ edges and $6$ dangling edges.

\item The \emph{tricell} $\Tc$ is a $5$-cycle cluster containing three $5$-cycles $C_1$, $C_2$ and $C_3$, where $C_1$ and $C_2$ share one edge, $C_2$ and $C_3$ share two edges, and $C_1$ and $C_3$ are disjoint. Like the triple pentagon, it arises from the Petersen graph by severing three pairwise non-adjacent edges, however, in this case these edges do not lie on a $6$-cycle and can be extended to a perfect matching of the Petersen graph.
The $3$-cell has a natural representation as a $(2,2,2)$-pole $\Tc(A, B, C)$ as shown in Figure~\ref{fig:tc}. It has $10$ vertices, $12$ edges and $6$ dangling edges. In contrast to the triple pentagon, the tricell has one dangling edge whose distance to each other dangling edge is at least~$2$.
\end{itemize}

\begin{figure}
\centering
\begin{subfigure}[c]{0.2\textwidth}
	\centering
	\includegraphics[]{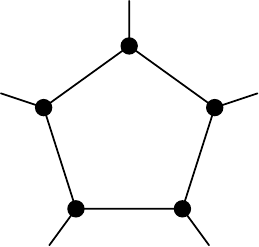}
	\caption{pentagon}
	\label{fig:pentagon}
\end{subfigure}
\begin{subfigure}[c]{0.2\textwidth}
	\centering
	\includegraphics[]{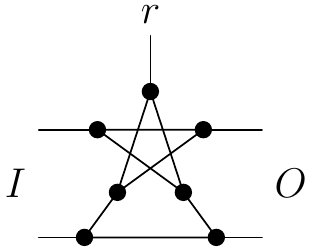}
	\caption{dyad}
	\label{fig:dyad}
\end{subfigure}
\begin{subfigure}[c]{0.25\textwidth}
	\centering
	\includegraphics[]{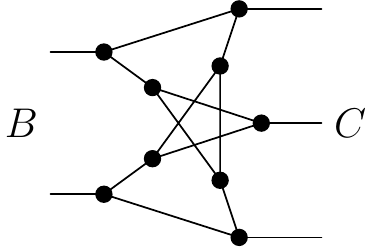}
	\caption{triad}
	\label{fig:triad}
\end{subfigure}
\begin{subfigure}[c]{0.3\textwidth}
	\centering
	\includegraphics[]{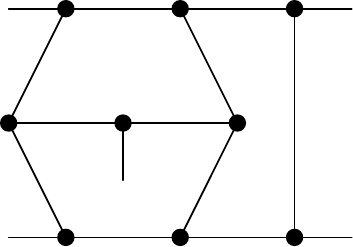}
	\caption{quasitriad}
	\label{fig:qtriad}
\end{subfigure}
\medskip

\begin{subfigure}{0.3\textwidth}
	\centering
	\includegraphics[]{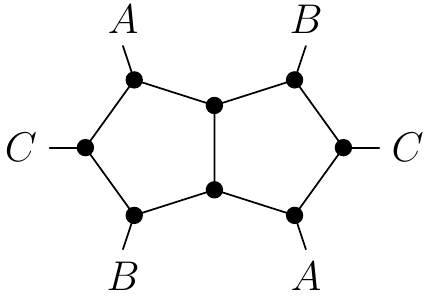}
	\caption{double pentagon}
	\label{fig:dp}
\end{subfigure}
\begin{subfigure}{0.3\textwidth}
\centering
\includegraphics[]{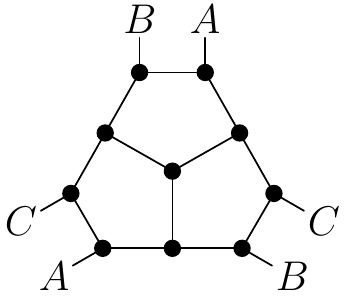}
\caption{triple pentagon}
\label{fig:tp}
\end{subfigure}
\begin{subfigure}{0.3\textwidth}
\centering
\includegraphics[]{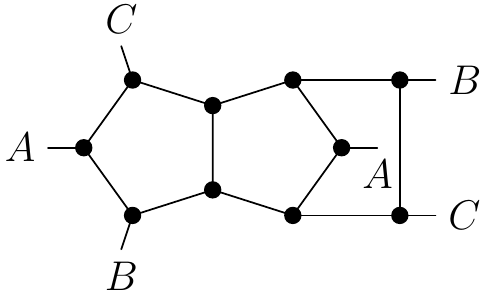}
\caption{tricell}
\label{fig:tc}
\end{subfigure}
\caption{All $5$-cycle clusters on up to $10$ vertices with girth $5$ and at least $5$ dangling edges.}
\label{fig:clusters}
\end{figure}

The key step of our analysis of $5$-simple snarks is the identification of $5$-cycle clusters.
They are easy to find by a limited-depth breadth-first search
in any given graph; we do it by employing a computer program.
Obviously, identifying the $5$-cycle clusters is not sufficient
since many snarks contain a significant number of vertices
belonging to no clusters. This is why for every given snark
we determine the structure of its $5$-cycle clusters, the
vertices not contained in clusters, and the connections between
them. This allows us to distribute almost all $5$-simple snarks into a small number of classes
depending on which basic clusters they contain and how they are
interconnected. Finally, for each class we theoretically
explain the reasons why their members are not colourable.

It is worth mentioning that quasitriads, triple pentagons, and tricells do not occur in the analysed snarks, and thus they are listed mainly for completeness. While the latter two clusters have not been observed only because the analysed critical snarks are too small, quasitriads cannot occur in critical snarks at all. Indeed, one can easily check that the quasitriad is a \emph{colour-closed $5$-pole} in the sense of \cite{Nedela}, which means that in every snark of the form  $G=\qT*M$ the complementary $5$-pole $M$ must be uncolourable. It follows that $G$ has a proper $5$-reduction, and by Theorem~\ref{thm:k-irred} such a snark cannot be critical.

\section{Commonly used multipoles}
\label{sec:multipoles}

In this section we develop tools for determining the reasons
why the analysed graphs fail to admit a $3$-edge-colouring.
Since most our graphs are built up from $5$-cycle clusters and
a number of additional vertices, the crucial point is to
analyse the colouring properties of the basic $5$-cycle
clusters and their combinations. Since our arguments only use
the fact that the Petersen graph is a snark, one can replace the
Petersen graph with a larger snark to construct a multipole in
a similar manner and with similar colouring properties as the
given basic $5$-cycle cluster. Regarding the basic $5$-cycle
clusters as special cases of these multipoles enables us to
generalise small snarks to infinite classes that cover almost
all $5$-simple snarks up to order $36$.

We start with several technical definitions that are necessary
for this purpose.

Consider a multipole $M(S_1,\dots,S_n)$ with connectors $S_1,\dots, S_n$ and a $3$-edge-colouring $\varphi$.
Define the \emph{flow through a connector $S_i$} of $M$ to be the value
$\phisum(S_i) = \sum_{e \in S_i} \phi(e)$.
Note that $\phisum(S_i)$
may happen to be $0$. A connector $S_i$ is called \emph{proper} if
$\phisum(S_i) \ne 0$ for every colouring $\phi$ of
$M$; it is called \emph{improper} if $\phisum(S_i) = 0$ for every
colouring $\phi$ of $M$. A multipole is called \emph{proper} if
all of its connectors are proper; similarly, it is \emph{improper} if all its connectors are improper.

Now let $M$ be an arbitrary $k$-pole and let $S(M)=\{e_1, e_2,
\dots, e_k\}$ be the set of its semiedges listed in a specific order given by increasing indices.  Although semiedges in connectors are generally not ordered, we fix this order only to avoid ambiguity.
For any $3$-edge-colouring $\varphi$ of $M$ and any subset $T = \{f_1, f_2, \dots, f_l\}$ of $S(M)$ we set
$\varphi(T)=(\phi(f_1),\phi(f_2), \dots, \phi(f_l))$, where the order of  semiedges in $T$  agrees with the chosen order of $S(M)$. We define the \emph{colouring set} of $M$ to be the set
$$
\col(M)=\{\phi(S(M)) \mid \phi \text{ is a Tait colouring of } M\}.
$$
Two $k$-poles $M$ and $N$ are
called \emph{colour-disjoint} if $\col(M)\cap\col(N)=\emptyset$. Note that if
$M$ and $N$ are colour-disjoint, then $M*N$ is a snark, and vice versa.

Many constructions of snarks can be conveniently described in
terms of multipole substitution. This is a very natural
concept and as such it has previously occurred in the
literature  under various disguises and different names, see
for example \cite{Fiol}.

Consider two $k$-poles $M_1$ and $M_2$.  We say that $M_1$ is
\emph{colour-contained} in $M_2$ if $\col(M_1) \subseteq
\col(M_2)$.
If $\col(M_1)=\col(M_2)$, we say that $M_1$ and $M_2$ are
called \emph{equivalent}. Now, let $G =M*N$ be a snark
expressed as a junction of two $k$-poles $M$ and $N$, and let
$M'$ be a $k$-pole colour-contained in $M$. We say that the
graph $G'=M'*N$ is obtained from the snark $G$ by a
\emph{substitution} of $M'$ for $M$. Observe that $G'$ is again
a snark: if $G'$ was colourable, then  any colouring of $G'$
could be modified to a colouring of $G$ in a straightforward
manner.

Substitution is a generic method of constructing new snarks
from old ones. It can be used in two ways: either to
produce smaller snarks with a more transparent structure, or to create larger snarks from snarks already known. Applying substitution requires having suitable pairs of multipoles. 
In the rest of this section we
describe several examples of such pairs which occur in $5$-simple
snarks. As we shall see, most of them arise from
generalisations of  the basic $5$-cycle clusters.

\subsection{Negators}

Let $G$ be a snark and let $uwv$ be a path of length two in
$G$. Removing $uwv$ from $G$ leaves a multipole $M$ whose
semiedges can be naturally distributed to two $2$-connectors
and one $1$-connector, each of them consisting of the semiedges
formerly incident with the same vertex of the path. Let
$I=\{e_1,e_2\}$ and $O=\{e_3,e_4\}$ be the connectors
consisting of the semiedges formerly incident with $u$ and $v$,
respectively, and let $R=\{e_5\}$ be the $1$-connector
containing the remaining semiedge of $M$. If $G$ is the
Petersen graph, then $M(I,O,R)$ is the dyad.

Observe that for each colouring of $M$ the total flow
through exactly one of its $2$-connectors is zero ---
otherwise the colouring could be extended to
a colouring of $G$. The flow
through the other connector is the same as that through the residual
semiedge due to the parity lemma. In other words, the colouring set
of $M$ is a subset of
$$
\mathcal{C} = \{(x,x,a,b,a+b),(a,b,x,x,a+b)\in\K^5 \mid a \ne b
\}.$$ It means that $M$ behaves like an inverting gadget which
inverts matching colours in the \emph{input connector} $I$ to
mismatching colours in the \emph{output connector} $O$, and
vice versa. For this reason $M$ is referred to as a
\emph{negator} and is denoted by $\Neg(G;u,v)$. The edge $e_5$
does not contribute to the inverting property of $M$ and
therefore it is called \emph{residual}.

The notation $\Neg(G;u,v)$ is ambiguous if $u$ and $v$ have
more than one common neighbour $w$. This can only happen if the
girth of $G$ does not exceed $4$, in which case $G$ is a
trivial snark. As we are primarily interested in nontrivial
snarks, in our work this ambiguity will be marginal.

\begin{figure}
	\centering
	\includegraphics[]{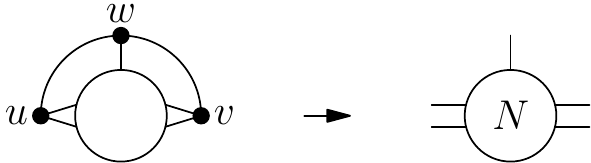}
	\caption{The snark $G$ and a symbolic representation
             of the negator $N = \Neg(G; u, v)$}
	\label{fig:neg}
\end{figure}

A negator whose colouring set is identical with $\mathcal{C}$
is called \emph{perfect}, otherwise it is called
\emph{imperfect}. For an imperfect negator $N$, it is possible
that one of its 2-connectors is improper, which means that the
other 2-connector is proper. If such a negator $N$ additionally
admits all such colourings, it is called \emph{semiperfect}.
The following theorem proved by Máčajová and Škoviera
\cite{Macajova} implies that each negator is either perfect,
semiperfect or uncolourable, and provides a characterisation of
perfect and semiperfect negators.

\begin{theorem}\label{thm:negatos} 
Let $N = \Neg(G; u, v)$ be a negator and let $w$ a common
neighbour of $u$ and $v$. If $N$ is colourable, then it is
either perfect or semiperfect. Moreover, the following hold.
\begin{enumerate}
\item[{\rm (i)}] $N$ is perfect if and only if each
of the pairs $\{u,w\}$ and $\{v,w\}$ of adjacent vertices
is nonremovable.
\item[{\rm (ii)}] $N$ is semiperfect if and only if
at least one of the pairs $\{u,w\}$ and $\{v,w\}$ is
removable.
\end{enumerate}
\end{theorem}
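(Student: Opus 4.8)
The plan is to work directly with colouring sets and the notion of removable pairs of adjacent vertices, translating each of the two conditions into statements about which colourings of $N=\Neg(G;u,v)$ exist. Throughout, write $I=\{e_1,e_2\}$, $O=\{e_3,e_4\}$, $R=\{e_5\}$ for the three connectors, and recall from the discussion preceding the theorem that $\col(N)\subseteq\mathcal C$, so every colouring of $N$ makes exactly one of $I,O$ improper. First I would set up the key auxiliary multipoles: for a vertex $x\in\{u,v\}$, re-attaching the path edges so as to recover $G$, and also the intermediate configurations obtained by attaching only one endpoint of the missing path. Concretely, $G\sim$(the edge $xw$) and $G-\{x,w\}$ are both obtained from $N$ by a junction followed by suppressions, and by Proposition~\ref{prop:edge-reduction} the pair $\{x,w\}$ is removable iff the corresponding reconstructed graph is colourable. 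The first step is to identify precisely which colourings of $N$ correspond, under these junctions, to colourings of these reconstructed graphs: a colouring of $N$ with $\phisum(I)=0$ (say $\varphi(I)=(x,x)$) extends across the edges of the path $uwv$ in $G$ iff one can colour $uw,wv$ consistently, and the parity lemma at $w$ forces the colour of $wv$; tracing this shows such a colouring certifies that a particular auxiliary graph is colourable. The bookkeeping here is routine but must be done carefully to pin down the correspondence between "$I$ improper" colourings and removability of $\{u,w\}$ versus $\{v,w\}$.

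Next I would prove part~(i). For the forward direction, suppose $N$ is perfect, so $\col(N)=\mathcal C$; in particular $N$ admits colourings with $\phisum(I)=0$ and colourings with $\phisum(O)=0$, realising all admissible colour patterns. Using the correspondence from the first step, a colouring with $\phisum(O)=0$ but $\phisum(I)\neq 0$ yields, after junctioning appropriately, a colouring witnessing that $G-\{u,w\}$ is colourable — wait, that would make $\{u,w\}$ \emph{removable} — so I must be careful about orientation: the correct reading is that the colourings of $N$ which do \emph{not} extend to $G$ are exactly those certifying colourability of $G-\{u,w\}$ or $G-\{v,w\}$, and perfection (having \emph{all} of $\mathcal C$) forces neither such certificate to exist, i.e. both pairs are non-removable. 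Conversely, if both $\{u,w\}$ and $\{v,w\}$ are non-removable, then by Proposition~\ref{prop:edge-reduction} both $G\sim uw$ and $G\sim vw$ are snarks; from this I would argue that every pattern in $\mathcal C$ is forced to occur as a colouring of $N$, because the parity lemma leaves $\mathcal C$ as the only candidate set and uncolourability of $G$ together with the non-removability assumptions removes every obstruction to realising each pattern. This is where Theorem~\ref{thm:flow-crit} and the parity lemma do the heavy lifting, and it is the delicate direction.

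For part~(ii), the statement "$N$ semiperfect $\iff$ at least one of $\{u,w\},\{v,w\}$ removable" should follow by combining (i) with the definition of semiperfect and the opening remark that a colourable negator is perfect or semiperfect. Indeed, if $N$ is colourable but not perfect, then by (i) at least one pair is removable; conversely, if (say) $\{u,w\}$ is removable, then $G-\{u,w\}$ is colourable, and I would lift such a colouring through the junction to produce colourings of $N$ in which the corresponding $2$-connector is improper, and then check that $N$ admits \emph{all} the colourings of that restricted type — using that $G$ itself is uncolourable to rule out the extension-to-$G$ colourings — which is exactly the definition of semiperfect. The trichotomy (perfect / semiperfect / uncolourable) then follows: a colourable $N$ has $\col(N)\subseteq\mathcal C$, and $\mathcal C$ naturally splits into the "$I$-improper" half and the "$O$-improper" half; realising both halves fully gives perfect, realising exactly one half fully gives semiperfect, and the content of the argument is that no intermediate possibility survives.

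The main obstacle I anticipate is the converse direction of~(i): showing that non-removability of both adjacent pairs forces \emph{all} of $\mathcal C$ to be realised, rather than merely forcing $N$ to be colourable. This requires carefully exploiting that any "missing" pattern in $\mathcal C$ would, via the junction correspondence, either force a colouring of $G$ (impossible) or force one of the auxiliary reduced graphs to behave like a snark in a way incompatible with the parity lemma. Making this airtight — in particular verifying that the junction of $N$ with the relevant small fixed multipole reconstructs exactly $G\sim uw$ or $G-\{u,w\}$, with the colour patterns matching up as claimed — is the technical heart of the proof; everything else is bookkeeping with the parity lemma and Proposition~\ref{prop:edge-reduction}.
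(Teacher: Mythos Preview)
First, note that the paper does not prove Theorem~\ref{thm:negatos}; it is quoted from M\'a\v{c}ajov\'a and \v{S}koviera~\cite{Macajova}. So there is no ``paper's own proof'' to compare against, and your proposal must be judged on its merits.

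There are two genuine problems. The first is a direction error in your core correspondence. By definition, a pair $\{x,w\}$ is \emph{non-removable} precisely when $G-\{x,w\}$ is \emph{colourable}, and Proposition~\ref{prop:edge-reduction} then says $G\sim xw$ is colourable---not a snark. Yet in the converse of~(i) you write ``if both $\{u,w\}$ and $\{v,w\}$ are non-removable, then \ldots\ both $G\sim uw$ and $G\sim vw$ are snarks,'' which is exactly backwards. The same confusion surfaces in your forward direction: a colouring of $N$ with $\phisum(I)\neq 0$ lets you reinstate $u$ (since $e_1,e_2$ get distinct colours), producing a colouring of $G-\{v,w\}$ and hence certifying that $\{v,w\}$ is \emph{non}-removable. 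Your ``wait, that would make $\{u,w\}$ removable'' and the subsequent attempted fix (``perfection \ldots\ forces neither such certificate to exist'') are both inverted. Once the correspondence is straightened out, the easy directions read: $\col(N)\cap\mathcal C_I\neq\emptyset \iff \{u,w\}$ non-removable, and symmetrically for $\mathcal C_O$ and $\{v,w\}$; hence ``perfect $\Rightarrow$ both non-removable'' and ``at least one removable $\Rightarrow$ not perfect'' are immediate.

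The second problem is the step you flag as the main obstacle: showing that $\col(N)\cap\mathcal C_I$ is either empty or all of $\mathcal C_I$ (and likewise for $\mathcal C_O$). Your sketch---``a missing pattern would force a colouring of $G$ or make an auxiliary reduced graph behave like a snark incompatibly with the parity lemma''---does not work. A missing pattern in $\mathcal C_I$ only says that $G\sim uw$, while colourable, fails to admit a colouring with prescribed values on the specific edges $p_1p_2$ and the three edges at $v$; nothing about the parity lemma or the uncolourability of $G$ rules that out. What is actually needed is a Kempe-chain argument inside $N$ (or equivalently inside $G\sim uw$): starting from one colouring $(x,x,a,b,c)\in\mathcal C_I$, colour permutations give six patterns, and swapping along a suitable $\{a,b\}$- or $\{a,c\}$-Kempe chain whose endpoints lie among the semiedges changes the relation between $x$ and $\{a,b,c\}$, eventually reaching every element of $\mathcal C_I$. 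This case analysis is the substance of the proof in~\cite{Macajova}, and your proposal does not supply it.
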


The smallest (connected) example of a negator can be constructed from the Petersen graph. Since the Pertersen graph graph is $2$-arc transitive, there is, up to
isomorphism, only one way of removing a path of length $2$ and thus there is, up to isomorphism, only one
Petersen negator --- the dyad. Obviously, the dyad is a perfect negator.

The parity lemma implies that every $(2,2,1)$-pole which is
colour-disjoint from a perfect negator must be a proper
$(2,2,1)$-pole. The smallest such $(2,2,1)$-pole is the path of
length $2$ with dangling edges retained and distributed into
connectors in the usual way. It consists of two endvertices
$u$ and $v$ and their common neighbour $w$. We will denote it
by $P_2(I,O,r)$, where the connector $I$ corresponds to the
dangling edges incident with~$u$, $O$ corresponds to the
dangling edges incident with $v$, and the remaining semiedge
$r$ arises from the dangling edge incident with $w$. The
colouring set of $P_2$ is
$$\col(P_2) = \{(a,b,c,d,e) \in \K^5 \mid a \ne b,\, c \ne d,\,
a+b+c+d+e = 0\}.$$

\subsection{Proper (2,3)-poles}
\label{sec:triad}

Take an arbitrary snark $G$ and choose a vertex $v$ and an edge
$e$ in $G$. Form a $(2,3)$-pole $T(B,C)$ by removing $v$ and
severing $e$. Let $B$ be the set of semiedges arising from
severing $e$ and let $C$ be the set of semiedges formerly
incident with $v$. Observe that if $G$ is the Petersen graph
and $e$ is not incident with $v$, then the result is the triad.
For connectivity reasons we only consider proper $(2,3)$-poles
that arise from a snark $G$ where the removed vertex $v$ and
edge $e$ are not incident, although the colouring properties of
$T(B,C)$ hold in the general case as well.

\begin{figure}
	\centering
	\includegraphics[]{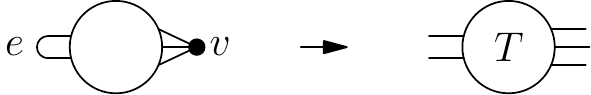}
	\caption{The snark $G$ and a symbolic representation of a proper $(2,3)$-pole $T$}
	\label{fig:proper23}
\end{figure}

We claim that $T$ is a proper $(2,3)$-pole.
Suppose not. Then $T$ admits a colouring $\varphi$ such that the flow through one of the connectors is zero. The parity lemma then implies
that $\phisum(B) = \phisum(C) = 0$ and allows extending the
colouring of $T$ to a colouring of the original snark $G$, which is impossible.
Therefore $T$ is proper. The colouring set of each proper
$(2,3)$-pole $T$ is clearly a subset of
$$\mathcal{C} = \{(a_1,a_2,b_1,b_2,b_3) \in \K^5
\mid a_1 + a_2 = b_1 + b_2 + b_3\neq 0\}.$$ If a proper
$(2,3)$-pole $T$ admits all colourings such that the flow
through each of the connectors is nonzero --- that is to say,
if its colouring set coincides with $\mathcal{C}$ --- then $T$ is called
\emph{perfect}; otherwise it is \emph{imperfect}. As one can
expect, the triad is a perfect proper $(2,3)$-pole, which can
easily be verified by hand or with the help of a computer. An
example of an imperfect proper $(2,3)$-pole will be discussed
in Section~\ref{sec:nt}.

Let us consider the multipole which is removed from a snark
$G$ in order to construct the proper $(2,3)$-pole $T(B,C)$. It is a disconnected
$(2,3)$-pole $M_{ev}(B,C)$ with connectors $B = (b_1,b_2)$ and
$C = (c_1,c_2,c_3)$ where $b_1$ and $b_2$ are the ends of an
isolated edge and $c_1$, $c_2$, and $c_3$ are ends
of three dangling edges incident with one common vertex (see also Figure~\ref{fig:ev}). Its
colouring set is
$$\col(M_{ev}) = \{(x,x,a,b,c) \in \K^5 \mid a + b + c = 0\}.$$

\subsection{Proper (3,3)-poles}

Let $u$ and $v$ be two vertices of a snark $G$. Construct a
$(3,3)$-pole $M(I, O)$ by removing $u$ and $v$ from $G$ and
putting the three semiedges formerly incident with $u$ and $v$
into the connectors $I$ and $O$, respectively. The $(3,3)$-pole
$M$ is proper: Kirchhoff's law implies that $\phisum{(I)}
=\phisum{(O)}$ for every $3$-edge-colouring $\phi$ of $M$, and
if any of these values equals zero, then $\phi$ can be extended
to a colouring of $G$, which is a contradiction. Again, for
connectivity reasons we only consider proper $(3,3)$-poles
where the vertices $u$ and $v$ are not adjacent.

In the Petersen graph there is only one way of removing a pair
of non-adjacent vertices because the graph is
$2$-arc-transitive and has diameter $2$. The result is a unique
proper $(3,3)$-pole $M_8$ of order $8$. It is easy to see
that $M_8$ can be constructed from the dyad by attaching
the residual semiedge to a new vertex incident with two
additional dangling edges, one contributing to the input and
the other one to the output.

\subsection{Even (2,2,2)-poles}
\label{sec:hexapole}

An \emph{even $(2,2,2)$-pole} is one where the number of
connectors having nonzero total flow is even. The term for this
multipole was coined by Goldberg~\cite{Goldberg} who was the
first to study this kind of multipoles. An even $(2,2,2)$-pole
can be constructed as follows. Take a snark $G$ and a vertex
$v$ with neighbours $u_1$, $u_2$ and $u_3$. Remove $v$, $u_1$,
$u_2$, and $u_3$, and for each $i \in \{1,2,3\}$ let $S_i =
\{e_i, f_i\}$ be the set containing the semiedges formerly incident with
$u_i$ that do not arise from $u_iv$. We claim that the
resulting $(2,2,2)$-pole $H(S_1, S_2, S_3)$ is even.
Kirchhoff's law tells us that it is impossible for a flow to
have a nonzero total flow through exactly one connector. It is
therefore sufficient to show that the flow through at least one
of the connectors $S_1$, $S_2$, and $S_3$ is zero. However, if
$\phisum(S_i) = c_i \ne 0$ for every $i \in \{1,2,3\}$, then
$c_1$, $c_2$, and $c_3$ are three distinct colours, so $\phi$
can be extended to a proper colouring of $G$, which is a
contradiction. Therefore $H(S_1, S_2, S_3)$ is indeed an even
$(2,2,2)$-pole.

A simple example of an even $(2,2,2)$-pole arises from a cycle 
of length six with a dangling edge at every vertex by forming 
each connector from a pair of opposite
dangling edges of the $6$-cycle. It is derived from the
Petersen graph.
Note that every even $(2,2,2)$-pole can be constructed from a
negator by deleting the residual semiedge along with its
end-vertex.

Denote by $V_4(S_1,S_2,S_3)$ the complementary $(2,2,2)$-pole
we removed from the snark $G$ to create $H$. It consists of
a vertex $v$ and its three neighbours $u_1$, $u_2$ and $u_3$,
each of them with two dangling edges attached. The connector
$S_i$ contains the semiedges belonging to the dangling edges
incident with $u_i$ (for $i \in \{1,2,3\}$). The colouring set
of $V_4$ is the set
$$\col(V_4) = \{(a_1,b_1,a_2,b_2,a_3,b_3) \in \K^6
\mid a_1 + b_1 + a_2 + b_2 + a_3 + b_3 = 0, (\forall i) (a_i \ne b_i) \}.$$

Large even $(2,2,2)$-poles can be constructed from smaller ones as follows. Let $M$ be an arbitrary $3$-pole, possibly containing loops and parallel edges, with three pairwise distinct vertices $v_1$, $v_2$, and $v_3$, each being incident with one dangling edge. Replace each vertex $v$ of $M$ with an even $(2,2,2)$-pole $H_v$ and each edge (including the dangling ones) with the $(2,2)$-pole consisting of two isolated edges which have their ends in different connectors. The result is a $(2,2,2)$-pole $H_M(S_1, S_2, S_3)$, where each connector $S_i$ consists of two dangling edges that replace the dangling edge incident with $v_i$. We claim that $H_M$ is an even $(2,2,2)$-pole. Assume that $\phisum(S_i) = a \ne 0$ for some connector $S_i$ of $H_M$. Observe that each even $(2,2,2)$-pole $H$ must have an even number of connectors with flow $a$ through it. By applying this argument inductively one can readily conclude that there exists another connector $S_j$ of $H_M$ with $\phisum(S_j) = a$. Consequently, $\phisum(S_k) = 0$ for $k \in \{1, 2, 3\} - \{i, j\}$, and therefore the number of connectors of $H_M$ with nonzero flow is either zero or two. In other words, $H_M$ is an even $(2,2,2)$-pole.

\begin{figure}[h]
	\centering
	\includegraphics{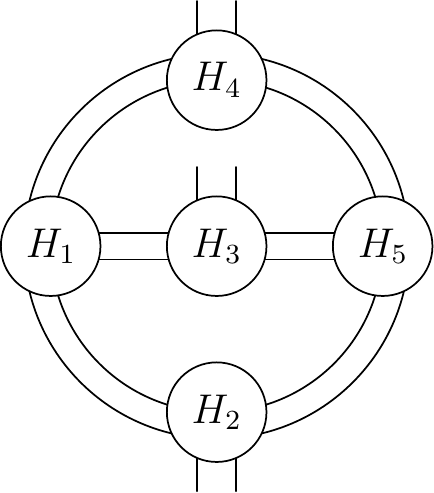}
	\caption{An even $(2,2,2)$-pole constructed from five smaller ones}
	\label{fig:odd-pole-larger}
\end{figure}

For illustration, consider the cubic graph $B_2$ consisting of two 
vertices and three parallel edges joining them, subdivide each edge 
with a new vertex, and attach a dangling edge to all three $2$-valent
vertices, thereby producing a $3$-pole $M$ on five vertices.
If we apply the construction 
described above to this $3$-pole, we obtain an even $(2,2,2)$-pole $H_M(S_1, S_2, S_3)$ composed from five smaller ones. The result is represented in Figure~\ref{fig:odd-pole-larger}, with the $(2,2,2)$-poles denoted by $H_i$ for $i\in\{1,\ldots,5\}$. Now, 
if for each of the five even $(2,2,2)$-poles we take the hexagonal $(2,2,2)$-pole derived from the Petersen graph 
and perform the junction $H_M(S_1, S_2, S_3) * V_4(S_1,S_2,S_3)$ 
we obtain the snark displayed in Figure~\ref{fig:34-6b}. Clearly, an analogous construction can be performed starting from any cubic graph in place of $B_2$. 

\begin{figure}[h!]
	\centering
	\includegraphics{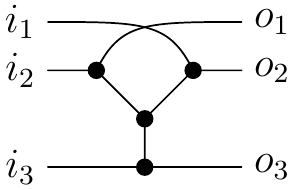}
	\caption{The Isaacs $(3,3)$-pole $Y$}
	\label{fig:y}
\end{figure}

\subsection{Isaacs $(3,3)$-poles}\label{sec:y}
Flower snarks are a well-known and in a certain sense exceptional family of snarks introduced by Isaacs in \cite{Isaacs}. Their basic building block is the 
$(3,3)$-pole $Y(I,O)$ with two ordered connectors $I = (i_1,i_2,i_3)$ and $O =(o_1,o_2,o_3)$  shown in Figure~\ref{fig:y}. It can be constructed from the complete bipartite graph $K_{3,3}$ by deleting two vertices from the same partite set, forming the connectors in the usual way, and ordering their semiedges in such a way that $\{i_1,o_2\}$, $\{i_2,o_1\}$, and $\{i_3,o_3\}$ are adjacent pairs of dangling edges.

The Isaacs flower snark $J_n$, where $n\ge 3$ is odd, can be produced by taking the disjoint union on $n$ copies $Y(I_i,O_i)$ of $Y(I,O)$, where $i\in\{1,2,\ldots,k\}$  and by joining $O_i$ to 
$I_{i+1}$ following the order of semiedges in the connectors; of course, the subscripts in the definition are taken modulo~$n$. The snark $J_3$ is the only trivial Isaacs snark.

Let $Y_k(I_1,O_k)$ denote the $(3,3)$-pole arising similarly from the union
of $k$ disjoint copies $Y(I_i,O_i)$
of $Y(I,O)$ and by performing the junction of $O_i$ and 
$I_{i+1}$ only for $i\in \{1,2,\dots,k-1\}$. It is known that 
$\col(Y_{2m}) = \col(Y_2)$ for every integer $m\ge 1$, see \cite{Nedela}.

Note that all the connectors considered in this subsection are ordered, since the arguments of the uncolourability of the Isaacs snarks require this fixed order of the semiedges in the connectors of $Y$, see for example \cite{Isaacs}.

\subsection{Proper $(2,2,1)$-poles of type NN}\label{subs:NN}

\begin{figure}[!h]
	\centering
	\begin{minipage}[c]{0.45\textwidth}
		\centering
		\includegraphics[]{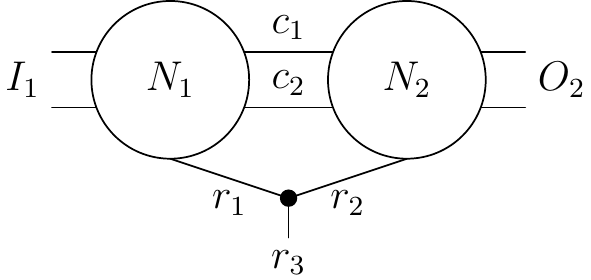}
		\captionof{figure}{A $(2,2,1)$-pole of type NN}
		\label{fig:nn}
	\end{minipage}
	\begin{minipage}[c]{0.45\textwidth}
		\centering
		\includegraphics[]{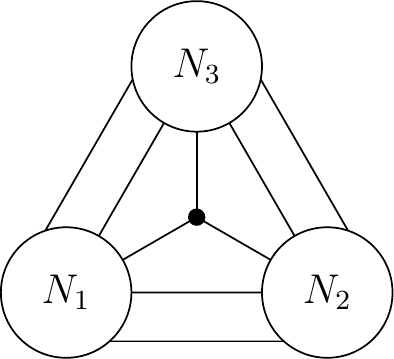}
		\captionof{figure}{A generalised Loupekine snark}
		\label{fig:nnn}
	\end{minipage}
\end{figure}

Take two
negators $N_1(I_1,O_1,r_1)$ and $N_2(I_2,O_2,r_2)$ and perform the junction of the connectors $O_1$ and
$I_2$. Add one vertex $v$ incident with the semiedges
$r_1$, $r_2$ and one new dangling edge producing a semiedge
$r_3$ (see Figure~\ref{fig:nn}). Denote the resulting
$(2,2,1)$-pole $M(I_1,O_2,r_3)$ by $\NN(N_1,N_2)$.

We prove that $\col(M)\subseteq \col(P_2)$, and if
$N_1$ and $N_2$ are perfect negators, then $\col(M) =
\col(P_2)$. Let $\phi$ be a colouring of $M$. The flow through the
connectors $O_1$ and $I_2$ has to be zero, otherwise $\phi(r_1)
= \phisum(O_1) = \phisum(I_2) = \phi(r_2)$, a contradiction.
Consequently, the flows through $I_1$ and
$O_2$ have to be nonzero. This implies that $\phi(S(M)) \in
\col(P_2)$. Now assume that $N_1$ and $N_2$ are perfect
negators. Let $\phi$ be an assignment of colours to the dangling edges of
$M$ such that $\phi(S(M)) \in \col(P_2)$, that is, $\phisum(I_1)
= a \ne 0$, $\phisum(O_2) = b \ne 0$ and $\phi(r_3) = a + b \ne
0$. We extend $\phi$ to a colouring of the multipole $M$. Set $\phi(r_1) = a$, $\phi(r_2) = b$, and $\phi(c_1) =
\phi(c_2) = a$ for the edges $c_1$ and $c_2$ connecting $N_1$ to $N_2$. Both the negators $N_1$ and $N_2$
have admissible colours on their dangling edges, and since they
are perfect, we can extend $\phi$ to the entire $(2,2,1)$-pole $M$.

Since $P_2$ is a proper $(2,2,1)$-pole, so is $M$.
Taking both the negators $N_1$ and $N_2$ from the Petersen graph --- that is, dyads --- we obtain a proper $(2,2,1)$-pole $P_{NN} = \NN(\dyad, \dyad)$ of order $15$ with $\col(P_{NN}) = \col(P_2)$.

It is worth mentioning that every snark $G$ which arises from a snark $H$ by a substitution of $\NN(N_1,N_2)$ for $P_2$ can be alternatively 
described as follows.
In the snark $G$, the complementary $5$-pole $M'$
connected to $\NN(N_1,N_2)$ is a snark $H$ with $P_2$ removed.
It means that $M'$ is the negator $N_3 = \Neg(H; u, v)$, where
$u$ and $v$ are the end vertices of~$P_2$. Thus the new snark
$G$ has the structure of a Loupekhine snark with the Petersen
negators replaced with $N_1$, $N_2$, and $N_3$ (see
Figure~\ref{fig:nnn}).

\subsection{Proper $(2,2,1)$-poles of type TT}

Take two proper $(2,3)$-poles $T_1(B_1,C_1)$ and $T_2(B_2,C_2)$
and perform the junction of $C_1$ to $C_2$. Pick one of the newly created edges,
subdivide it with a vertex $v$ and attach a dangling edge to~$v$, producing a semiedge $r$ (see Figure~\ref{fig:tt}).
Denote the $(2,2,1)$-pole
$M(B_1,B_2,r)$ constructed in this way by $\TT(T_1, T_2)$. 

\begin{figure}[!h]
	\centering
	\begin{minipage}[c]{0.45\textwidth}
		\centering
		\includegraphics[]{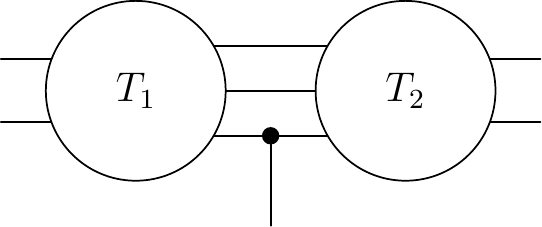}
		\captionof{figure}{A $(2,2,1)$-pole of type TT}
		\label{fig:tt}
	\end{minipage}
	\begin{minipage}[c]{0.45\textwidth}
		\centering
		\includegraphics[]{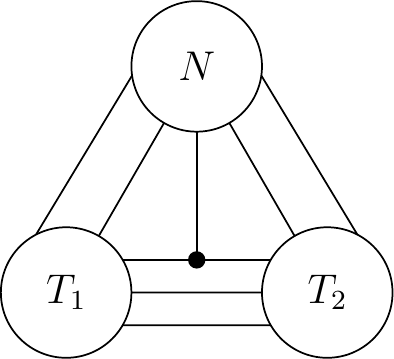}
		\captionof{figure}{Structure of a snark consisting of two proper $(2,3)$-poles $T_1$ and $T_2$ and one negator $N$}
		\label{fig:tnt-complet}
	\end{minipage}
\end{figure}

We prove that $\col(\TT(T_1,T_2)) \subseteq \col(P_2)$ with
equality attained if both $T_1$ and $T_2$ are perfect (see Section~\ref{sec:triad}).
Since $T_1$ and $T_2$ are proper, any colouring $\phi$ of
$M$ satisfies $\phisum(B_1) = a \ne 0$ and $\phisum(B_2) =
b \ne 0$, so $\phi(S(M)) \in \col(P_2)$. Assume that $T_1$
and $T_2$ are perfect. Let $\phi$ be an assignment of colours to the
dangling edges of $M$ such that $\phi(S(M)) \in \col(P_2)$. Then
there exist distinct colours $a$ and $b$ such that $\phisum(B_1)=a$,
$\phisum(B_2) = b$, and $\phi(r) = a + b \ne 0$.
Assign the edges joining $v$ to $T_1$ and $T_2$ colours $b$ and $a$, respectively. Now, if we colour the remaining two edges of $M$ joining $T_1$ to $T_2$ 
with colours $a$ and $b$ arbitrarily, we obtain admissible assignments of colours for the semiedges of both $T_1$ and $T_2$. Since both $T_1$ and $T_2$ are perfect, the assignments extend to colourings of $T_1$ and $T_2$ and hence to a colouring of the entire $M$.

Since $P_2$ is a proper $(2,2,1)$-pole, so is $\TT(T_1, T_2)$. If both proper $T_1$ and $T_2$ are obtained from the Petersen graph --- that is, if they are triads --- we get a perfect proper $(2,2,1)$-pole $P_{TT} = \TT(\triad, \triad)$ of order $19$ with $\col(P_{TT}) = \col(P_2)$.

Observe that the removal of a copy of $P_2$ from the snark $G$ yields 
a negator $N$. Hence the substitution of $\TT(T_1, T_2)$ for $P_2$ produces 
a snark consisting of two proper $(2,3)$-poles $T_1$ and $T_2$ and 
one negator $N$ as depicted in Figure~\ref{fig:tnt-complet}.

\begin{figure}[!h]
	\centering
	\begin{minipage}[c]{0.45\textwidth}
		\centering
		\includegraphics[]{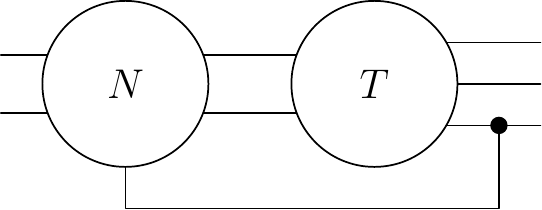}
		\captionof{figure}{A $(2,3)$-pole of type NT}
		\label{fig:nt}
	\end{minipage}
	\begin{minipage}[c]{0.45\textwidth}
		\centering
		\includegraphics[]{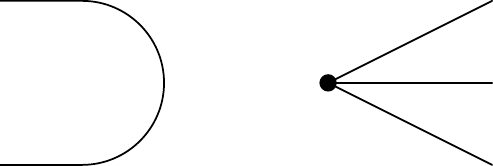}
		\captionof{figure}{The $(2,3)$-pole $M_{ev}$}
		\label{fig:ev}
	\end{minipage}
\end{figure}

\subsection{Improper $(2,3)$-poles of type NT}
\label{sec:nt}

Let $N(I,O,r)$ be a negator and $T(B,C)$ be a proper
$(2,3)$-pole. Perform the junction of $O$ and $B$, subdivide one of the dangling edges belonging to the $3$-connector $C$ of $T$, say~$e$, with a new vertex $v$, and attach the residual semiedge $r$ of $N$ to $v$ (see Figure~\ref{fig:nt}). The resulting $(2,3)$-pole $M(I,C')$ has its $2$-connector $I$ inherited from $N$ while its $3$-connector $C'$ has two semiedges $e_1$ and $e_2$ inherited from the output connector $C$ of $T$, and the third semiedge $e_3$ arises from the subdivision of $e$ with $v$. Denote this  
$(2,3)$-pole by $\NT(N,T)$.

We show that $\col(\NT(N,T)) \subseteq \col(M_{ev})$, where $M_{ev}$ is a $(2,3)$-pole shown in Figure~\ref{fig:ev} (see also Section~\ref{sec:triad}).
We also show that if both $N$ and $T$ are perfect, then $\col(\NT(N,T)) = \col(M_{ev})$.

Let $\phi$ be a colouring of the multipole $M$. Since $T$ is a
proper $(2,3)$-pole, $\phisum(B) = \phisum(O) \ne 0$ and thus
$\phisum(I) = 0$. The parity lemma implies that $\phisum(C') = 0$
and therefore $\phi(S(M)) \in \col(M_{ev})$. 

Assume that $N$
is a perfect negator and $T$ is a perfect proper $(2,3)$-pole.
Let $\phi$ be an assignment of colours to the dangling edges of $M$ such that $\phisum(I) = 0$ and $\phi(C') = (\phi(e_1),\phi(e_2), \phi(e_3)) = (a,b,c)$ where $a + b + c = 0$. To produce a colouring of $M(I,C')$, set $\phi(r) = a$, and assign the colours $b$ and $c$ to the dangling edges of $B$ in any order.
Then $\phisum(O) =\phi(r)=a=\phisum(C)$, and since $N$ and $T$ are both
perfect, this assignment extends to a colouring of $M$, as required.
Therefore $\phi(S(M)) \in \col(M_{ev})$ in this case.

As we have seen, the flow through both connectors of $\NT(N,T)$ is always zero, which means that  
$\NT(N,T)$ is an improper $(2,3)$-pole. The improper $(2,3)$-pole 
$P_{NT} = \NT(\dyad, \triad)$ is obtained by taking 
the negator and the proper $(2,3)$-pole from the Petersen graph. It has 17 vertices and $\col(P_{NT}) = \col(M_{ev})$.

If we distribute the semiedges of the connector $C'$ of the improper $(2,3)$-pole $M(I,C')$ 
into a $2$-connector and a $1$-connector, we obtain a semiperfect
negator $M'(I, C'-\{s\}, s)$ where $I$ is its improper connector and $s\in C'$ takes the role of a residual semiedge. Furthermore, if the residual semiedge $s$ of $M'$ is adjoined to $I$ to make a $3$-connector, a proper $(3,2)$-pole $M''(I\cup\{s\}, C'-\{s\})$ is obtained. Indeed, for every colouring $\phi$ of $M''$ we have 
$\phisum(I\cup\{s\})=\phisum(I)+\phi(s)=0+\phi(s)\ne 0$. However, $M''$ is an imperfect $(3,2)$-pole, because the two semiedges of $I$ always receive the same colour. 

Note that if $G = M_{ev} * T_2$ is a snark, then $T_2$ is a proper $(2,3)$-pole. Hence, the substitution of an improper $(2,3)$-pole $\NT(N, T_1)$ for $M_{ev}$ yields a snark consisting of two proper $(2,3)$-poles $T_1$ and $T_2$, and the negator $N$ as shown in Figure \ref{fig:tnt-complet}.

\subsection{Proper (2,2,2)-poles of type TTT}
\label{sec:ttt}

Take three proper $(2,3)$-poles $T_i(B_i,C_i)$, where $i \in
\{1, 2, 3\}$, and a $(3,3,3)$-pole $W=W(D_1,D_2,D_3)$ formed from a single vertex $w$ with three dangling edges by adding three isolated edges in such a way that each isolated edge contributes to different $3$-connectors of $W$. Perform the junctions $C_i$ to $D_i$ for each $i\in\{1,2,3\}$ to obtain a $(2,2,2)$-pole $M(B_1, B_2, B_3)$ shown in Figure~\ref{fig:ttt}. Denote the result by $\TTT(T_1, T_2, T_3)$.

\begin{figure}[b]
	\centering
	\begin{minipage}[c]{0.45\textwidth}
		\centering
		\includegraphics[]{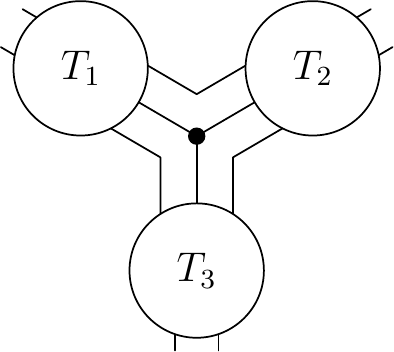}
		\captionof{figure}{A $(2,2,2)$-pole of type TTT}
		\label{fig:ttt}
	\end{minipage}
	\begin{minipage}[c]{0.45\textwidth}
		\centering
		\includegraphics[]{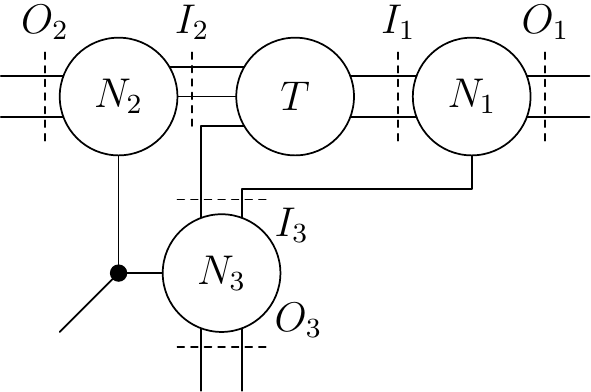}
		\captionof{figure}{A $(2,2,2,1)$-pole of type 3NT}
		\label{fig:ntnn}
	\end{minipage}
\end{figure}

We prove that
$\col(\TTT(T_1,T_2,T_3)) \subseteq \col(V_4)$, with equality
attained if all of $T_1$, $T_2$, and $T_3$ are perfect (see
Section~\ref{sec:hexapole} for the definition of $V_4$).
Let $\phi$ be a colouring of $M(B_1, B_2, B_3)$. 
Since $B_i$ is a connector of
a proper $(2,3)$-pole, we see that $\phisum(B_i) \ne 0$ for all
$i \in \{1,2,3\}$. This fact implies that $\phi(S(M)) \in
\col(V_4)$. Next, assume that $T_1$, $T_2$ and $T_3$ are
perfect. Consider an assignment $\phi$ of colours to the semiedges
of $M$ such that $c_i = \phisum(B_i)\ne 0$ for $i \in \{1,2,3\}$,  
and $c_1 + c_2 + c_3 = 0$. Let $e_1$, $e_2$, and $e_3$
be those semiedges from $C_1$, $C_2$, and $C_3$, respectively, that
are incident with the vertex $w$.
If we assign the same colour, chosen arbitrarily,
to all the remaining semiedges of the connectors $C_i$ and
set $\phi(e_i) = c_i$ for $i \in \{1,2,3\}$, we get an
admissible colouring of the semiedges for each perfect $(2,3)$-pole
$T_i$. Therefore $\phi$ can be extended to the entire multipole $M$.

We have thus showed that $\TTT(T_1, T_2, T_3)$ is a proper 
$(2,2,2)$-pole. By choosing the triads for all three proper $(2,3)$-poles we obtain a proper $(2,2,2)$-pole $P_{TTT} = \TTT(\triad, \triad, \triad)$ of order $28$ with $\col(P_{TTT}) = \col(V_4)$.

Note that after removing the $(2,2,2)$-pole $V_4$ from any snark
$G$ we obtain an even $(2,2,2)$-pole. It follows that the snark resulting from a substitution of $\TTT(T_1, T_2, T_3)$ for $V_4$ consists of an even
$(2,2,2)$-pole connected to a proper $(2,2,2)$-pole of type TTT.

\subsection{Panchromatic (2,2,2,1)-poles of type 3NT}
\label{sec:panchromatic}

Take three negators $N_i(I_i,O_i,r_i)$, for $i\in \{1, 2, 3\}$), and one proper $(2,3)$-pole $T(B,C)$. Arrange
them as depicted in Figure~\ref{fig:ntnn} and denote the
resulting $(2,2,2,1)$-pole $M(O_1,O_2,\penalty0 O_3,r)$ by
$\NNNT(N_1,N_2,N_3,T)$. 

Let $M_7(\{e_1,e_2\},I,O,r)$ denote a $(2,2,2,1)$-pole consisting of two components, an isolated edge, whose semiedges $e_1$ and $e_2$ constitute the first $2$-connector, and path of length $2$ with the standard distribution of semiedges into two $2$-connectors and a $1$-connector.
We show that $\col(M) \subseteq \col(M_7)$. Moreover, if
the negators $N_1$, $N_2$, $N_3$, and the proper
$(2,3)$-pole $T$ are all perfect, then $\col(M) = \col(M_7)$.

Let $\phi$ be a colouring of $M$, let $e$ be the edge joining $I_3$
and the $3$-connector $C$ of $T$, and let $\phi(e)=a$. Since the connector $I_1$ is joined to the proper $(2,3)$-pole $T$, there exist an element $b\in\mathbb{K}$ such that  $\phisum(I_1)=b$, whence 
$\phisum(O_1) = 0$ and $\phi(r_1) = b$.

Suppose that $a+b=\phisum(I_3)\ne 0$. Since $N_3$ is a negator, 
$\phi(r_3)=a+b$. On the other hand, the parity lemma
applied to $T$ implies that
$\phisum(I_2) =\phisum(B)+\phi(e)= a+b\ne 0$. However, 
$N_2$ is a negator, so $\phi(r_2) = a + b = \phi(r_3)$. 
Thus both $r_2$ and $r_3$ receive the same colour, which is impossible, because they are adjacent. Hence, $\phisum(I_3) = 0$ and $a=b$.

Now, $\phisum(O_3) \ne 0$ whence
$\phi(e)=b=\phi(r_1)=\phi(I_1)$. If we apply the parity lemma to $T$ again, we conclude that $b=\phisum(C)=\phisum(I_2)+\phi(e)=\phisum(I_2)+b$. Therefore $\phisum(I_2)=0$ and so 
$\phisum(O_2)\ne 0$. Summing up, $\phisum(O_1)=0$ while 
$\phisum(O_2)$, $\phisum(O_3)$, and $\phi(r)$ are all nonzero. By the parity lemma, the connectors of $M$ receive from $\phi$ all four values of $\mathbb{Z}_2\times\mathbb{Z}_2$, which justifies calling $M$ \emph{panchromatic}. Furthermore, we can immediately see that
$\phi(S(M)) \subseteq \col(M_7)$.

Now assume that all the multipoles $N_1$, $N_2$, $N_3$, and $T$ are
perfect. If $\phi$ is an assignment of colours to the dangling edges of $M$ such that $\phi(S(M)) \in \col(M_7)$, we can extend it to a
colouring of dangling edges of each of the $5$-poles $N_1$,
$N_2$, $N_3$, and $T$ in such a way that the flows through their
connectors are the same as the flows in the proof that
$\col(M) \subseteq \col(M_7)$. Such a colouring is
admissible for all of the connectors of multipoles $N_1$,
$N_2$, $N_3$, and $T$, and since they are perfect, it can be
extended to the entire $M$.

By plugging in dyads and triads we obtain a panchromatic $(2,2,2,1)$-pole $P_{3NT} = \NNNT(\dyad,\dyad,\dyad,\triad)$ of order $31$ with $\col(P_{3NT})=\col(M_7)$.

\subsection{Superpentagons}
\label{sec:superpentagon}

Let $C_5=C_5(e_0,\ldots,e_4)$ denote the 5-pole consisting of a $5$-cycle having vertices $v_0, \ldots, v_4$, arranged cyclically, with five semidges $e_0,\ldots,e_4$ attached to them correspondingly.
We define a \emph{superpentagon} to be any $5$-pole $M$ with 
$\col(M)\subseteq\col(C_5)$.  The importance of superpentagons consists in the fact that substituting a superpentagon $M$ for 
a 5-cycle $K$ in a snark $G$ produces another snark $G'$. It may be worth
mentioning that substituting a superpentagon for a $5$-cycle is equivalent 
to a $5$-product of $G$ with a snark $\bar M$ obtained from $M$ by joining
$M$ to a $5$-cycle in a Petersen-like manner (that is, as a pentagram). 
For the definition of a $5$-product of snarks see
\cite[last paragraph of Section~3]{Cameron} or \cite[pp.~51-52]{Chladny-Skoviera-Factorisations}.

It is a well known fact, proved in \cite[Lemmas~6.2-6.5]{Nedela}, that for an arbitrary $5$-pole $M$ with $\col(M)\subseteq\col(C_5)$ only two possibilities can occur: either $\col(M)=\emptyset$  or $\col(M)=\col(C_5)$. In the latter case we call $M$ a \emph{perfect superpentagon}.
A familiar example of a perfect superpentagon has 15 vertices and can be obtained from the Isaacs flower snark $J_5$ by removing the unique $5$-cycle $C$ of $J_5$ and changing the cyclic order $e_0,e_1,e_2,e_3,e_4$ of the resulting dangling edges to $e_0,e_2,e_4, e_1,e_3$.

We now describe a superpentagon $Q(f_0,\ldots,f_4)$ that 
can be observed in the analysed graphs. 
Let $T=T(B,C)$ be a proper $(2,3)$-pole with  
$B = \{d_1,d_2\}$.
By distributing the semiedges of $B$ into two $1$-connectors we obtain a $(3,1,1)$-pole $T(C, d_1, d_2)$, which is proper as well. 
We further need a proper $(3,3)$-pole $R=R(I,O)$ and a 
$(3,3,1)$-pole $U=U(S_1,S_2,r)$ with 
$S_i = \{e_i,f_i,g_i\}$ and $i\in \{1,2\}$. It
consists of one vertex $v$ incident with three dangling edges
whose semiedges are $e_1$, $e_2$ and $r$, and two isolated
edges with semiedges $f_1$, $f_2$ and $g_1$, $g_2$, respectively (see $U_2$ and $U_3$ in Figure~\ref{fig:sup5}). 

To construct $Q(f_0,\ldots,f_4)$ proceed as follows (see Figure~\ref{fig:sup5}):
\begin{itemize}
\item Set $f_0=e_0$.

\item For $i\in\{1,4\}$ substitute a copy $T_i$ of $T(C, d_1, d_2)$ for 
the vertex $v_i$ of $C_5$. Define $f_i$ to be the copy of the semiedge 
$d_1$ in $T_i$. 

\item For $i\in\{2,3\}$ substitute a copy $U_i$ of $U(S_1,S_2,r)$ for the vertex $v_i$ of $C_5$. Define $f_i$ to be the copy of the semiedge $r$ in $U_i$.

\item Substitute the proper $(3,3)$-pole $R=R(I,O)$the for the  edge 
$v_2v_3$ of $C_5$. Join $I$ to the copy of the connector $S_2$ in $U_2$, and further join $O$ to the copy of $S_1$ in $U_3$.

\item Join the copy of $S_1$ in $U_2$ to the $3$-connector of $T_1$. Join the copy of $S_2$ in $U_3$ to the $3$-connector of $T_4$.

\item Attach the copy of $d_2$ in both $T_1$ and $T_4$ to $v_0$.
\end{itemize}

It is easy to check that $Q(f_0,\ldots,f_4)$ is a superpentagon. Moreover, if the involved $(3,3)$-pole $R$ is colourable and both proper $(2,3)$-poles corresponding to $T_1$ and $T_4$ are perfect, then $Q$ is a perfect superpentagon. 

The smallest example that can be produced by this
method has 29 vertices. It is constructed by 
choosing $T$ to be the triad and $R$ to be the proper 
$(3,3)$-pole on eight vertices. Clearly, this superpentagon is proper.

\begin{figure}
	\centering
	\includegraphics[]{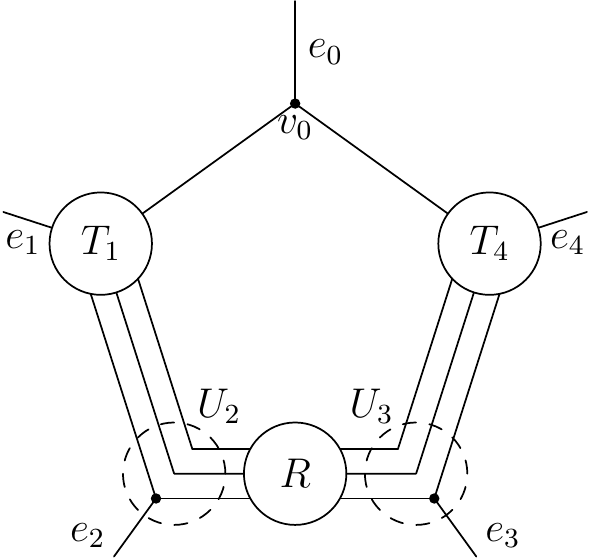}
	\caption{A superpentagon}
	\label{fig:sup5}
\end{figure}

\section{Results of analysis}
\label{sec:results}

Having specified all necessary tools, we present results of our analysis. In the present section we describe only bicritical snarks since critical snarks that are not bicritical deserve special attention, which is why they will be treated in a separate section.

For a given snark $G$ we start by identifying all $5$-cycle clusters as explained in Section~\ref{sec:methods}. Subsequently we
check whether $G$ contains any of the $5$-poles $P_{NN} = \NN(\dyad, \dyad)$, $P_{NT} = \NT(\dyad, \triad)$, $P_{TT} = \TT(\triad, \triad)$, $P_{TTT} =
\TTT(\triad, \triad, \triad)$, and $P_{3NT} = \NNNT(\dyad,\dyad,\dyad,\triad)$ which we have described in
Section~\ref{sec:multipoles}.
As we have proved in Section~\ref{sec:multipoles}, the
multipoles $P_{NN}$, $P_{NT}$, $P_{TT}$, $P_{TTT}$, and $P_{3NT}$ are
colour-equivalent to $P_2$, $M_{ev}$, $P_2$, $V_4$, and $M_{7}$,
respectively. Here we make use of the fact that both $\dyad$ and
$\triad$ are perfect. Whenever a snark $G$ contains any of those
multipoles, it can be constructed from a smaller snark $G'$ by a suitable substitution. 
The order of $G'$
is easy to compute: if $|G|\le 36$, then $|G'|\le 24$. There
are not so many snarks of order at most $24$, thus it is
easy to check if $G'$ is isomorphic to one of them. It may be useful to note that in all cases where this procedure could be applied, the snark $G'$ was nontrivial, although sometimes its cyclic connectivity was smaller  (necessarily $4$) or $G'$ was reducible.

If we discard the snarks arising from smaller snarks by one of
the substitutions mentioned above, some snarks will remain. We distribute them, for each particular order, into
several classes.
Each class can be characterised by specific
junctions of suitable multipoles. Most of these multipoles are just
$5$-cycle clusters,
but several more interesting ones have
emerged, too (for example, $M_{11}$ in Figure~\ref{fig:32}). In
contrast to $5$-cycle clusters, those have been analysed by
hand. We present each class as an infinite family of snarks containing the desired small snarks.

Up to isomorphism, there is only one Petersen negator $N_P$, the dyad 
$\dyad$, and only one Petersen proper $(2,3)$-pole $T_P$, the triad $\triad$. However, when performing a
junction of two connectors containing more than one semiedge,
the result may depend on the particular order of semiedges in the
connectors. Typically, the order does not affect our uncolourability
arguments because they involve the connectors in their
entirety, the only exception being the Isaacs snarks. Nevertheless, choosing different order of semiedges in connectors may
lead to several non-isomporphic variations of the multipoles
$P_{NN}$, $P_{NT}$, $P_{TT}$, and $P_{TTT}$. It would be
possible to take this into account in our classification, but
such level of detail would just obscure the analysis without
tangible benefits. Perhaps the only case where one might
be interested in distinguishing those non-isomorphic variants
occurs when using them as construction blocks in order to obtain
larger snarks with specific properties.

By applying the approach mentioned above we have analysed 
all cyclically $5$-connected bicritical snarks
with at most 36 vertices. The list of such snarks was obtained
from \cite{BCGM}. The results are summarised in
Table~\ref{tab:all}. Uncolourability of certain snarks can be
explained in several different ways. Consequently, they are
included in more than one of our classes, which explains why
the numbers in Table~\ref{tab:all} do not add up.

\begin{table}[h]
	\centering
	\begin{tabular}{|r|ccccccccccc|}
		\hline
		Order & 10 & 20 & 22 & 24 & 26 & 28 & 30 & 32 & 34 & 36 & 38  \\
		\hline
NN substitution  & 0  & 0  & 2  & 0  & 0  & 0  & 10 & 11 & 26 & 10 & $\ge 39$  \\
TT substitution  & 0  & 0  & 0  & 0  & 8  & 0  & 0  & 0  & 84 & 69 & $\ge 3$  \\
NT substitution  & 0  & 0  & 0  & 0  & 8  & 0  & 0  & 0  &1084& 396& $\ge 17$ \\
TTT substitution & 0  & 0  & 0  & 0  & 0  & 0  & 0  & 0  & 22 & 0  & $\ge 0$ \\
Superpentagon subst. & 0  & 0  & 0  & 0  & 0  & 0  & 0  & 0  & 72 & 0  & $\ge 0$ \\
Other         & 1  & 1  & 0  & 0  & 0  & 1  & 1  & 2  &215 & 9  & $\ge 4$ \\
		\hline
TOTAL         & 1  & 1  & 2  & 0  & 8  & 1  & 11 & 13 &1503& 484& $\ge 56$ \\
		\hline
	\end{tabular}
	\caption{Classification of all cyclically $5$-connected bicritical snarks of order up to $36$}
	\label{tab:all}
\end{table}

\subsection*{Isaacs' flower snarks}

We have already defined the Isaacs flower snarks $J_n$, where $n\ge 3$ is odd, in Section~\ref{sec:y}. An alternative approach to describing them uses substitutions starting from the Petersen graph.
The flower snark $J_3$ arises from the Petersen graph by substituting 
a triangle for a vertex. For each $k \ge 2$, the snark $J_{2k+1}$ can 
be constructed from $J_{2k-1}$ by substituting the $(3,3)$-pole $Y_4$ for a copy of the $(3,3)$-pole
$Y_2$ contained in it, both $(3,3)$-poles having the same colouring set 
(for the definition of $Y_i$ see Section~\ref{sec:y}).
It is well known that the flower snarks $J_n$ are cyclically $6$-edge-connected for all $n \ge 7$ and bicritical for $n\ge 5$ (see \cite[Proposition~4.7]{Nedela}).

There are three Issacs flower snarks up to order $36$, 
namely $J_5$, $J_7$, and $J_9$.
The snark $J_5$ with $20$ vertices contains single cycle-separating
$5$-cut; the next one, $J_7$ with $28$ vertices, is the
smallest cyclically $6$-connected snark, and $J_9$ is also cyclically $6$-connected. The Isaacs snarks will not
be mentioned anymore in the rest of this classification.

\subsection*{Order 22}

The only cyclically $5$-connected bicritical snarks of order
$22$ are the two Loupekine snarks (shown in Figure~\ref{fig:nnn}).
Both of them contain the $5$-pole $P_{NN}$, so they can be
constructed from the Petersen graph by a substitution of $P_{NN}$
for $P_2$.

\subsection*{Order 26}

There are eight cyclically $5$-connected bicritical snarks of
order $26$. All of them contain a proper $(2,2,1)$-pole
$P_{NT}$ and also a proper $(2,2,1)$-pole $P_{TT}$. Hence they
can be obtained by a substitution from the Petersen graph.
Also, each of them is spanned by the uncolourable $7$-pole from
Figure~\ref{fig:tnt} consisting of one dyad $N$ connected to two
triads $T_1$ and $T_2$. This multipole was used by Steffen and others
\cite{Oddness, SteffenClassifications} to construct cyclically
$5$-connected snarks with small order and large resistance.

\begin{figure}[h!]
        \centering
        \includegraphics[]{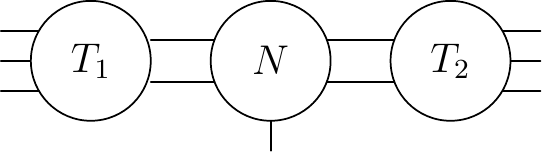}
        \captionof{figure}{An uncolourable $7$-pole}
        \label{fig:tnt}
\end{figure}

\subsection*{Order 30}

On $30$ vertices, there is one cyclically $5$-connected snark of girth $6$, the double-star snark discovered by Isaacs \cite{Isaacs}. It can be described as a $5$-product of $J_5$ with itself. All the
remaining snarks of order $30$ arise from the Blanuša snarks by
substituting $P_{NN}$ for $P_2$: six snarks from the Type~1 Blanu\v sa snark
and four snarks from the Type~2. By the Type~1  Blanuša snark we mean a nontrivial snark of order~18 discovered in 1946 by Blanu\v sa \cite{Blanusa},  and by the Type~2 Blanu\v sa snark we mean the other nontrivial snark on 18 vertices, the Blanu\v sa double as it is called in \cite{Orbanic}, where the history and properties of these two snarks are discussed in detail. 
Note that this substitution increases cyclic connectivity.

\subsection*{Order 32}

There are 13 cyclically $5$-connected bicritical snarks of
order 32. From among them, 11~contain the $5$-pole $P_{NN}$.
All of them can be constructed from the flower snark $J_5$ by a
substitution. The remaining two constitute Class 32-A which is
in detail described below.

\begin{figure}[h!]
	\centering
	\begin{minipage}[c]{0.40\textwidth}
		\centering
	\includegraphics[]{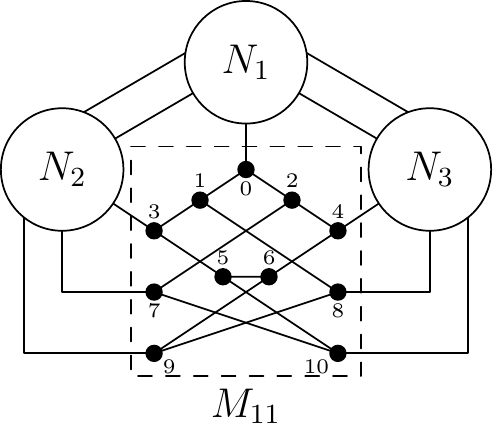}
		\captionof{figure}{The structure of Class 32-A}
		\label{fig:32}
	\end{minipage}
	\hfill
	\begin{minipage}[c]{0.40\textwidth}
		\centering
		\includegraphics[]{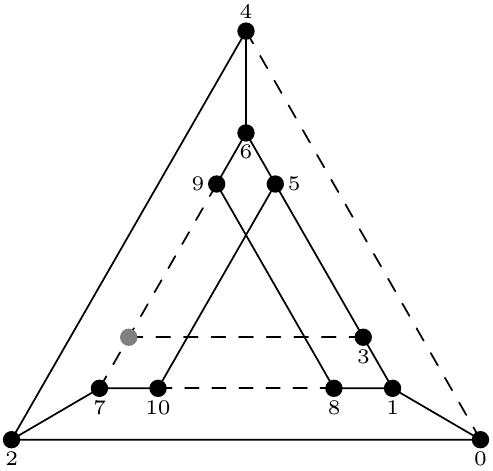}
		\captionof{figure}{The $7$-pole $M_{11}$ constructed from $J_3$}
		\label{fig:32-J3}
	\end{minipage}
\end{figure}

\subsubsection*{Class 32-A}

The two snarks of this class consist of three Petersen negators
$N_i(I_i,O_i;r_i)$ for  $i \in \{1,2,3\}$ and one $7$-pole
$M_{11}$, which are combined as shown in Figure~\ref{fig:32}. In
general, the negators $N_1$, $N_2$, and $N_3$ can be taken from an
arbitrary snark. The $7$-pole $M_{11}$ can be derived from the
flower snark $J_3$ by removing one vertex and severing two edges
as indicated in Figure~\ref{fig:32-J3}.

We now explain why any graph $G$ represented by Figure~\ref{fig:32} 
is a snark.
By contradiction, suppose that $G$ has a $3$-edge-colouring $\phi$.
To avoid ambiguity assume that, in $G$, the connector $O_2$ is joined 
to $I_1$  and the connector $O_1$ is joined to $I_3$. Let $a = \phi(r_1)$.
One of the connectors of $N_1$, say, $I_1$, must have zero flow.
Then $\phisum(O_1) = a = \phisum(I_3)$. From the negator $N_3$,
we get that $\phisum(O_3) = 0$ and $\phi(r_3) = a$. Now, consider the $7$-pole $M_{11}$. The semiedges $e_1$ and $e_2$
connected to the connector $O_3$ have the same colour and so
have the semiedges $e_3$ and $e_4$ connected to $r_3$ and
$r_1$, respectively. The parity lemma implies that the sum of the flows
through the remaining three semiedges $e_5$, $e_6$, and $e_7$ is
zero. Therefore, we can perform junctions of $e_1$ with $e_2$,
$e_3$ with $e_4$ and add one new vertex incident with $e_5$,
$e_6$ and $e_7$, giving rise to a graph $H$ with edges coloured
by $\phi$. However, the graph $H$ is isomorphic to the flower
snark $J_3$, which is a contradiction.

Note that we require that $M_{11}$ can be extended to a snark in
two symmetric ways, so we cannot replace it with an arbitrary
$7$-pole constructed from a snark by removing a vertex and
cutting two edges.

\subsection*{Order 34}

Among the $1503$ bicritical cyclically $5$-connected snarks of order $34$, we
have found $26$ that contain the $5$-pole $P_{NN}$. They can be constructed from the Loupekine snarks (both types) by a substitution for $P_2$. The $5$-pole $P_{NT}$ is contained in $1084$ snarks
which arise from the Blanuša snark (both types) by a
substitution for one vertex and one edge. The next $84$ snarks
are spanned by the $5$-pole $P_{TT}$ and can be constructed from the
Blanuša snark (both types) by a substitution of $P_2$.
Further $72$ snarks arise from the Petersen graph by substituting  a superpentagon described in Section~\ref{sec:superpentagon} for a pentagon.

After analysing the structure of the remaining snarks we
have decided to categorise them into six classes. 
The classification for order $34$ is summarised in
Table~\ref{tab:34}.

\begin{table}[h]
	\centering
	\begin{tabular}{|r|l|}
		\hline
		Type of a snark & Number of snarks \\
		\hline
		Containing $P_{NN}$  & $26$ \\
		Containing $P_{NT}$  & $1084$\\
		Containing $P_{TT}$  & $84$ \\
		Containing $P_{TTT}$ & $22$ \\
		Containing superpentagon & $72$ \\
		Class 34-A           & $21$ \\
		Class 34-B           & $18 + 18\ (P_{TT}) + 54\ (P_{NT}) = 90$ \\
		Class 34-C           & $162 + 18\ (P_{NT}) = 180$ \\
		Class 34-D           & $5$ \\
		Class 34-E           & $7$ \\
		Class 34-F           & $2$ \\
		\hline
		TOTAL                & $1503$\\
		\hline
	\end{tabular}
	\caption{Structure of cyclically $5$-connected bicritical snarks of
order $34$.}
	\label{tab:34}
\end{table}

\subsubsection*{Class 34-A}

\begin{figure}[t]
	\centering
	\begin{minipage}[c]{0.45\textwidth}
        \centering
		\includegraphics[]{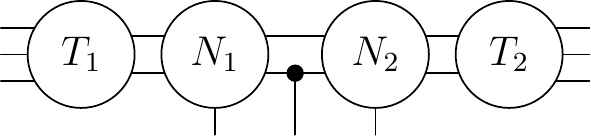}
		\captionof{figure}{Uncolourable $9$-pole $M_1$ (Class 34-A)}
		\label{fig:34-A}
	\end{minipage}
	\hfill
	\begin{minipage}[c]{0.45\textwidth}
		\centering
		\includegraphics[]{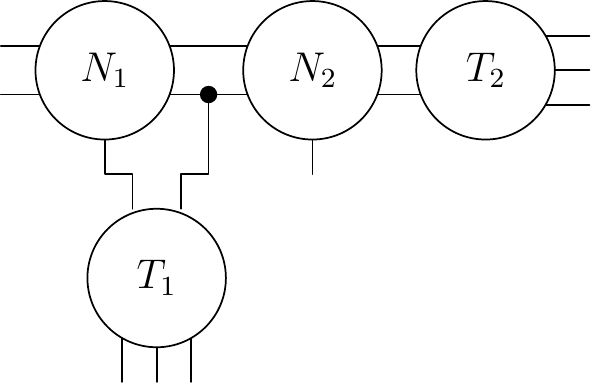}
		\captionof{figure}{Uncolourable $9$-pole $M_2$ (Class 34-B)}
		\label{fig:34-B}
	\end{minipage}
\end{figure}

Take two negators $N_1$ and $N_2$ and two proper $(2,3)$-poles
$T_1$ and $T_2$, and construct a $9$-pole $M_1$ as shown in
Figure~\ref{fig:34-A}. We prove that $M_1$ is uncolourable. 
Suppose to the contrary that $M_1$ admits a colouring $\phi$. 
Let $e$ denote the dangling edge
which is incident with none of the multipoles $N_1$, $N_2$,
$T_1$, and $T_2$. If a connector of a negator is joined to a
proper connector, then its other connector must have zero flow
through it. In our case, this is true for both $N_1$ and $N_2$.
After applying the parity lemma to the $(2,2,1)$-pole connecting
$N_1$ to $N_2$, which contains one vertex with three dangling
edges and one isolated edge, we obtain $\phi(e) = 0$, which is a
contradiction.

Among the studied snarks of order $34$, there are $21$ snarks
containing $M_1$. In all of them, $M_1$ is built from dyads and
triads and has $33$ vertices.

\subsubsection*{Class 34-B}

Assume that we have two negators $N_1$ and $N_2$ and two proper
$(2,3)$-poles $T_1$ and $T_2$ which are assembled to a $9$-pole
$M_2$ as depicted in Figure~\ref{fig:34-B}. We prove that $M_2$
is uncolourable. Let $v$ denote the vertex of $M_2$ not belonging
to any of $N_1$, $N_2$, $T_1$, and $T_2$, and let $e$ be the edge between
$v$ and $T_1$. Suppose to the contrary that there is a colouring 
$\phi$ of $M_2$. Since a
connector of $N_2$ is joined with a proper connector, the flow
through its other connector is zero. If we apply the parity lemma
first to the $(2,2,1)$-pole containing $v$ (which is joined
to $N_1$, $N_2$, and $T_1$) and then to $N_1$, we can conclude that 
the flow through $e$
is the same as the flow through the residual semiedge of $N_1$.
These values force a zero flow through a connector of $T_1$, which is
impossible.

The multipole $M_2$ contained in the studied snarks of order $34$
consists of dyads and triads and has $33$ vertices. We have
identified $90$ snarks containing this $9$-pole. Of them, $54$
snarks contain also the $5$-pole $P_{NT}$, and $18$ snarks
contain the $5$-pole $P_{TT}$.

\subsubsection*{Class 34-C}

Let $G$ be a snark. Delete a path $uv$ from $G$, sever an edge $e \ne uv$ of $G$, and denote the resulting $(2,2,2)$-pole by
$R(A,B,C)$, where the connectors $A$ and $B$ contain the two semiedges formerly incident with $u$ and $v$, and $C$ contains two semiedges that arose from severing $e$.
Should $R$ be contained in a cyclically $5$-connected graph, $e$ cannot be incident with any of $u$ and $v$. 
The crucial property of $R$ is that it admits no
colouring $\phi$ such that $\phisum(A)\neq 0$ and $\phisum(C) =
0$. Clearly, any such colouring could be extended to the entire snark $G$. Observe that if we choose the Petersen graph for $G$, we obtain the double pentagon as the $(2,2,2)$-pole $R$.

Take the $(2,2,2)$-pole $R(A, B, C)$, a negator
$N(I,O)$, and two proper $(2,3)$-poles $T_1(B_1,C_1)$ and
$T_2(B_2,C_2)$. Join $B_1$ to $A$, $C$ to $I$, $O$ to
$B_2$, and denote the resulting
$9$-pole $M_3$ (see Figure~\ref{fig:34-3}). Assume that $M_3$ has a colouring $\phi$. The
negator $N$ is connected to the proper $(2,3)$-pole $T_2$,
hence $\phisum(I) = \phisum(C) = 0$. Since  $T_1$ is also
proper, we have $\phisum(B_1) = \phisum(A) \neq 0$. This
colouring is impossible for $R$---a contradiction.

\begin{figure}[h]
	\centering
	\includegraphics[]{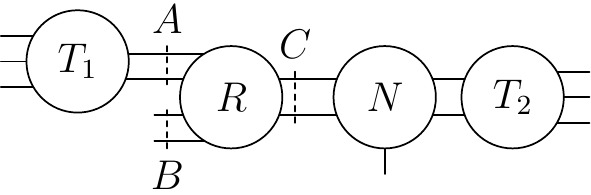}
	\caption{Uncolourable $9$-pole $M_3$ (Class 34-C)}
	\label{fig:34-3}
\end{figure}

There are $72$ bicritical $5$-connected snarks of order $34$
belonging to this class ($18$ of them also contain $P_{NT}$).
In all of them, the negators and proper $(2,3)$-poles are
derived from the Petersen graph. 

\subsubsection*{Class 34-D}

\begin{figure}[t]
	\centering
	\begin{minipage}[c]{0.45\textwidth}
		\centering
		\includegraphics[]{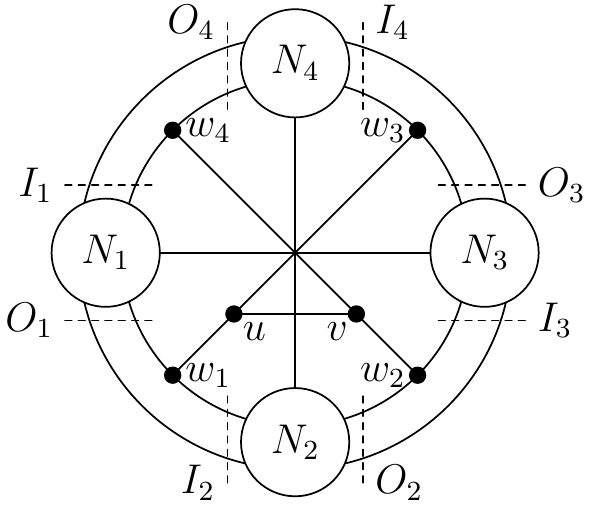}
		\captionof{figure}{The structure of Class 34-D snarks}
		\label{fig:34-4}
	\end{minipage}
	\hfill
	\begin{minipage}[c]{0.45\textwidth}
		\centering
		\includegraphics[]{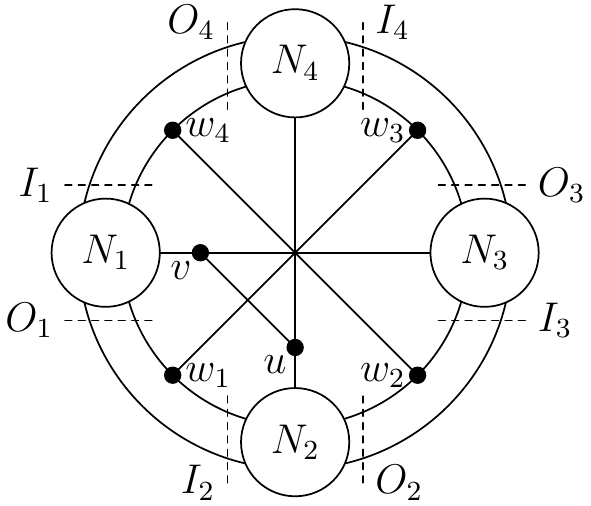}
		\captionof{figure}{The structure of Class 34-E snarks}
		\label{fig:34-5}
	\end{minipage}
\end{figure}

Take four negators $N_i(I_i, O_i, r_i)$ for $i \in
\{1,2,3,4\}$, connect them as shown in Figure~\ref{fig:34-4}
and denote the resulting graph by $G_{4}$. We prove that
$G_{4}$ is a snark. If it is not, then it has a colouring
$\phi$. Without loss of generality we may assume that $\phisum(O_1)
= 0$. Let $\phi(uw_1) = a \in \K$; then $\phisum(I_2) =
\phisum(O_1) + \phi(uw_1) = a \ne 0$, so $\phisum(O_2)=0$.
Analogously, we get $\phisum(O_3) = \phisum(O_4)=0$. By
repeatedly applying the parity lemma we get
$\phi(w_3u)=\phisum(I_4)=\phi(r_4)=\phi(r_2)=\phisum(I_2)=a$,
thus there is a colour conflict at $u$.

We have identified six snarks having this structure, with all
four negators taken from the Petersen graph. All of them are permutation snarks, which means that they admit a $2$-factor consisting of two induced cycles. For more information about permutation snarks one can consult
\cite{BGHM, MS:perm-short, MS:superp}.

\subsubsection*{Class 34-E}

The four negators $N_1$, $N_2$, $N_3$ and $N_4$ can also be
arranged in a different way shown in Figure~\ref{fig:34-5}. The
proof of uncolourability is similar to the one for the
previous class. There are six snarks of order $34$ having this
structure. The $12$ snarks constituting the classes 34-D and 34-E form a complete set of all cyclically $5$-connected permutation snarks of order $34$, see Brinkmann et al. \cite{BGHM}.

\subsubsection*{Class 34-F}

We take five even $(2,2,2)$-poles $H_1$, $H_2$, $H_3$, $H_4$
and $H_5$, and from them we construct a larger $(2,2,2)$-pole
$H_M$ as explained in Section \ref{sec:hexapole}. Since $H_M$ is an
even $(2,2,2)$-pole, $H_M*V_4$ is a snark. Class 34-F consists of
snarks of the form $H_M*V_4$ whose scheme can be seen in
Figure~\ref{fig:34-6c}. Since each even $(2,2,2)$-pole $H_i$, for $i \in \{2, 3, 4\}$, has its semiedges from one of its connectors
joined to a vertex, which produces a negator $N_i$, the structure of Class 34-F snarks can be described using two even $(2,2,2)$-poles $H_1$ and $H_5$ and three negators $N_2$, $N_3$ and $N_4$ as depicted in Figure~\ref{fig:34-6a}.

Among the studied snarks of order $34$, there are two snarks of
this class, both using even $(2,2,2)$-poles derived from the
Petersen graph (i.e. hexagons); or alternatively, two hexagons and three dyads.

\begin{figure}[b]
	\centering
	\begin{minipage}[c]{0.45\textwidth}
		\centering
		\includegraphics[]{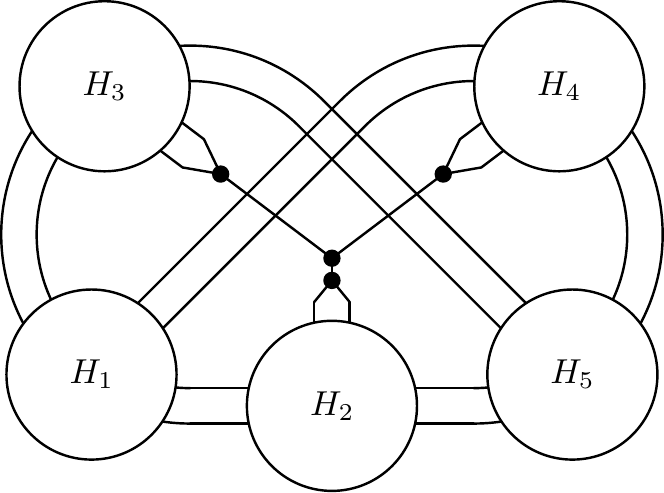}
		\captionof{figure}{The structure of Class 34-F snarks}
		\label{fig:34-6c}
	\end{minipage}
	\hfill
	\begin{minipage}[c]{0.45\textwidth}
		\centering
		\includegraphics[]{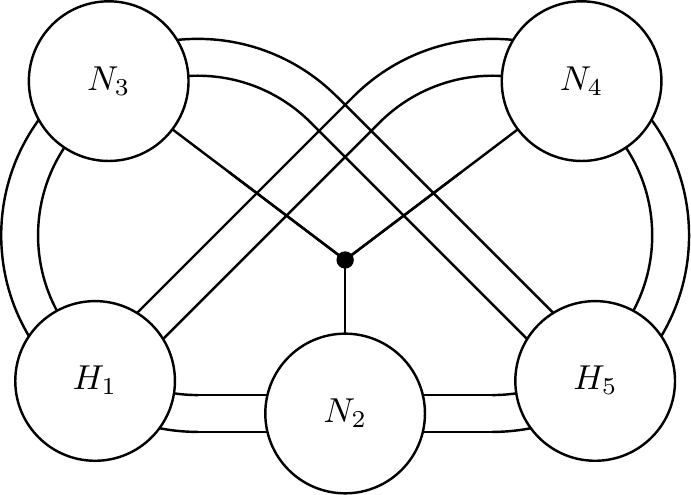}
		\captionof{figure}{Negators in Class 34-F snarks}
		\label{fig:34-6a}
	\end{minipage}
\end{figure}

\subsection*{Order 36}

Out of the $484$ bicrititical
cyclically $5$-connected snarks of this order, $396$ snarks
contain the $5$-pole $P_{NT}$ and all of them arise from the
flower snark $J_5$ by a substitution for $P_2$. In $69$ snarks we
have identified the $5$-pole $P_{TT}$ and all of them arise from the
$J_5$ by a substitution for $P_2$.

The $5$-pole $P_{NN}$ is contained in ten snarks. They are
constructed by substitution from two smaller snarks of
order $24$ whose structure can be seen in Figure~\ref{fig:24}.
Their structure is similar to Loupekine snarks; they only
contain two additional vertices $u$ and $v$ joined by an edge. The pair of $\{u,v\}$ is removable, so these snarks are reducible, although after a
substitution of $P_{NN}$ for a path $P_2$ the resulting snarks become
irreducible. In all ten cases, the $(2,2)$-pole $P_2$ used for the substitution contains one of the vertices $u$ or $v$.

\begin{figure}
	\centering
	\includegraphics[]{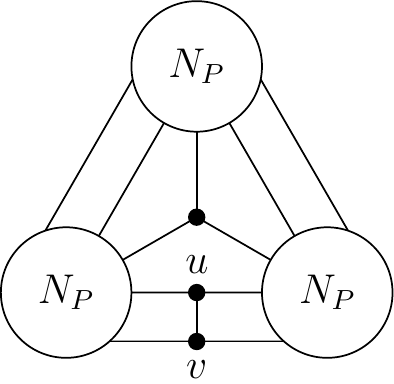}
	\caption{A scheme for two cyclically $5$-connected reducible snarks of
order $24$.}
	\label{fig:24}
\end{figure}

Excluding the flower snark $J_9$, there remain eight snarks of order 36, which fall into two classes.

\subsubsection*{Class 36-A}

Take three negators $N_i(I_i,O_i,r_i)$, where $i\in\{1,2,3\}$, with unordered connectors (as usual), arrange them cyclically and join each $O_i$ to $I_{i+1}$, where the indices are reduced modulo $3$. Subdivide one of the two edges connecting $N_i$ to $N_{i+1}$ with a new vertex $v_i$ and attach a dangling edge $e_i$ to $v_i$, thereby producing a cubic $6$-pole. Turn it into a $(3,3)$-pole $M_{24}(I,O)$ by distributing the semiedges into two ordered connectors $I = (r_1,r_2,r_3)$ and $O = (e_3,e_1,e_2)$; see Figure~\ref{fig:36-A}.

Consider the $(3,3)$-pole $Y_{3}(I_Y,O_Y)$
consisting of three copies of the Isaacs $(3,3)$-pole~$Y$, defined in
Section~\ref{sec:y}. We prove that $M_{24}$ and $Y_3$ are
colour-disjoint, implying that $M_{24}*Y_3$ is a snark.

\begin{figure}[b]
	\centering
	\includegraphics[]{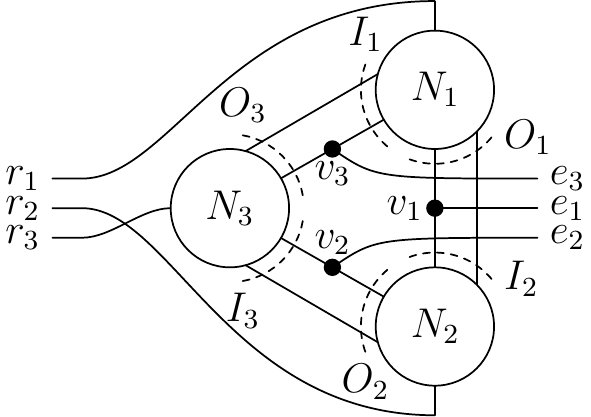}
	\caption{A $(3,3)$-pole $M_{24}$ contained in class 36-A snarks}
	\label{fig:36-A}
\end{figure}

Let $\phi$ be a colouring of
$M_{24}$ and assume that $\phi(I) = (a,b,c) \in \mathbb{K}^3$. It is not difficult to see that for the
connectors of the three negators we have either $\phisum(I_1) = \phi(I_2) =
\phisum(I_3) = 0$ or $\phisum(O_1) = \phisum(O_2) =
\phisum(O_3) = 0$; the proof is similar to that for
Class 34-D. If the former case occurs, then $\phisum(O_1) =
\phi(r_1) = a$ and the parity lemma implies that $\phi(e_3) =
\phisum(O_1)  + \phisum(I_2) = a$. In a similar manner we get $\phi(e_1)=b$ and  $\phi(e_2)=c$, 
whence  $\phi(O) = (b,c,a)$. If the latter case occurs, we similarly get that $\phi(O) = (c,a,b)$. 
If follows that each element of $\col(M_{24})$ is a $6$-tuple $(a, b, c, a', b', c')$ of elements of $\mathbb{K}$ where $(a', b', c')$ arises from $(a, b, c)$ by cyclically permuting its entries. (Obviously, certain $6$-tuples of this form may be absent in $\col(M_{24})$ when some of the negators are imperfect.)
All of such $6$-tuples are also contained in $\col(Y_2)$, except $(a, a, a, a, a, a)$. However, $(a, a, a, a, a, a) \notin \col(Y_3)$. Since $Y_2$ and $Y_3$ are colour-disjoint, so are $M_{24}$ and $Y_3$. Therefore $M_{24}*Y_3$ is a snark. With the exception of the Isaacs flower snarks this is the first family of snarks known to us where ordered connectors emerge in explaining uncolourability.

There are six snarks of order $36$ belonging to Class 36-A. 

\subsubsection*{Class 36-B}

Take five negators $N_i(I_i,O_i,r_i)$, where $i \in \{1,2,3,4,5\}$, arrange them in a cyclic manner,
and for $1\le i\le 4$ join $O_i$ to $I_{i+1}$. Let $I_1=\{i_1,i_2\}$ and $O_5=\{o_1,o_2\}$. Join $i_1$ to $o_1$, subdivide the resulting edge with a new vertex $v$, and attach to $v$ a dangling edge with semiedge $r_0$. Finally, join $r_0$ to $r_3$, $r_1$ to $r_5$, $i_2$ to $r_4$, and $o_2$ to $r_2$ to obtain a cubic graph denoted by $G$, see Figure~\ref{fig:36-2}.

\begin{figure}[h]
	\centering
	\includegraphics[]{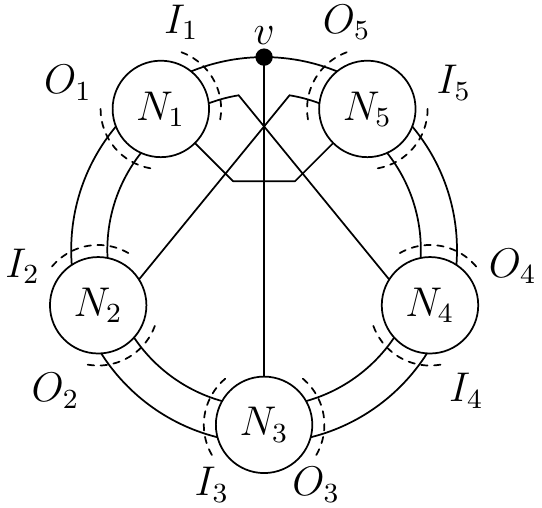}
	\caption{The structure of Class 36-B snarks}
	\label{fig:36-2}
\end{figure}

We show that $G$ is a snark. Suppose to the contrary that $G$ has
a $3$-edge-colouring~$\phi$. Since $N_3$ is a negator, 
the flow through one its connectors is nonzero. If $\phisum(O_3)\ne 0$, then there exists an element $a\in\mathbb{K}$ such that $\phisum(O_3)=a$. It follows that
$\phisum(O_4) = 0$ and $\phi(r_4) = a$. Moreover, $\phisum(I_2)\ne 0$, so $\phisum(I_1)=0$. Now, the edge leading from $N_4$ to $N_1$
through $I_1$ has colour $\phi(r_4)=a$, which implies that the edge 
connecting $N_1$ to $v$ through $I_1$ has the same colour. However, the edge
leading from $N_3$ to $v$ has colour $\phi(r_3)=a$, too, and this is a contradiction. If $\phisum(I_3)\ne 0$, a contradiction is derived similarly.

This family of snarks, built from five negators, can easily be generalised to a similar family built from a larger number negators, namely $4k+1$, where $k\ge 1$. In this case 
we connect the vertex $v$ to $N_1$ through $I_1$, to $N_{4k+1}$ through $O_{4k+1}$, and to $N_{2k+1}$ by using $r_{2k+1}$. We further join $i_2\in I_1$ with $r_{2k+2}$ and $o_2\in O_{2k+1}$ with $r_{2k}$. The proof that the resulting graph $G$ is a snark is similar to the one above. Clearly, $G$ can be made cyclically $5$-edge-connected by identifying pairs of residual semiedges appropriately.

Among the snarks of order $36$ we have identified two snarks
of this type.

\section{Strictly critical snarks}
\label{sec:notbicritical}

We conclude our investigation of small cyclically $5$-connected
critical snarks by turning our attention to those that are
strictly critical. In Section~\ref{sec:reducibility} we have
explained that strictly critical snarks are of special
interest, partially due to the fact that some of them can be
derived from noncritical snarks.

By Theorem~6.1 in \cite{Chladny-Skoviera-Factorisations}, a
strictly critical snark of order $n$ exists if and only if
$n\ge 32$. Among the snarks of order at most $36$, there are
only $84$ cyclically $5$-connected strictly critical snarks,
all having $36$ vertices. Of those, $77$ arise by an
NT-expansion from a non-critical snark of order $20$. The
structure of the remaining seven snarks is very similar: they
all arise from the Petersen graph by a substitution of a suitable 
proper $(2,2,2)$-pole $M$ for the $(2,2,2)$-pole $V_4$. The multipole $M$ is constructed as follows. Let $T_1$,
$T_2$, and $T_3$ be three perfect proper $(2,3)$-poles. Add
three new vertices and produce $M$ by connecting them to $T_1$,
$T_2$, and $T_3$ in the manner indicated in
Figure~\ref{fig:ttt-sc}. For the reasons similar to those valid
for the TTT $(2,2,2)$-pole (Section~\ref{sec:ttt}), the
multipole $M$ is colour-equivalent to the $(2,2,2)$-pole $V_4$
consisting of one vertex and its three neighbours. However, if
we remove from $M$ any two of the three new vertices, the
colouring set will not change. Consequently, no snark
containing $M$ is bicritical. The $(2,2,2)$-pole $M$ contained in all of the seven remaining snarks consist of three triads.

\begin{figure}
	\centering
	\includegraphics[]{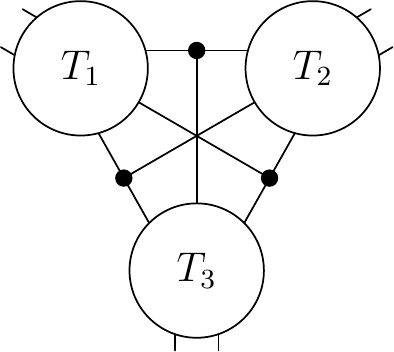}
	\caption{A $(2,2,2)$-pole contained in seven cyclically $5$-connected
	strictly critical  snarks of order $36$}
	\label{fig:ttt-sc}
\end{figure}

\section{An infinite family of bicritical snarks}
\label{sec:newfamily}

In the previous sections we have presented a number of new
constructions of snarks which mimic the structure of small
snarks. We now deal with the problem of their bicriticality.

As mentioned in the discussion of the results for order $36$,
the base components used in constructions of bicritical snarks
need not come from bicritical snarks. There are ten bicritical snarks of order $36$ that arise from noncritical snarks of order $24$ by a substitution of $P_{NN}$ for $P_2$. Obviously, it is necessary to
eliminate all pairs of removable vertices from the used
construction blocks, but in general we do not understand under
what conditions it would be sufficient.

On the other hand, taking all construction blocks from
bicritical snarks --- a slightly stronger requirement than just
the absence of removable pairs of vertices --- does not ensure
the resulting snark to be bicritical. For instance, a proper $(2,3)$-pole constructed even from a bicritical snark can be uncolourable (see results for order $26$). Such a proper $(2,3)$-pole cannot be used in any substitution which should yield a bicritical snark. In general, we do not know much about the circumstances under which proper $(2,3)$-poles are colourable or perfect. This is a significant difference from negators whose colouring properties are characterised by Theorem \ref{thm:negatos}.

The purpose of this section is to illustrate that imposing
certain additional requirements on the construction blocks can
assure bicriticality of the resulting snark in a fairly general
setting. The described requirements are not overly restrictive
and it is even possible that most construction blocks taken
from bicritical snarks (of any given order) satisfy them.

For our demonstration, we have chosen snarks constructed by an
NN-substitution (see Section~\ref{subs:NN}). This is perhaps  the simplest  class of 
infinite classes which we have described, nevertheless, a similar approach works for the
rest of them as well. We view the snarks arising by an $NN$-substitution as consisting of three
negators $N_i(I_i, O_i, r_i) = \Neg(G_i; u_i, v_i)$ for $i \in
\{1,2,3\}$ arranged along a circle with an additional vertex
attached to the residual semiedges (see Figure~\ref{fig:nnn2}).
We denote the resulting graph by $\NNN(N_1, N_2, N_3)$.

\begin{figure}[h!]
	\centering
	\includegraphics{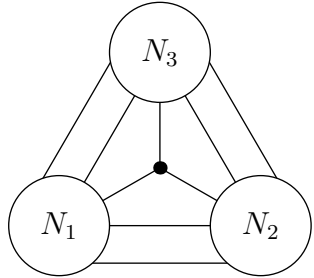}
	\caption{A schematic drawing of a snark $\NNN(N_1, N_2, N_3)$.}
	\label{fig:nnn2}
\end{figure}

As discussed in the beginning of this section, our restrictions
have to be imposed on the construction blocks, not on the snarks
they originate from. We will call a negator $N = \Neg(G; u, v)$
\emph{bicritical} if the multipole $G - \{x, y\}$ is colourable
for every two distinct vertices $x,\, y \in V(N)$. The
following proposition shows that this property is necessary.

\begin{proposition}\label{prop:bicrit-negators}
If $N_1$, $N_2$, and $N_3$ are negators such that $G=\NNN(N_1,N_2,N_3)$ is
a bicritical snark, then all of them  bicritical.
\end{proposition}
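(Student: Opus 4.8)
The plan is to prove the contrapositive on a per-negator basis: assume some $N_i$, say $N_1 = \Neg(G_1; u_1, v_1)$, is not bicritical, which means there exist two distinct vertices $x, y \in V(N_1)$ such that the multipole $N_1 - \{x,y\}$ is uncolourable. I will then show that the corresponding pair $\{x, y\}$ in the ambient snark $G = \NNN(N_1, N_2, N_3)$ is a removable pair of distinct vertices, contradicting bicriticality of $G$. The key point is that $G$ is obtained by a junction of $N_1$ (through its connectors $I_1$, $O_1$, and the residual semiedge $r_1$) with the rest of the graph, call it $L$, which consists of $N_2$, $N_3$, and the extra vertex $v$ attached to $r_1, r_2, r_3$. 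Thus $G - \{x,y\}$ is the junction $(N_1 - \{x,y\}) * L$, and since $N_1 - \{x,y\}$ is uncolourable, so is $G - \{x,y\}$.

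First I would make precise the decomposition $G = N_1 * L$: here $L$ is the $(2,2,1)$-pole whose three connectors are exactly the mates of $I_1$, $O_1$, and $r_1$ under the junctions defining $\NNN$. By construction of the NN-substitution (Section~\ref{subs:NN}) each vertex of $N_1$ remains a vertex of $G$ and no edge of $N_1$ is deleted, so removing $x$ and $y$ from $G$ affects only the $N_1$-side and leaves $G - \{x,y\} = (N_1 - \{x,y\}) * L$. Since any colouring of this junction would restrict to a colouring of $N_1 - \{x,y\}$, and the latter has none, the graph $G - \{x,y\}$ is uncolourable. Hence $\{x,y\}$ is a removable pair of distinct vertices of $G$, so $G$ is not bicritical. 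Running this argument for $i = 1, 2, 3$ (the roles of the three negators are symmetric in the relevant respect, namely each is glued to $G$ only through its own connectors) shows that if $G$ is bicritical then every $N_i$ must be bicritical, which is the statement.

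One technical point to handle carefully: I should check that $x$ and $y$, viewed as vertices of $G$, really are \emph{two distinct vertices of the snark} $G$ and not, say, vertices that were suppressed or identified in forming $G$ — but in the NN-construction all vertices of the negators survive intact in $G$, so this is immediate. A second point worth a sentence is the edge case where $x$ or $y$ happens to be an endvertex of a dangling edge of $N_1$ that becomes a link of $G$ after the junction; this causes no difficulty, since deleting such a vertex from $G$ still yields precisely $(N_1 - \{x,y\}) * L$ with the corresponding dangling edges of $L$ now hanging free, and uncolourability is inherited exactly as before. I expect the only mild obstacle to be bookkeeping about what ``$N_1 - \{x,y\}$ is colourable'' means for a multipole (colourable as defined in Section~2, i.e.\ admits a proper $3$-edge-colouring of its links and dangling edges), and confirming that this notion is the one that interacts correctly with junctions; once that is pinned down, the argument is essentially a one-line consequence of ``a junction containing an uncolourable part is uncolourable,'' which was already used repeatedly in Section~\ref{sec:multipoles}.
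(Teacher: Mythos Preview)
Your argument has a genuine gap stemming from a misreading of the definition of a \emph{bicritical negator}. In the paper, a negator $N_i=\Neg(G_i;u_i,v_i)$ is called bicritical if the multipole $G_i-\{x,y\}$ is colourable for every pair $x,y\in V(N_i)$, where $G_i$ is the \emph{original snark} from which $N_i$ was obtained by removing a path of length~$2$. You instead take the condition to be that $N_i-\{x,y\}$ is colourable. These are not the same: since $G_i-\{x,y\}=(N_i-\{x,y\})*P_2$, colourability of $G_i-\{x,y\}$ requires a colouring of $N_i-\{x,y\}$ whose restriction to the five original semiedges lies in $\col(P_2)$, which is strictly more than mere colourability of $N_i-\{x,y\}$.

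Consequently, your contrapositive proves only the weaker statement that $N_i-\{x,y\}$ is colourable for every pair, not that $G_i-\{x,y\}$ is. The missing ingredient is precisely the colour containment $\col\bigl(\NN(N_j,N_k)\bigr)\subseteq\col(P_2)$ established in Section~\ref{subs:NN}. With it, the paper argues directly: given $x,y\in V(N_i)$, bicriticality of $G$ gives a colouring of $G-\{x,y\}=(N_i-\{x,y\})*\NN(N_j,N_k)$; the boundary values on the five connector semiedges lie in $\col(\NN)\subseteq\col(P_2)$, so one may replace $\NN(N_j,N_k)$ by the colour-equivalent $P_2$ and obtain a colouring of $G_i-\{x,y\}$. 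Your decomposition $G=N_i*L$ is fine, but you must invoke this containment (equivalently, perform the substitution of $P_2$ for $\NN$) rather than stop at the uncolourability of $N_i-\{x,y\}$.
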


\begin{proof}
Consider a negator $N_i = N(G_i; u_i, v_i)$
and choose an arbitrary pair $\{x,y\}$ of its vertices. Since $G$ is a
bicritical snark, the multipole $G-\{x, y\}$ is colourable. By replacing
the $(2,2,1)$-pole $\NN(N_2, N_3)$ with the colour-equivalent
$(2,2,1)$-pole $P_2$,
we get the multipole $G_i - \{x,y\}$ which is
colourable as well. It follows that each negator $N_i$ is bicritical, as
claimed.
\end{proof}

We do not know whether the property stated in
Proposition~\ref{prop:bicrit-negators} --- or a stronger
property that all three negators are taken from bicritical
snarks --- is sufficient. We have constructed all the negators
from bicritical cyclically $5$-connected snarks on up to $30$
vertices. Using them we have created all possible snarks of
class NNN that contain at most two different negators; there
are approximately $600,\!000$ such snarks. With the help of a
computer we have verified that all of them are bicritical.

In order to specify a sufficient condition, we introduce the
following rather technical property of negators. We will
henceforth assume that the negators in question are constructed from snarks of girth at
least $5$, in order to avoid ambiguity of notation and certain marginal
cases.

\begin{definition}\label{def:goodnegator}
Let $G$ be a snark of girth at least $5$.
A negator $N(\{i_1,i_2\}, \{o_1, o_2\}, r) = \Neg(G; u, v)$ is called
\emph{feasible} if
it is bicritical and possesses the following properties:
\begin{itemize}
\item[(i)] For every pair of vertices $\{x,y\}$ where $x \in
    \{u, v\}$ and $y \in V(N)$ and for any two dangling
    edges $e$ and $f$ of the $6$-pole $G -\{x,y\}$ formerly
    incident with $x$ there exists a colouring $\varphi$ of
    the $6$-pole $G-\{x,y\}$ such that
    $\varphi(e)=\varphi(f)$.

\item[(ii)] For any vertex $y \in V(N)$ there exist
    colourings $\phi$ and $\psi$ of the $8$-pole $N-y$
    such that $\phi(i_1, i_2, o_1, o_2, r) =
    (a,a,b,b,a)$ and $\psi(i_1,  i_2, o_1, o_2, r) =
    (a,a,b,b,b)$ for some $a, b \in \K$ with $a \ne b$.
	\end{itemize}
\end{definition}

Here we regard each multipole $M = G - \{x, y\}$ always as 
a $6$-pole --- if the vertices $x$ and $y$ are adjacent,
we keep the edge $xy$ as an isolated edge in $M$. If $G$ is
bicritical, Property~(i) is easily satisfied.
Indeed, the $6$-pole $M$ is colourable because $G$ is
bicritical; by the parity lemma, the two dangling edges formerly
incident with $x$ have the same colour which can be assigned to
the isolated edge in $M$ so that all three semiedges formerly
incident with $x$ have the same colour for a suitable colouring of
$M$.

If $G$ is bicritical, $M$ is also colourable for each pair of
non-adjacent vertices $x$ and $y$ of $G$. From the parity lemma we
deduce that in every $3$-edge-colouring of $M$ two of the
semiedges formerly incident with $x$ must have the same colour.
Property~(i) requires more: the two semiedges having the
same colour can be arbitrarily prescribed. We have tested all
cyclically $5$-connected bicritical snarks of order up to $36$
and only six of them
support a negator violating Property (i). We describe them in
Section~\ref{sec:s36}.

From the definition, it is not clear whether a feasible negator has to be perfect or not. Amongst the negators constructed from nontrivial snarks up to order $28$, there is no example of an imperfect feasible negator.

Property~(i) of feasible
negators is related to a similar
technical property introduced by Chladný and Škoviera
\cite{Chladny-Skoviera-Factorisations} in their
study of criticality and bicriticality of dot products of
snarks.

\begin{definition}[\cite{Chladny-Skoviera-Factorisations}]
A pair $\{e,f\}$ of edges of a snark $G$ is called
\textit{essential} if it is non-removable and, moreover, if for
every $2$-valent vertex $v$ of $G-\{e,f\}$, the graph obtained
from $G-\{e,f\}$ by suppressing $v$ is colourable.
\end{definition}

\begin{lemma}
\label{lemma:essential-feasible}
Let $N = \Neg(G; u, v)$ be a snark, $x \in \{u, v\}$, and $y \in V(N)$. Assume that for every edge $e$ incident with $x$ in $G$ there is an edge $f$ incident with $y$ in $G$ such that $\{e, f\}$ is an essential pair of edges in $G$. Then the negator $N$ satisfies Property~{\rm (i)} of Definition~\ref{def:goodnegator}. 
\end{lemma}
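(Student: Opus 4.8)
The plan is to transfer a colouring across the essential pair of edges furnished by the hypothesis. Write $M:=G-\{x,y\}$ for the $6$-pole of Property~(i), keeping the edge $xy$ (if it exists) as an isolated edge of $M$; note that $x\ne y$ since $x\in\{u,v\}$ and $y\in V(N)$, although $x$ may be adjacent to $y$. Among the three edges of $G$ at $x$, let $a$ and $b$ be the two that are dangling edges of $M$ --- these are the edges denoted $e,f$ in Property~(i), renamed here --- and let $c$ be the remaining one; in $M$ the edge $c$ is dangling at its other end $z$, unless $c=xy$, in which case $c$ is the isolated edge of $M$. Since $a,b$ will be an arbitrary such pair, it suffices to produce a colouring $\varphi$ of $M$ with $\varphi(a)=\varphi(b)$.

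First I would apply the hypothesis to the edge $c$, obtaining an edge $f$ incident with $y$ with $\{c,f\}$ essential in $G$; one checks readily that $f\notin\{a,b,c\}$. In $G-\{c,f\}$ the vertex $x$ has degree $2$ --- it has lost exactly $c$ and is not an end of $f$ --- so by the very definition of an essential pair, the graph $G^\ast$ obtained from $G-\{c,f\}$ by suppressing $x$ is colourable. Concretely, $G^\ast$ arises from $G$ by deleting $c$ and $f$ and replacing the path $a_1xa_2$ (where $a_1,a_2$ are the ends of $a,b$ other than $x$) by a single edge $\bar e=a_1a_2$; this is a genuine graph because $a$ and $b$ are not parallel, by the girth hypothesis. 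I would fix a colouring $\psi$ of $G^\ast$ and put $\gamma:=\psi(\bar e)$.

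Next I would define $\varphi$ on $E(M)=E(G)$ by copying $\psi$ wherever possible: set $\varphi(g):=\psi(g)$ for each $g\notin\{a,b,c,f\}$, set $\varphi(a):=\varphi(b):=\gamma$, and finally choose the two remaining colours $\varphi(c),\varphi(f)\in\K$. The point is that $c$ and $f$ each have one end at a vertex deleted from $M$ (namely $x$, resp.\ $y$), so the only properness constraints they carry are at their other ends $z$ and $w$; and both $z$ and $w$ have degree at most $2$ in $G^\ast$, so at most two colours are already fixed on the other edges at each of them. Hence $\varphi(c)$ and $\varphi(f)$ can be chosen so that $\varphi$ is proper at $z$ and at $w$ (and when $c=xy$ the isolated edge $c$ carries no constraint at all, so its colour is free). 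At $a_1$ and $a_2$, properness holds because $\varphi(a)=\gamma=\psi(\bar e)=\varphi(b)$ while the remaining incident edges keep their $\psi$-colours, and at every other vertex of $M$ the incident edges and colours coincide with those in $G^\ast$, so properness is inherited from $\psi$. This gives a colouring $\varphi$ of $M$ with $\varphi(a)=\varphi(b)$, which is exactly Property~(i).

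The main obstacle I anticipate is not the idea but the bookkeeping in the last step: one must verify that $\varphi(c)$ and $\varphi(f)$ really can always be selected, which requires inspecting the coincidences that girth~$5$ does not immediately exclude --- notably $z=w$ (then $c$ and $f$ meet at a single $1$-valent vertex of $G^\ast$, the easiest case) and $z\in\{a_1,a_2\}$ or $w\in\{a_1,a_2\}$ (then one of $c,f$ has an end where the colour $\gamma$ already appears). The girth assumption rules out the genuinely bad configurations (such as a digon at $x$, or $z=w$ lying in $\{a_1,a_2\}$), and in every surviving case the colour count above still leaves a valid choice, so the verification is routine.
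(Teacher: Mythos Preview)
Your proof is correct and follows the same route as the paper's: apply the essential-pair hypothesis to the third edge $c$ at $x$, suppress $x$ in $G-\{c,f\}$ to obtain a colouring, then split the suppressed edge back into $a,b$ and delete $y$. You are simply more explicit than the paper about extending the colouring to the two dangling edges $c$ and $f$ (a step the paper leaves implicit); note, incidentally, that the coincidence $z\in\{a_1,a_2\}$ you list cannot actually occur, since it would make $c$ parallel to $a$ or $b$ at $x$.
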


\begin{proof}

Let $e_1$, $e_2$, and $e_3$ denote the edges incident with $x$ listed in  an arbitrary order. 
From our assumption it follows that there is an edge $f$ incident with $y$ such that $\{e_1, f\}$ is an essential pair of edges in $G$. Therefore, the multipole $G - \{e_1, f\}$ with $x$ suppressed has a
colouring $\varphi$. If we cut the edge resulting from the
suppression of $x$ into two dangling edges corresponding to
$e_2$ and $e_3$ and remove $y$, we get the $(3,3)$-pole $G -
\{x, y\}$ with a colouring in which the dangling edges
corresponding to $e_2$ and $e_3$ have the same colour, where $e_2$ and $e_3$ can be chosen arbitrarily. Thus the
negator $N$ satisfies Property~(i) of Definition~\ref{def:goodnegator}.
\end{proof}

We have tested negators for Property~(ii). Although
there are many negators violating (ii), more than $90\%$ of all
negators created from bicritical cyclically $5$-connected snarks with at most $34$
vertices are feasible. For instance, for every such snark $G$ of order $34$, there are, if we
ignore possible isomorphisms, $102$ possible negators which can be constructed
from $G$; the number of the feasible ones among them ranges from $74$ to $102$.

\begin{theorem}
If $N_1$, $N_2$, and $N_3$ are any three feasible perfect
negators, then $G=\NNN(N_1, N_2, N_3)$ is a~bicritical snark.
\end{theorem}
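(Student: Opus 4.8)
The plan is to verify, in turn, that $G$ is a snark and that every pair of distinct vertices of $G$ is non-removable. For the first point, recall from Section~\ref{subs:NN} that since $N_1$ and $N_2$ are perfect, the $(2,2,1)$-pole $\NN(N_1,N_2)$ is colour-equivalent to $P_2$; hence any $3$-edge-colouring of $G=\NN(N_1,N_2)*N_3$ would restrict on the $\NN(N_1,N_2)$-side to an element of $\col(\NN(N_1,N_2))=\col(P_2)$, and therefore extend to a colouring of $P_2*N_3$, which (up to isomorphism) is the snark $G_3$ with $N_3=\Neg(G_3;u_3,v_3)$ --- a contradiction. For bicriticality it suffices to colour $G-\{x,y\}$ for every pair of distinct vertices $x,y$. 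Writing $V(G)=V(N_1)\sqcup V(N_2)\sqcup V(N_3)\sqcup\{w\}$, where $w$ is the hub incident with the three residual edges, I split into three cases: (A) both $x,y$ lie in the same negator; (B) one of $x,y$ equals $w$; (C) $x$ and $y$ lie in two different negators.

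Case (A) uses only bicriticality of the negators and perfectness of the other two. If $x,y\in V(N_1)$ (the other negators being handled identically), then $N_2$, $N_3$ and $w$ are untouched, and the same kind of decomposition gives $G-\{x,y\}=\NN(N_2,N_3)*(N_1-\{x,y\})$; since $\NN(N_2,N_3)$ and $P_2$ have the same colouring set, this multipole is colourable if and only if $P_2*(N_1-\{x,y\})$ is, and the latter is (isomorphic to) $G_1-\{x,y\}$, which is colourable because the negator $N_1$ is bicritical. In case (B), say $x=w$ and $y\in V(N_3)$; Property~(ii) of Definition~\ref{def:goodnegator} supplies a colouring of $N_3-y$ in which both $2$-connectors carry zero flow, i.e.\ $I_3\mapsto(a,a)$ and $O_3\mapsto(b,b)$ with $a\ne b$. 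Since $w$ is gone, the residual edges dangle freely, and it remains to extend over the intact perfect negators $N_1,N_2$ while matching colours along the junctions $O_1$--$I_2$, $O_2$--$I_3$, $O_3$--$I_1$; because $\col(N_i)$ is the full set $\{(x,x,c,d,c+d),(c,d,x,x,c+d)\in\K^5 : c\ne d\}$, a compatible choice is read off directly ($O_2=(a,a)$ forces a distinct pair on $I_2=O_1$, and $I_1=(b,b)$ then pins down a perfect colouring of $N_1$).

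Case (C) is the technical heart. Say $x\in V(N_1)$ and $y\in V(N_2)$; now $N_3$ and $w$ survive intact, so any colouring of $G-\{x,y\}$ must obey the parity lemma at the full negator $N_3$ (exactly one of $\phisum(I_3),\phisum(O_3)$ vanishes) and must give three pairwise distinct colours at $w$. The obstruction is that a Property~(ii) colouring of $N_1-x$ puts a \emph{repeated} pair on $I_1$, whereas a colourable intact $N_3$ forces a \emph{distinct} pair on $I_1=O_3$. This is resolved via Property~(i): applied to the pair $\{v_1,x\}$, with the two $O_1$-edges taken as the equal-coloured danglers, it yields a colouring $\chi$ of the $6$-pole $G_1-\{v_1,x\}$ whose restriction to $N_1-x$ has $\phisum(O_1)=0$ (a repeated pair, say $(\nu,\nu)$) and $\phisum(I_1)=\varphi(u_1w_1)\ne 0$ (a distinct pair); moreover, since the three edge-colours at $w_1$ are distinct, the residual colour $\chi(r_1)=\varphi(w_1t_1)$ differs from $\phisum(I_1)$. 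One then builds the colouring of $G-\{x,y\}$ around $\chi$: match the distinct pair on $I_1$ through $O_3$--$I_1$ to a perfect colouring of $N_3$ (which forces $\phisum(I_3)=0$ and $\phi(r_3)=\phisum(I_1)$), match the repeated pair on $O_1$ through $O_1$--$I_2$ to a Property~(ii) colouring of $N_2-y$, and treat the common colour $\mu$ of the $O_2$--$I_3$ junction as a free parameter (constrained only by $\mu\ne\nu$, as required by Property~(ii) for $N_2-y$). Finally, the two colourings $\phi,\psi$ of $N_2-y$ furnished by Property~(ii) differ exactly in $\phi(r_2)\in\{a,b\}$, and this twofold freedom together with the free choice of $\mu$ makes $\phi(r_1),\phi(r_2),\phi(r_3)$ the three distinct colours at $w$; a short check of which colour coincides with $\nu$ confirms a valid choice always exists.

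The step I expect to demand the most care is case~(C): it is the only place where all four hypotheses on the negators --- bicriticality, Property~(i), Property~(ii), and perfectness --- are genuinely needed, and fitting the local colourings of $N_1-x$, $N_2-y$ and $N_3$ together simultaneously around the cyclic chain of junctions and at the hub is where the colour bookkeeping is least mechanical.
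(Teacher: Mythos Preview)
Your proof is correct and follows essentially the same route as the paper's own argument. The three cases (both vertices in one negator; one vertex is the hub; vertices in two different negators) are exactly the paper's Cases~1, 3, and~2, and in each you invoke the same ingredient the paper does: bicriticality of the negator together with colour-equivalence $\NN\leftrightarrow P_2$ in Case~(A), Property~(ii) on the damaged negator plus perfectness of the other two in Case~(B), and Property~(i) on $G_1-\{v_1,x\}$ to force $\phisum(O_1)=0$ followed by Property~(ii) on $N_2-y$ and perfectness of $N_3$ in Case~(C). The only cosmetic differences are the ordering of cases and that in Case~(C) you extend over $N_3$ before $N_2-y$ whereas the paper glues $N_1-x$ and $N_2-y$ first and extends over $N_3$ last; the bookkeeping is equivalent. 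One small remark: Property~(ii) only guarantees the pattern $(a,a,b,b,\cdot)$ for \emph{some} $a\ne b$, so your phrase ``treat $\mu$ as a free parameter'' relies implicitly on permuting the colours of $\K$ --- which is legitimate, and your closing case check on $\nu$ handles the residual constraint.
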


\begin{proof}
For $j \in \{1,2,3\}$ let  $N_j= \Neg(G_j; u_j, v_j)$.
Let $I_j= \{i_j, i_j'\}$ and $O_i = \{o_j, o_j'\}$ be the connectors of $N_j$, let
$r_j$ be its residual edge, and let $w_j$ be the the common neighbour of $u_j$ and $v_j$ in $G_j$.
Choose any two distinct vertices  $x$ and $y$ of $G$. We
show that the multipole $G - \{x, y\}$ is colourable, implying
that $G$ is bicritical. 	
	
\medskip
	
\noindent\underline{Case 1.} If both vertices $x$ and $y$
belong to the same negator $N_j$, we can replace the other two
negators with the colour-equivalent path $P_2$ (path of length
two), completing the negator $N_j$ into a snark $G_j$. Since
$N_j$ is bicritical, $G_j - \{x, y\}$ is colourable, hence so
is $G - \{x, y\}$.
	
\medskip
	
\noindent\underline{Case 2.} Assume that $x$ and $y$ belong to
different negators, say, $x \in V(N_1)$ and $y \in V(N_2)$.
Remove the vertices $v_1$ and $x$ from the snark $G_1$ and
denote the semiedges formerly incident with $v_1$ by $e_1$,
$e_2$, and $e_3$ in such a way that  $e_3$ is incident with $w_1$. According
to Property~(i) of the feasible negator $N_1$, there exists a
colouring $\varphi_1$ of $G_1 - \{v_1, x\}$ such that
$\varphi_1(e_1) = \varphi_1(e_2)$. Let $a = \varphi_3(e_3)$ and
$b = \varphi_1(u_1w_1)$; obviously $a\neq b$ (see
Figure~\ref{fig:nnn-col1}). We can simply restrict the
colouring $\varphi_1$ to a colouring of the multipole $N_1 - x$
for which $\varphi_{1*}(I_1) = \varphi_1(u_1w_1) = b$,
$\varphi_1(r_1) = a + b$ and $\varphi_{1*}(O_1) = 0$.
	
	\begin{figure}[!h]
		\centering
		\begin{minipage}[c]{0.45\textwidth}
			\centering
			\includegraphics[]{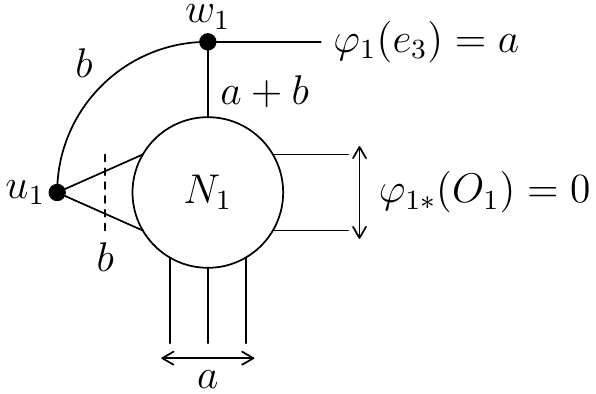}
			\captionof{figure}{The colouring $\varphi_1$ of $N_1$.}
			\label{fig:nnn-col1}
		\end{minipage}
		\begin{minipage}[c]{0.45\textwidth}
			\centering
			\includegraphics[]{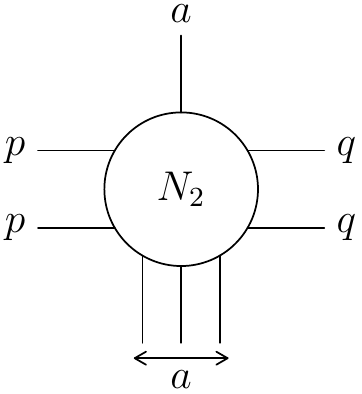}
			\captionof{figure}{The colouring $\varphi_2$ of $N_2$.}
			\label{fig:nnn-col2}
		\end{minipage}
	\end{figure}
	
Let $p = \varphi_1(o_1)$. The negator $N_2$ is feasible, so
according to Property (ii) there exists a colouring $\varphi_2$ of $N_2- y$ such that
$\varphi_2(r_2) = a$, $\varphi_2(i_2) = \varphi_2(i_2') = p$, and $\varphi(o_2) = \varphi(o_2') = q$. If $p = a$, then $q = b$; if $p \ne a$ then $q = a$. The colourings $\varphi_1$ and $\varphi_2$ are compatible and can be glued together to form a colouring $\varphi$ of the
multipole $M = \NN(N_1, N_2) - \{x,y\}$ depicted in
Figure~\ref{fig:nnn-col}. Since $N_3$ is perfect, the colouring
$\varphi$ of $M$ can be extended to the entire $6$-pole 
$G - \{x,y\}$. 	
	
	\begin{figure}
		\centering
		\includegraphics[]{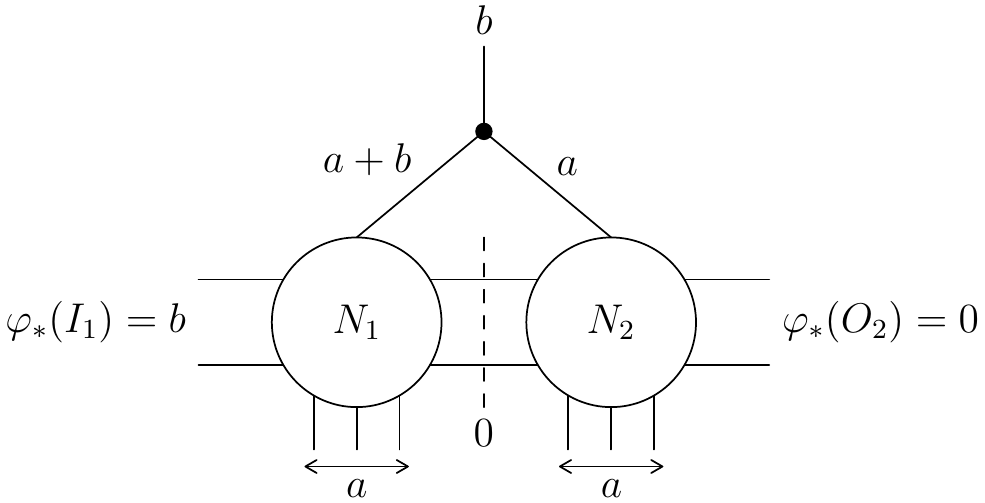}
		\caption{Colouring of the $11$-pole $\NN(N_1, N_2) - \{x, y\}$}
		\label{fig:nnn-col}
	\end{figure} 	
	
\medskip 	\noindent\underline{Case 3.} If one of the
vertices $x$ and $y$, say $y$,  does not belong to any of the negators,
it must be the vertex attached to the residual semiedges of
$N_1$, $N_2$, and $N_3$. Let $x \in V(N_1)$. Property~(ii) applied to 
the feasible negator $N_1$ guarantees there exists a colouring
$\varphi_1$ of $N_1 - x$ such that $\varphi_1(i_1) =
\varphi_1(i_1') = a$ and $\varphi_1(r_1) = \varphi(o_1) =
\varphi_1(o_1') = b \ne a$. Since $N_2$ is perfect, it admits a
colouring $\varphi_2$ such that $\varphi_{2*}(I_2) = 0$ and
$\varphi_{2*}(O_2) \neq 0$. Finally, $N_3$ is also perfect, and
hence it admits a colouring $\varphi_3$ that is compatible with
both $\varphi_1$ and $\varphi_2$. The partial colourings
$\varphi_1$, $\varphi_2$, and $\varphi_3$ can be combined to
colour the entire $6$-pole $G - \{x, y\}$. 
This completes the proof of the theorem.
\end{proof}

In order to create an infinite class of bicritical snarks, we
need an infinite family of feasible negators. As one could
expect, all the negators constructed from Isaacs snarks are
feasible. Our proof is based on the fact, established in  \cite{Chladny-Skoviera-Factorisations}, that every pair of
non-adjacent edges in any Isaacs snark except $J_3$ is
essential.

\begin{proposition}
Every negator constructed from the snark $J_k$ is feasible.
\end{proposition}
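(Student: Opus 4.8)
The plan is to verify the two defining properties of a feasible negator (Definition~\ref{def:goodnegator}) for $N = \Neg(J_k; u, v)$, where $k \ge 3$ is odd, using the tools already available: the bicriticality of $J_k$ for $k \ge 5$ (and a separate argument for $J_3$, which is not bicritical but whose associated negator may still be feasible), together with the fact cited from \cite{Chladny-Skoviera-Factorisations} that every pair of non-adjacent edges of any Isaacs snark $J_k$ with $k \ne 3$ is essential. First I would handle Property~(i) via Lemma~\ref{lemma:essential-feasible}: given $x \in \{u,v\}$ and $y \in V(N)$, for each of the three edges $e$ incident with $x$ in $J_k$ I need an edge $f$ incident with $y$ such that $\{e,f\}$ is essential. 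Since $J_k$ has girth $5$ (for $k \ge 5$) and the three edges at $x$ are pairwise adjacent, $e$ and any edge $f$ at $y$ are non-adjacent provided $x$ and $y$ are non-adjacent and have no common neighbour; when $x,y$ are close, one checks the few remaining cases directly using the explicit structure of $Y$. Invoking the essentiality result of \cite{Chladny-Skoviera-Factorisations} then gives the desired pair, and Lemma~\ref{lemma:essential-feasible} yields Property~(i). For $J_3$ one argues ad hoc: $J_3$ has girth $3$, so Property~(i) must be checked against the explicit $9$-vertex graph and its $3$-edge-colourings.

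Next I would establish Property~(ii), which is the part requiring genuine work. For each vertex $y \in V(N)$ I must exhibit colourings $\phi, \psi$ of the $8$-pole $N - y$ with $\phi(i_1,i_2,o_1,o_2,r) = (a,a,b,b,a)$ and $\psi(i_1,i_2,o_1,o_2,r) = (a,a,b,b,b)$ for some $a \ne b$. The natural approach is to exploit the flower-snark structure: $N = \Neg(J_k; u,v)$ is obtained from $J_k$ by deleting a path $uwv$, and $J_k$ is assembled from copies of the Isaacs $(3,3)$-pole $Y$ joined cyclically. The path $uwv$ lies inside one (or two adjacent) copies of $Y$; removing it and then also removing $y$ leaves a multipole that is a chain of $Y$-poles $Y_{k-1}$ (or a small perturbation thereof) closed up with the mutilated $Y$. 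Using the known colouring behaviour of chains of Isaacs poles --- in particular $\col(Y_{2m}) = \col(Y_2)$ from Section~\ref{sec:y} and the rich colouring set of $Y_{2m+1}$ --- I would propagate an appropriate flow around the cycle. The two required colourings differ only in the colour on the residual semiedge $r$ versus the colour class on the output connector; since $r$ corresponds to the edge $uw$ (or $vw$) of the deleted path, choosing the two colourings amounts to choosing, at the single vertex $w$ (which is still present as part of the body, or whose incident edges survive as semiedges), which of the two permitted colour patterns around it we realise. Concretely I would fix a Tait colouring of $J_k - y$ first (this exists since $J_k$ is bicritical for $k \ge 5$, or by hand for $k = 3$), read off the colours on the edges $uw$, $wv$, and the four edges at $u$ and $v$ not on the path, and then observe that the freedom of recolouring one $Y$-block lets us toggle $\phi(r)$ between $a$ and $b$ while keeping the connector pattern $(a,a,b,b)$ --- this is where the "$2$-arc-transitive / colour-flexible" nature of $Y$ is used.

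The main obstacle I anticipate is Property~(ii): it is not automatic from bicriticality and genuinely needs the internal structure of the flower snark to be unpacked, especially to guarantee that the matching-colour pattern $(a,a,b,b)$ on the two $2$-connectors can be achieved \emph{simultaneously} with each of the two prescribed values on $r$, for \emph{every} choice of the deleted vertex $y$. The case analysis on the position of $y$ relative to the deleted path (inside the same $Y$-block, an adjacent block, or a far block) is the part where care is required; the far-block cases should reduce cleanly to the identity $\col(Y_{2m}) = \col(Y_2)$, while the near-block cases require direct inspection of $Y$. The small snark $J_3$ must be singled out throughout, since the girth and bicriticality hypotheses fail there, but its negator is small enough to treat by an explicit finite check. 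Once both properties are verified for all $y$, Definition~\ref{def:goodnegator} is satisfied and the negator is feasible, which proves the proposition.
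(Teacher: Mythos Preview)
Your treatment of Property~(i) is essentially the paper's: invoke the result from \cite{Chladny-Skoviera-Factorisations} that every non-adjacent edge pair in $J_k$ ($k\ge 5$) is essential and feed it into Lemma~\ref{lemma:essential-feasible}. Two side remarks. First, the $J_3$ discussion is unnecessary: Definition~\ref{def:goodnegator} is stated only for snarks of girth at least $5$, and $J_3$ has girth $3$, so it is simply outside the scope of the proposition. Second, the adjacency wrinkle you raise (when $x$ and $y$ are neighbours some pairs $\{e,f\}$ are adjacent) is real but easily handled, since you only need \emph{some} $f$ at $y$ for each $e$ at $x$; the paper does not dwell on this either.

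For Property~(ii) your approach diverges from the paper's and, as written, has a genuine gap. You propose to start from a Tait colouring of $J_k - y$ and then read off and adjust the colours on $i_1,i_2,o_1,o_2,r$. But in any proper colouring of $J_k - y$ the vertex $u$ is still present and cubic, so the three edges at $u$ receive three distinct colours; in particular $\phi(i_1)\neq\phi(i_2)$. The target patterns $(a,a,b,b,a)$ and $(a,a,b,b,b)$ require $\phi(i_1)=\phi(i_2)$, which is impossible under your starting hypothesis. What is actually needed is a colouring of the $8$-pole $N - y = J_k - \{u,w,v,y\}$, where the constraints at $u,v,w$ have been removed; bicriticality of $J_k$ does not directly hand you such a colouring with the prescribed semiedge pattern, and your ``toggle one $Y$-block'' idea is not substantiated for every position of $y$.

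The paper sidesteps this difficulty entirely by induction on $k$: it verifies Property~(ii) for $J_5$, $J_7$, $J_9$ by computer, and for $k\ge 11$ observes that after deleting the path $uwv$ and the vertex $y$ at most four copies of the Isaacs $(3,3)$-pole $Y$ are affected, so at least four consecutive intact copies remain. Replacing that $Y_4$ by the colour-equivalent $Y_2$ produces exactly the corresponding $8$-pole inside $J_{k-2}$, and the induction hypothesis supplies the required colourings, which transfer back via the equivalence $\col(Y_4)=\col(Y_2)$. This reduction-to-a-smaller-flower-snark is the key idea you are missing; it replaces the delicate case analysis you anticipate with a single clean substitution step, at the cost of a computer-checked base.
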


\begin{proof}
Since every pair of edges of $J_k$, with $k\ge 5$, is essential, it follows from Lemma~\ref{lemma:essential-feasible} that each negator of $J_k$ has 
Property~(i) of a feasible negator.
The fact that these negators possess also Property~(ii) will be
established by induction on $k$. 

For the induction basis we have checked that all 
negators derived from the snarks $J_5$,
$J_7$, and $J_9$ satisfy Property~(ii); we did it with the help of a 
computer.

Now, consider the Isaacs
snark $J_k$ for an odd $k \ge 11$. Remove an arbitrary path
$uwv$ from $J_k$ to produce a negator $N = \Neg(J_k; u, v)$ whose dangling edges are denoted by $i_1$, $i_2$, $o_1$, $o_2$ and $r$ in the usual
way. Remove from $N$ an arbitrary vertex $x$ and denote the resulting 
$8$-pole by $M$.
The path $uwv$ intersects at most three consecutive copies
of the Isaacs $(3,3)$-pole $Y$ and the removal of the vertex
$x$ corrupts at most one other copy of $Y$. Consequently, there
are at least four consecutive copies of $Y$ in $M$ that remain intact; let
$Y_4$ denote the $(3,3)$-pole which they induce. We replace them
with a $(3,3)$-pole $Y_2$ consisting of two copies of $Y$ and
denote the resulting multipole $M'$. Clearly, $M'$ is
isomorphic to the multipole obtained from $J_{k-2}$ by removal
of a certain path of length two and a certain additional
vertex. By the induction hypothesis, there exists a colouring
of $M'$ in which the dangling edges corresponding to $i_1$,
$i_2$, $o_1$, $o_2$, $r$ have colours exactly as desired for
either $\varphi$ or $\psi$ from Property~(ii). Since
the multipoles $Y_4$ and $Y_2$ are colour-equivalent
\cite{Nedela}, the desired colours can also be assigned to the
semiedges $i_1$, $i_2$, $o_1$, $o_2$, $r$ of $M$. Hence, any
negator constructed from the Isaacs snark $J_k$ satisfies Property~(ii).
Consequently, every negator constructed from the snark $J_k$ is feasible, as claimed.
\end{proof}

Using feasible negators from the Isaacs snarks $J_k$, with $k \ge
5$, we can construct an infinite class of bicritical snarks.
All such snarks are cyclically $5$-connected. Since $5$-cycles can only occur in negators constructed from $J_5$, avoiding such negators will lead to bicritical snarks of girth $6$.

\section{Non-removable edges that are not essential}
\label{sec:s36}

As anticipated in the previous section, we wish to
investigate snarks containing negators that violate
Property~(i) of Definition~\ref{def:goodnegator}. Property~(i) of a
feasible negator is related to the concept of an essential pair of
edges, which in turn plays a crucial role in factorisation of 
a critical snark into a dot product of two smaller snarks (see
Theorem~\ref{thm:A}).

\begin{figure}[h!]
	\centering
	\includegraphics[]{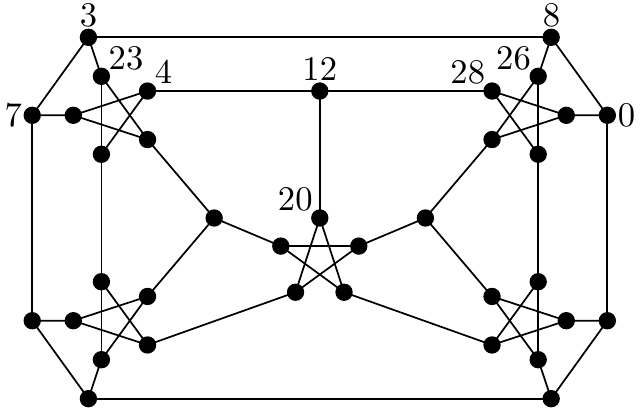}
	\caption{The snark $G_{36}$.}
	\label{fig:36-no2}
\end{figure}

Among the $5$-simple snarks of order up 36 there are exactly six snarks supporting a negator violating Property~(i), all of them belonging to the class NNN. They consist
of two Petersen negators and one negator constructed from a reducible snark
of order $24$ (see Figure \ref{fig:24}). One of these snarks,
denoted by $G_{36}$, is illustrated in Figure \ref{fig:36-no2}.
If we remove the pair of vertices $12$ and $3$ (or $12$ and $8$), we
get a $6$-pole $M$ such that for each colouring of $M$, the
dangling edges incident with the vertices $20$ and $4$ (or $20$ and
$28$) have different colours; this property has been verified
by exhaustive computer search. If we construct a negator
from the snark $G_{36}$ by removing a path of length $2$
starting from the vertex $12$, the result violates Property~(i).

The snark $G_{36}$ has another interesting property. If we take
the $6$-pole $G_{36} - \{3, 12\}$ and perform the junction
of the semiedges $(4)$ and $(20)$, we get a $4$-pole which is
uncolourable (because the two joined semiedges have different
colours in any possible $3$-edge-colouring). Furthermore, we
can add one vertex incident with semiedges $(8)$ and $(23)$;
the resulting uncolourable multipole is isomorphic to
$G_{36} - \{12\text{-}28, 3\text{-}7\}$ with the vertex $12$ suppresed. This implies
that the pair of edges $\{12\text{-}28, 3\text{-}7\}$ is not
essential in $G_{36}$. On the other hand, with the help of a
computer we have found a colouring which proves that this pair
of edges is non-removable. This solves Problem 5.7 proposed by
Chladný and Škoviera in \cite{Chladny-Skoviera-Factorisations}
by showing that there exists a pair of non-removable edges in
an irreducible snark which is not essential. The same holds for
the pairs of edges $\{12\text{-}28, 3\text{-}23\}$,
$\{12\text{-}4, 8\text{-}0\}$ and $\{12\text{-}4,
8\text{-}2\}$.

\section{Beyond order 36}

We conclude our paper by analysing the currently known $5$-simple snarks
of order $38$. We also explain how our results can be used to generate some of such snarks of higher orders.  

At present, there are $19,775,768$ known nontrivial snarks of
order 38 (see \cite{BCGM}, section Snarks). 
Of them, $56$ are $5$-simple snarks, all being 
bicritical. In the latter set we have identified the $5$-pole
$P_{NN}$ in $39$ snarks, the $5$-pole $P_{NT}$ in $22$ snarks,
and the $5$-pole $P_{TT}$ in $7$ snarks, while there are $10$
snarks containing both $P_{NN}$ and $P_{NT}$ and $6$ snarks
containing both $P_{NN}$ and $P_{TT}$. 
In three snarks we have found the panchromatic $(2,2,2;1)$-pole $M_{3NT} =
\NNNT(\dyad,\dyad, \dyad, \triad)$ (see Section~\ref{sec:panchromatic}). All three snarks arise from the 
Petersen graph by substitution of $M_{3NT}$ for a path of length two and one edge.
The only remaining snark gives rise to an infinite family, which we are just about to describe.

\subsection*{Class 38-A}

Take four negators $N_i(I_i,O_i;r_i)$, for $i \in
\{1,2,3,4\}$, and one proper $(2,3)$-pole $T(B,C)$, connect them
as shown in Figure~\ref{fig:38-A}, and denote the resulting
graph by $G$. 

\begin{figure}[h]
	\centering
	\includegraphics[]{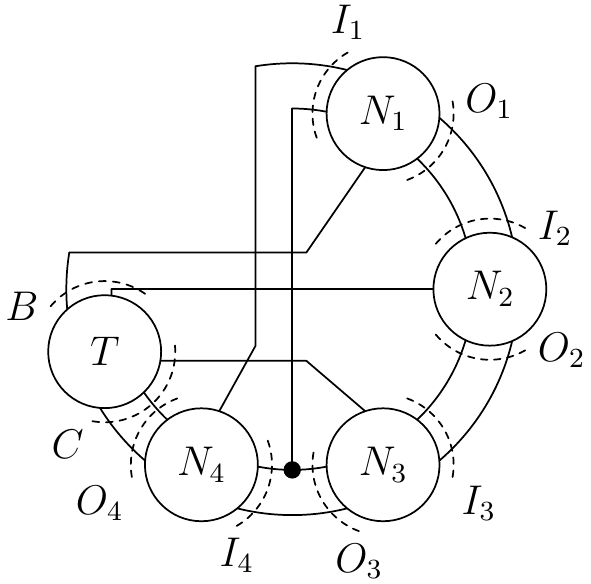}
	\caption{The structure of Class 38-A snarks}
	\label{fig:38-A}
\end{figure}

We show that $G$ is a snark.
The connector $B$ is proper, so the edges $r_1$ and $r_2$
contained in $B$ have different colours, say $\phi(r_1)
= a$ and $\phi(r_2) = b$,  where $a \ne b$. It follows that
$\phisum(O_1) = \phisum(I_2) = 0$, for otherwise we would have $a=b$. Moreover, $\phisum(I_1) =a$ and $\phisum(O_2) = \phisum(I_3) = b$. Since $N_3$ is a negator,
we infer that $\phi(r_3) = b$ and
$\phisum(O_3) = 0$. Knowing the colour of three semiedges of
the proper $(2,3)$-pole $T$, 
we can use the Kirchhoff law to determine the flow through the remaining two semiedges, which coincides with the flow through $O_4$. We have 
$\phisum(O_4) = \phi(r_1) + \phi(r_2)+ \phi(r_3) = a + b + b = a \ne 0$,  so
$\phisum(I_4) = 0$. 
Therefore $\phi(r_4)=a=\phisum(I_1)$, which 
forces the flow value on the edge of $I_1$ different from $r_4$ to be zero. This contradiction proves that $G$ is a snark.

The one remaining $5$-simple snark of order $38$ consists of four dyads and one triad. One can clearly see that permuting the semiedges in the connector $C$ of the triad leads to other $5$-simple snarks of order $38$. Moreover, permutations of the semiedges in the connectors of dyads may also give rise to additional nonisomporphic $5$-simple snarks of order~$38$.

\subsection*{Class 42-A}

This family illustrates the fact mentioned in Section~\ref{sec:methods} that, unlike quasitriads, both triple pentagons and tricells may occur in $5$-simple snarks. From among the six $5$-cycle clusters on up to 10 vertices shown in Figure~\ref{fig:42-a} quasitriad is thus the only one that cannot occur in $5$-simple snarks.

Let $R$ be a $(2,2,2)$-pole obtained from a snark $G$ by severing three pairwise nonadjacent edges; if $G$ is the Petersen graph, then $R$ is either the triple pentagon or the tricell, depending on the choice of the three edges.
Let $N_1$ and $N_2$ be two negators, and $T_1$ and $T_2$ two proper $(2,3)$-poles. Construct a $10$-pole $M$ as depicted in Figure~\ref{fig:42-a}. The $10$-pole $M$ is uncolourable: for each $i \in \{1, 2\}$ the flow between $N_i$ and $R$ has to be zero, which is impossible for $R$. Thus, to obtain a snark it is sufficient to perform junctions of the $10$ semiedges of $M$. If we take all the multipoles from the Petersen graph, we obtain several $5$-simple snarks of order $42$. Somewhat surprisingly, it is also possible to perform the junctions in such a way that the outcome will be cyclically $5$-edge connected, but not critical.

\begin{figure}[h]
	\centering
	\includegraphics[]{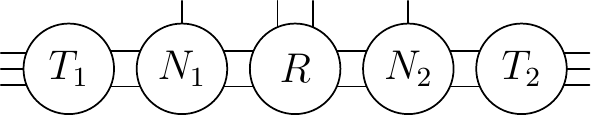}
	\caption{A new infinite class of snarks containing snarks with triplepentagons and tricells.}
	\label{fig:42-a}
\end{figure}

\subsection*{Constructions and analysis of larger critical snarks}

Infinite families described in our paper can be used to generate new cyclically $5$-connected critical snarks of orders greater than $36$.  We briefly sketch the ideas that can be used to construct reasonable amounts of such snarks.

To construct a member of an infinite family we choose 
snarks for the construction of the desired multipoles. Cyclically $5$-connected critical snarks are a sensible choice, nevertheless, we have seen examples where the multipoles were constructed from non-critical snarks or from snarks with smaller cyclic connectivity. These snarks might produce new members of the families, however, it requires more computational time.
One way or another, we still need to check if the resultant snarks are cyclically $5$-connected and critical, which is not guaranteed just by the membership in any of our families.
Another possibility to construct new families of snarks that can contain cyclically $5$-connected critical ones is to use ideas which occurred in our proofs of uncolourability.

Currently we are not able to construct any sort of a complete list of snarks of order 38 (or more). If a complete list of all the cyclically $5$-connected critical snarks of a given order $n \ge 38$  eventually becomes available, it is possible to employ our methods to analyse them. We expect that, at least for reasonably small $n$, a large number of these snarks will fall into one of already described families, and the remaining snarks give a rise to new infinite families of snarks. With increasing $n$, however, the approach based on the analysis of $5$-cycle clusters will become less efficient. For instance, we would not be able to identify important multipoles, such as negators or proper $(2,3)$-poles, arising from Isaacs flower snarks. It might be therefore useful to extend the analysis of the computer generated snarks by including the search for certain subgraphs of flower snarks, for instance the iterated Isaacs $(3,3)$-poles $Y_k$ (which can be regarded as clusters of $6$-cycles). In general, however, there is no known method for the analysis of cyclically $6$-connected snarks. In spite of the efforts of Karab\'a\v s et al. \cite{Karabas}, a decomposition theorem for cyclically $6$-connected snarks similar to decomposition theorems for lower connectivities, such as those proved in \cite{Cameron, Nedela}, is not known. Furthermore, small cyclically $6$-connected snarks different from the flower snarks seem to be very difficult to find: the smallest known example was constructed in 1996 by Kochol in \cite{Kochol_118} and has 118 vertices.

\bigskip
\noindent\textbf{Acknowledgements.} The authors acknowledge
partial support from the research grants APVV-19-0308, VEGA
1/0813/18 and VEGA 1/0876/16. 

\bibliographystyle{siam}

\newcommand{\noopsort}[1]{} \newcommand{\printfirst}[2]{#1}
\newcommand{\singleletter}[1]{#1} \newcommand{\switchargs}[2]{#2#1}

\end{document}